\def\@copyrightspace{\relax}
\newcommand\blfootnote[1]{%
	\begingroup 
	\renewcommand\thefootnote{}\footnote{#1}%
	\addtocounter{footnote}{-1}%
	\endgroup 
}
\begin{document}
	
%
%
%
	\makeatletter
	\newcommand{\removelatexerror}{\let\@latex@error\@gobble}
	\makeatother
	
	\makeatletter
	\setlength{\@fptop}{0pt}
	\setlength{\@fpbot}{0pt plus 1fil}
	\makeatother
	
	\newtheorem{theorem}{Theorem}[section]
	\newtheorem{definition}{Definition}[section]
	\newtheorem{lemma}{Lemma}[section]
	\newtheorem{corollary}{Corollary}[section]
	\newtheorem{proposition}{Proposition}[section]
	\newtheorem{example}{Example}[section]
	\newtheorem{remark}{Remark}[section]
	\newtheorem{assumption}{Assumption}[section]

\title{De-anonymization of Social Networks with Communities: When Quantifications Meet Algorithms\vspace{-8mm}}
\numberofauthors{5}
	\author{
		 \alignauthor
		 Luoyi Fu$^*$\\
		 \affaddr{Shanghai Jiao Tong University}\\
		 \email{yiluofu@sjtu.edu.cn} 
	\alignauthor
	Xinzhe Fu$^*$\\
	       \affaddr{Shanghai Jiao Tong University}\\
	       \email{fxz0114@sjtu.edu.cn}   
	\alignauthor
	Zhongzhao Hu\\
	\affaddr{Shanghai Jiao Tong University}\\
	\email{hzz5611577@sjtu.edu.cn}
	\and
	\alignauthor
	Zhiying Xu\\
	\affaddr{Shanghai Jiao Tong University}\\
	\email{xuzhiying@sjtu.edu.cn}	
		\alignauthor
	Xinbing Wang\\
	\affaddr{Shanghai Jiao Tong University}\\
	\email{xwang8@sjtu.edu.cn}
	    }

\maketitle

\begin{abstract}
A crucial privacy-driven issue nowadays is re-identifying ano-nymized social networks by mapping them to correlated cross-domain auxiliary networks. Prior works are typically based on modeling social networks as random graphs representing users and their relations, and subsequently quantify the quality of mappings through cost functions that are proposed without sufficient rationale. Also, it remains unknown how to algorithmically meet the demand of such quantifications, i.e., to find the minimizer of the cost functions.

We address those concerns in a more realistic social network modeling parameterized by community structures that can be leveraged as side information for de-anonymization. By Maximum A Posteriori (MAP) estimation, our first contribution is new and well justified cost functions, which, when minimized, enjoy superiority to previous ones in finding the correct mapping with the highest probability. The feasibility of the cost functions is then for the first time algorithmically characterized. While proving the general multiplicative inapproximability, we are able to propose two algorithms, which, respectively, enjoy an $\epsilon$-additive approximation and a conditional optimality in carrying out successful user re-identification. Our theoretical findings are empirically validated, with a notable dataset extracted from rare true cross-domain networks that reproduce genuine social network de-anonymization. Both theoretical and empirical observations also manifest the importance of community information in enhancing privacy inferencing.

\end{abstract}
\blfootnote{*: the first two authors contributed equally to this paper.}
\vspace{-2mm}
\section{Introduction}
The proliferation of social networks has led to generation of massive network data. Although users can be anonymized in the released data through removing personal identifiers \cite{cite:SNAP,cite:targeted-advertising}, with their underlying relations preserved, they may still be re-identified by adversaries from correlated cross domain auxiliary networks where user identities are known \cite{cite:social-privacy,cite:de-anonymization, cite:xiaoyang}.
%

Such idea of unveiling hidden users by leveraging their information collected from other domains, or alternatively called social network de-anonymization \cite{cite:de-anonymization}, is a fundamental privacy issue that has received considerable attention. Inspired by Pedarsani and Grossglauser \cite{cite:seedless}, a large body of existing de-anonymization work shares a basic common paradigm: with an underlying network representing social relations between users, both the \emph{published anonymized network} and the \emph{auxiliary un-anonymized network} are generated from that network based on graph sampling that captures their correlation, as observed in many real cross-domain networks. The equivalent node sets they share are corresponded by an unknown correct mapping. With the availability of only structural information, adversaries attempt to re-identify users by establishing a mapping between networks. To quantify such mapping qualities, several global cost functions have been proposed \cite{cite:seedless,cite:allerton,cite:improved-bound} in favor of exploring the conditions under which the correct matching can be unraveled from the mapping that minimizes the cost function.

Despite those dedications to de-anonymization, it is still not entirely understood how the privacy of anonymized social network can be guaranteed given that adversaries have no access to side information other than network structure, primarily for three reasons. First, the widely adopted Erd\H{o}s-R\'{e}nyi graph or Chung-Lu graph \cite{cite:ER-Graph,cite:ChungLu-Graph} for the modeling of underlying social networks \cite{cite:seedless,cite:allerton,cite:improved-bound}, though facilitating analysis, falls short of well capturing the clustering effects that are prevalent in realistic social networks; Second, the cost functions \cite{cite:seedless,cite:allerton} in measuring mapping qualities not only lack sufficient rationale in analytical aspects, but most importantly, it remains unclear whether the feasibility of minimizing such cost functions could be theoretically characterized from an algorithmic aspect \cite{cite:arxiv-community,cite:shouling1}; Last but not least, due to the rarity of true cross-domain datasets, current empirical observations of social network de-anonymization are either based on synthetic data, or real social networks with artificial sampling in construction of correlated published and auxiliary networks, and consequently do not well represent the genuine practical de-anonymization \cite{cite:shouling1,cite:shouling2}. While a thorough understanding of this issue may better inform us on user privacy protection, this paper is particularly concerned about the following question: \textbf{Is it possible to quantify  de-anonymization in a more realistic modeling, and meanwhile algorithmically meet the demand  brought by such quantifications?}

The answer to this question entails appropriate modeling of social networks, well-designed cost functions as metrics of mappings and elaborated algorithms of finding the mapping that is optimal according to the metric, along with data collection that can empirically validate the related claims. To present a more reasonable model of underlying social network that incorporates the clustering effect, we adopt the stochastic block model \cite{cite:blockmodel} where nodes are partitioned into disjoint sets representing different communities \cite{cite:model}. Based on that, we investigate the problem following the paradigm, as noted earlier, where the published and auxiliary networks serve as two sampled subnetworks. Both of them inherit from the underlying network the community structures that can be leveraged as side structural information for adversaries. Similarly, we assume that other than network structure, there is no additional availability of side information to adversaries as it will only further benefit them. Varying the amount of availability of community information, here we classify our de-anonymization problem into two categories, i.e., bilateral case, and its counterpart, unilateral case, literally meaning that adversaries have access to community structure of both or only one network. A more formal definition of the two cases information is deferred to Section \ref{sec:model}. Subsequently, we summarize, built on the model, our results on metrics, algorithms and empirical validations into three aspects answering the question raised.

\textbf{Analytical aspect:} For both cases, our first contribution is to derive the cost functions as metrics quantifying the structural
mismappings between networks based on Maximum A
Posteriori (MAP) estimation. The virtue of MAP estimation ensures the superiority of our metrics to the previous ones in the sense that the minimizers of our cost functions equal to the underlying correct mappings with the highest probability. Also, as we will rigorously prove later, under fairly mild conditions on network density and the closeness between communities, through minimizing the cost function we can perfectly recover the correct mapping.

\textbf{Algorithmic aspect}: Following the derived quantifications, our next significant contribution is to take a first algorithmic look into the demand imposed by the quantifications, i.e., the optimization problems of minimizing such cost functions. We find that opposed to the simplicity of the cost functions in form, the induced optimization problems are computationally intractable and highly inapproximable. Therefore, we circumvent pursing exact or multiplicative approximation algorithms, but instead seek for algorithms with other types of guarantees. However, the issue is still made particularly challenging by the intricate tension among cost function, mappings, network topology as well as the super-exponentially large number of candidate mappings. Our main idea to resolve the tension is converting the problems into equivalent formulations that enable some relaxations, through bounding the influence of which, we demonstrate that the proposed algorithms have their respective performance guarantees. Specifically, one algorithm enjoys an $\epsilon$-additive approximation guarantee in both cases, while the other yields optimal solutions for bilateral de-anonymization when the two sub-networks are highly structurally similar but fails to provide such guarantee for the unilateral case due to its lack of sufficient community information. Further comparisons of algorithmic results between the two cases also manifest the importance of community as side information in privacy inferencing.

\textbf{Experimental aspect:} Finally, we empirically verified all our theoretical findings under both synthetic and real datasets. We remark that one dataset, as never appeared in this context previously, is extracted from true cross-domain co-authorship networks \cite{cite:MAG} serving as published and auxiliary networks. As a result, it leads to no prior work, other than ours, that reproduces genuine scenarios of social network de-anonymization without artificial modeling assumptions. The experimental results demonstrate the effectiveness of our algorithms as they correctly re-identify more than 40\% of users even in the co-authorship networks that possess the largest deviation from our assumptions. Also, it empirically consolidates
our argument that community information can increase
the de-anonymization capability.

The rest of this paper is organized as follows: In Section \ref{sec:relatedworks}, we briefly survey the related works. In Section \ref{sec:model}, we introduce our model for de-anonymization problem of social networks with community structure and characterize the cases of bilateral and unilateral information. In Sections \ref{sec:bilateral1} and \ref{sec:bilateral2}, we present our results on analytical and algorithmic aspects of bilateral de-anonymization. Following the path of bilateral case, we introduce our results on unilateral de-anonymization and make comparisons between the two cases in Section \ref{sec:unilateral}. We present our experiments in Section \ref{sec:experiments} and conclude the paper in Section \ref{sec:conclusion}.
\vspace{-2mm}
\section{Related Works}\label{sec:relatedworks}
The issue of social network de-anonymization, which has received considerable attention, was pioneeringly investigated by Narayanan and Shimatikov \cite{cite:de-anonymization}, who proposed the idea that users in anonymized networks can be re-identified through utilizing auxiliary networks with the same set of users from other domains. In that regard, they designed practical de-anonymization schemes that rely on side information in the form of a seed
set of ``pre-mapped" node pairs, i.e., a subset of nodes that
are identified priorly across the two networks. Then the
mapping is generated incrementally, starting from the seeds
and percolating to the whole node sets.

Following this framework, Pedarsani and Grossglauser developed a succinct modeling that is amiable to theoretical analysis and serves as the paradigm for a family of subsequent related works on social network de-anonymization \cite{cite:seedless}. They assumed that the published and auxiliary networks are two graphs that share the same node sets with the edge sets resulted from independent samples of an underlying social network. Additionally, they studied a more challenging but practical version of de-anonymization that are free of prior seed information.

The two seminal works triggered a flurry of subsequent attempts that all fall into the categories of either seeded or seedless de-anonymization, tuning the model of the underlying social networks. Specifically, in terms of seeded de-anonymization, current literature focuses on designing efficient de-anonymization algorithms that are executed by percolating the mapping to the whole node sets starting from the seed set. Yartseva et al. \cite{cite:percolation-matching}, Kazemi et al. \cite{cite:vldb1}, and later Fabiana et al. \cite{cite:garetto1} proposed percolation graph matching algorithms for de-anonymization on Erd\H{o}s-R\'{e}nyi graph and scale-free network, respectively. Assuming that the underlying social network is generated following the preferential attachment model, Korula and Lattenzi \cite{cite:vldb} designed a correspnding efficient de-anonymizaiton algorithm. Chiasserini et al. \cite{cite:garetto2} characterized the impact that clustering imposes on the performance of seeded de-anonymization. Under the classification of both perfect and imperfect seeded de-anonymization, Ji et al. \cite{cite:shouling2} analyzed the two cases both qualitatively and empirically.

While this type of seed-based de-anonymizing methods
works well in analysis, it is rather difficult to acquire pre-
identified user pairs across different networks as many real
situations limit the access to user profiles. Therefore,
more often we are faced with adversaries without seeds as
side information, which is also the case considered in the present work. A natural alternative, under such circumstance, is to define a global cost function of mappings and unravel the correct mapping through the minimizer of the cost function. For instance, Pedarsani and Grossglauer \cite{cite:seedless} studied the seedless de-anonymization problem where the underlying social network is an Erd\H{o}s-R\'{e}nyi graph, the results of which were further improved by Cullina and Kiyavash \cite{cite:improved-bound}. Ji et al. analyzed perfect and partial de-anonymization on Chung-Lu graph \cite{cite:model}. Kazemi et al. \cite{cite:allerton} focused on the case of de-anonymization problem on Erd\H{o}s-R\'{e}nyi graph where the published network and auxiliary network exhibit partial overlapping. A very recent work that shares the highest correlation with ours, belongs to that of Onaran et al. \cite{cite:arxiv-community}, who study the situation where there are only two communities in networks, a special case that can be embodied in our bilateral de-anonymization case.

\begin{figure}
	\centering
	\includegraphics[width=1\linewidth]{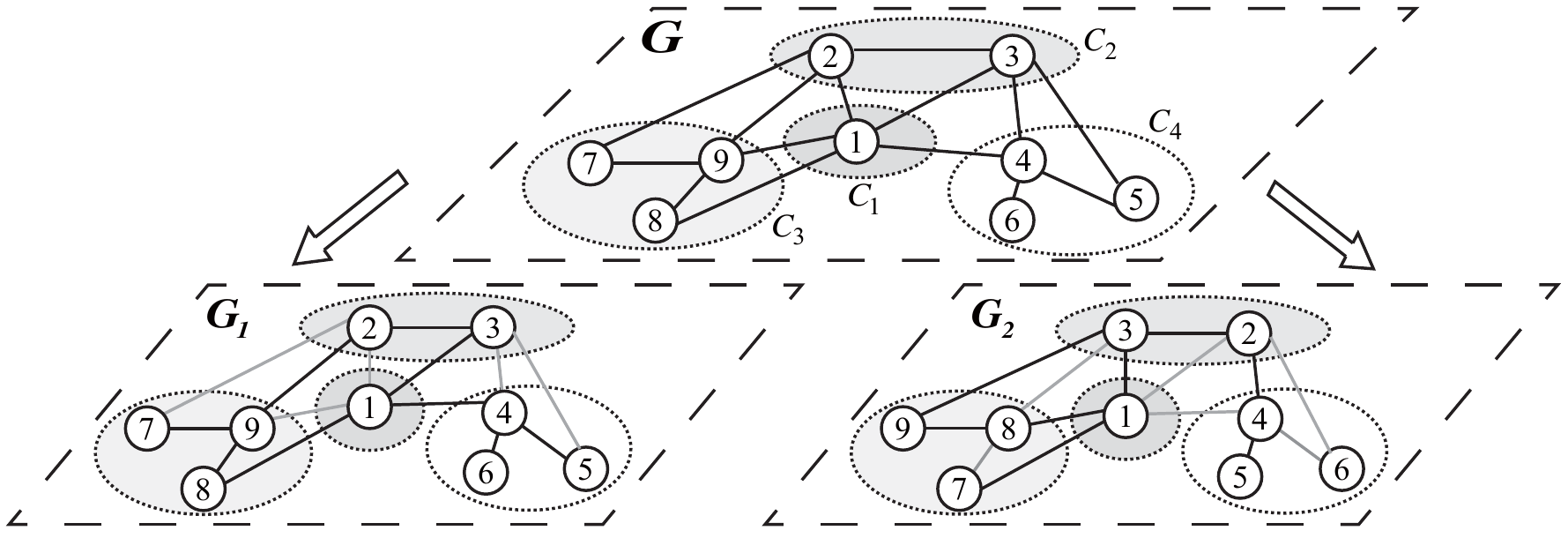}
	\vspace{-6mm}
	\caption{
		\small\bf An example of underlying social network ($G$), the published network ($G_1$) and the auxiliary network ($G_2$) sampled from $G$. $C_1,C_2,C_3,C_4$ represent the four communities in the networks. 
		The correct mapping $\pi_0=\{(1,1),(2,3),(3,2),(4,4),(5,6),(6,5),(7,9),(8,7),(9,8)\}$.
	}
	\label{fig:model}
	\vspace{-6mm}
\end{figure}
\vspace{-1mm}
\section{Models and Definitions}\label{sec:model}
\vspace{-0.5mm}
In this section, we introduce the models and definitions of the social network de-anonymization problem. We first present the network models and then formally define the problem of social network de-anonymization.
\vspace{-1.4mm} 
\subsection{Network Models}\vspace{-1mm}
The network models consist of the underlying social networks $G$, the published network $G_1$ and the auxiliary network $G_2$ as incomplete observations of $G$. In reality, the edges of $G$, for example, might represent the true relationships between a set of people, while $G_1$ and $G_2$ characterize the observable interactions between these people such as communication records in cell phones or ``follow" relationships in online social networks.\vspace{-1.5mm}
\subsubsection{Underlying Social Network}\vspace{-0.7mm}
To elaborate this, let $G=(V,E,\mathbf{M})$\footnote{For a matrix $\mathbf{M}$, we use $\textbf{M}_{ij}$ to denote the element on its $i$th row and $j$th column and $\mathbf{M}_i$ to denote its $i$th row vector.} be the graph representing the underlying social relationships between network nodes, where $V$ is the set of nodes, $E$ is the set of edges and $\mathbf{M}$\footnote{ $\mathbf{M}_{ij}=1$ if $(i,j)\in E$ and $\mathbf{M}_{ij}=0$ otherwise.} denotes the adjacency matrix of $G$. We treat $G$ as an undirected graph and define the number of nodes as $|V|=n$. We assume that $G$ is generated according to the \textit{stochastic block model} \cite{cite:blockmodel}. Specifically, the model is interpreted as follows: the set of nodes in $V$ are partitioned into $\kappa$ disjoint subsets denoted as $C_1,C_2,\ldots,C_{\kappa}$ indicating their communities with $|C_i|=n_i$ and $\sum_in_i=n$. The edges between nodes in different communities are drawn independently at random with certain probabilities. Let $c:V\mapsto \{1\ldots\kappa\}$ be the community assignment function that assigns to each node the label of the community it belongs to, we have
\begin{equation*}
Pr\{(u,v)\in E\}=Pr\{\mathbf{M}_{uv}=1\}=p_{c{(u)}c{(v)}},
\end{equation*}
where affinity values $\{p\}_{ab}$ ($1\le a,b\le\kappa$) are pre-defined parameters that indicate the edge existence probabilities and capture the closeness between communities.
It has been shown that this model well captures the community structures in social networks and can generate graphs with various degree distributions by tuning the values of $\{p\}$ \cite{cite:community}. 

\vspace{-1.6mm}
\subsubsection{Published Network and Auxiliary Network}
\vspace{-1mm}
We define $G_1(V_1,E_1,\mathbf{A})$ as the graph representing the published network and $G_2(V_2,E_2,\mathbf{B})$ as the graph representing the auxiliary network with $E_1,E_2$ denoting their edge sets and $\mathbf{A,B}$ denoting their adjacency matrices respectively. In correspondence to real situations, $G_1$ represents the publicly available anonymized network where user identities are removed for privacy concern.  In contrast, $G_2$ represents the auxiliary cross-domain un-anonymized network where those users' identities are known, and can be collected by the adversary to re-identify the users in $G_1$. Following previous literature \cite{cite:seedless,cite:shouling2}, we assume the node sets in $G_1$ and $G_2$ are equivalent and that the published network and the auxiliary network are independent samples obtained from the underlying social network $G$ with sampling probabilities $s_1$ and $s_2$, respectively. Specifically, for $i=1,2$, we have  
	\[Pr\{(u,v)\in E_i\}=\left\{\begin{array}{ll}
	s_i &\mbox{ if }(u,v)\in E,\\
	0& \mbox{ otherwise}.\\
	\end{array}\right.\]

Technically, $G$, $G_1$ and $G_2$ are defined as the random graph variables for the networks. However, for ease of representation, we will also use them to denote the realizations of the random graph variables without loss of clearance. In the sequel, we will also use $\bm{\theta}$ as a shorthand of the set of parameters including affinity values $\{p\}$ and sampling probabilities $s_1,s_2$ in the models of $G,G_1,G_2$, 
\subsection{Social Network De-anonymization}
Given the published network $G_1$ and the auxiliary network $G_2$, the problem of social network de-anonymization aims to find a bijective mapping $\pi:V_1\mapsto V_2$ that reveals the correct correspondence of the nodes in the two networks. Equivalently, a mapping $\pi$\footnote{In this paper, all  the mappings are assumed to be bijective. Hence, we simply refer to them as mappings for brevity.} can be represented as a permutation matrix $\mathbf{\Pi}$ where $\mathbf{\Pi}_{ij}=1$ if $\pi(i)=j$ and $\mathbf{\Pi}_{ij}=0$ otherwise. We naturally extend the definition of mapping of node set to the mapping of edge set, as $\pi(e=(i,j))=(\pi(i),\pi(j))$. 

We define $\pi_0$ (or equivalently $\mathbf{\Pi}_0$) to be the correct mapping between the node sets of $G_1$ and $G_2$. Note that we do not have access to $\pi_0$ or the generator $G$ of $G_1$ and $G_2$. In other words, although the node sets of $G_1$ and $G_2$ are equivalent, the labeling of the nodes does not reflect their underlying correspondence. We interpret this in the way that the published network $G_1$ has the same node labeling as the underlying network $G$ while the node labeling of $G_2$ is permuted. Following this interpretation, the community assignment function of $G_1$ equals to $c$. However the community assignment function of $G_2$, which we further define as $c'$, may be different. We illustrate an example of our network models in Figure \ref{fig:model}.

The community assignment functions of the two networks may serve as important structural side information for de-anonymization, which naturally divide the social network de-anonymization problem into two types where the adversary possesses different amount of information on the community assignment. In the first type, the adversary possesses the community assignments of both $G_1$ and $G_2$. The corresponding problem is formally defined as follows.
\vspace{-1mm}
\begin{definition}\textbf{(De-anonymization with Bilateral \\Community Information)}\
	Given the published network $G_1$, the auxiliary network $G_2$, the parameters $\bm{\theta}$, as well as the community assignment function $c$ for $G_1$ and $c'$ for $G_2$, the goal is to construct a mapping $\pi$ that satisfies $\forall i, c(i)=c'(\pi(i))$ and is closest to the correct mapping $\pi_0$.
\end{definition}
\vspace{-2mm}
Since in this case, we have the community assignment of $G_2$, we can perform a relabeling on nodes in $G_2$ to make its community assignment equals to that of $G_1$. Hence, without loss of generality, for the case of de-anonymization with bilateral information, we denote $c$ as the community assignment function of both $G_1$ and $G_2$ in the sequel.

The second variant corresponds to the case where the adversary only possesses the community assignment of the published network, which is formally stated as follows.
\vspace{-1.5mm}
\begin{definition}(\textbf{De-anonymization with Unilateral Community Information})\
	Given the published network $G_1$, the auxiliary network $G_2$, parameters $\bm{\theta}$, as well as the community assignment function $c$ for $G_1$, the goal is to construct a mapping that is closest to the correct mapping $\pi_0$.
\end{definition}
\vspace{-1.5mm}
Intuitively, de-anonymization with unilateral information is harder than that with bilateral information due to the lack of side information. We will validate this argument with subsequent theoretical analysis and experiments. In addition, for brevity, we may refer to de-anonymization problem with bilateral community information and with unilateral community information as bilateral de-anonymization and unilateral de-anonymization respectively.

\textbf{Remark: }Till now, we have not given the quantifying metric of the closeness to the correct mapping $\pi_0$. A natural choice would be the mapping accuracy, i.e., percentage of nodes that are mapped identically as in $\pi_0$. However, as we have no knowledge of $\pi_0$, such ground-truth-based metrics do not apply. To tackle this, we leverage the Maximum A Posteriori (MAP) estimator to construct cost functions for measuring the quality of mappings based solely on observable information. The main notations used throughout the paper are summarized in Table~\ref{table:notation}.

\normalsize
				\begin{table}[!tb]
					\setlength{\extrarowheight}{2pt}
					\begin{scriptsize}
						\renewcommand\arraystretch{0.82}
						\caption{\bf Notions and Definitions}
						\vspace{-3mm}
						\centering
						\label{table:notation}
						\resizebox{1.0\columnwidth}{!}{
							
							\begin{tabular}{l|l}\hline\label{table:notation1}
								\textbf{Notation} & \textbf{Definition} \\\hline
								$G$ & Underlying social network \\ 
								$G_1,G_2$ & Published and auxiliary networks \\ 
								$V,V_1,V_2$ & Vertex sets of graphs $G$, $G_1$ and $G_2$ \\ 
								$E,E_{1},E_{2}$ & Edge sets of graphs $G$, $G_1$, $G_2$ \\  
								$s_1,s_2$ & Edge sampling probabilities of graphs $G_1$, $G_2$ \\  
								$\mathbf{M,A,B}$ & Adjacency matrices of graphs $G$, $G_1$, $G_2$ \\  
								$c$ & Community assignment function \\
								$C_i$ & Vertex set of community $i$ \\ 
								$n$ & Total number of vertices \\
								$\kappa$ & Total number of communities \\ 
								$n_i$ & Number of vertices in community $i$ \\ 
								$p_{ab}$ & Affinity value indicating the edge existence\\& probability  between communities $a$ and $b$ \\ 
								$\bm{\theta}$ & Set of parameters in the models\\& of $G$, $G_1$ and $G_2$ \\ 
								$\pi_0$ & Correct mapping between vertices in $G_1$ and $G_2$ \\
								$\pi$ & Mapping between vertices in $G_1$ and $G_2$ \\
								$\mathbf{\Pi}$ & Permutation matrix of mapping $\pi$ \\
								$\Delta_{\pi}$ & Cost function of the mappings\\
								$\{w\}$ & Set of weights in the cost function\\ \hline
							\end{tabular}}
							
							\vspace{-4.6mm}
						\end{scriptsize}
					\end{table}

\vspace{-2mm}
\section{Analytical Aspect of Bilateral De-anonymization}\label{sec:bilateral1}  
First, we investigate the de-anonymization problem with bilateral information, starting with an appropriate metric measuring the quality of mappings. We define our proposed metric in the form of a cost function that derived from Maximum A Posteriori (MAP) estimation.
\vspace{-1mm}
\subsection{MAP-based Cost Function}
According to the definition of MAP estimation, given the published network $G_1$, auxiliary network $G_2$, parameters $\bm{\theta}$ and the community assignment function $c$, the MAP estimate $\hat{\pi}$ of the correct mapping $\pi_0$ is defined as:
\belowdisplayskip=-1pt
\vspace{-1.5mm}
\begin{align}\label{eq:mapestimator}
\hat{\pi}= \arg\max_{\pi\in \Pi}Pr(\pi_0=\pi\mid G_1,G_2,c,\bm{\theta}),
\end{align}
where $\Pi=\{\pi:V_1\mapsto V_2\mid \forall i, c(i)=c(\pi(i))  \}$, i.e. the set of bijective mappings that observe the community assignment. 

From the results in \cite{cite:arxiv-community}, the MAP estimator in Equation (\ref{eq:mapestimator}) can be computed as
\belowdisplayskip=3pt
\belowdisplayshortskip=3pt
\abovedisplayshortskip=3pt
\abovedisplayskip=3pt
\vspace{-2mm}
\begin{small}
\begin{align}
\hat{\pi}&=\arg\min_{\pi\in \Pi}\sum_{i\le j}^{n}w_{ij}\left|\mathbbm{1}\{(i,j)\in E_1\}-\mathbbm{1}\{(\pi(i),\pi(j))\in E_2\}\right|\label{eq:MAP}\\[-2pt]
 &\triangleq\arg\min_{\pi\in\Pi}\Delta_{\pi}, \vspace{-2mm}\nonumber
\end{align}
\end{small}where $w_{ij}=\log\left(\frac{1-p_{c(i)c(j)}(s_1+s_2-s_1s_2)}{p_{c(i)c(j)}(1-s_1)(1-s_2)}\right).$ Based on Equation (\ref{eq:MAP}), we have our cost function $\Delta_{\pi}$ as the metric for the quality of mappings, which can also be interpreted as weighted edge disagreements induced by mappings.
\subsection{Validity of the Cost Function}
Since our cost function $\Delta_{\pi}$ is derived using the MAP estimation,  the minimizer of $\Delta_{\pi}$, being the MAP estimate of $\pi_0$, coincides with the correct mapping with the highest probability \cite{cite:MAP-base}. Aside from this, we proceed to justify the use of MAP estimation in de-anonymization problem from another perspective. Specifically, we prove that if the model parameters satisfy certain conditions, then the MAP estimate $\hat{\pi}$ \textit{asymptotically almost surely}\footnote{An event \textit{asymptotically almost surely} happens if it happens with probability $1-o(1)$.} coincides with the correct mapping $\pi_0$, which means that we can perfectly recover the correct mapping through minimizing $\Delta_{\pi}$.
\vspace{-2mm}
\begin{theorem}\label{theorem:MAP}
	Let $\alpha=\min_{ab}p_{ab}, \beta=\max_{ab}p_{ab}$, $\overline{w}=\max_{ij}w_{ij}$ and $\underline{w}=\min_{ij}w_{ij}$. Assume that $\alpha,\beta\rightarrow 0$, $s_1,s_2$ do not go to 1 as $n\rightarrow \infty$ and $\frac{\log \alpha}{\log \beta}\le \gamma$. Suppose that
	\begin{align*}
	\frac{\alpha(1-\beta)^2s_1^2s_2^2\log(1/\alpha)}{s_1+s_2}=\Omega\left({\frac{\gamma\log^2 n}{n}}\right)+\omega\left(\frac{1}{n}\right)\footnotemark,
	\end{align*}
	then $\hat{\pi}=\pi_0$ holds almost surely as $n\rightarrow \infty$.
\end{theorem}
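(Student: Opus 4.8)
The plan is to show that, with probability $1-o(1)$, $\pi_0$ is the unique minimizer of $\Delta_\pi$ over $\Pi$; since $\hat\pi$ is defined to be that minimizer, this gives $\hat\pi=\pi_0$. By the union bound it suffices to prove $\sum_{\pi\in\Pi\setminus\{\pi_0\}}\Pr(\Delta_\pi\le\Delta_{\pi_0})=o(1)$, and after relabelling $V_2$ we may take $\pi_0$ to be the identity and then group the sum by $m$, the number of non-fixed points of $\pi$. It is convenient to use the generative coupling in which, for each unordered pair $e=\{u,v\}$, independent variables $X_e\sim\mathrm{Bern}(p_{c(u)c(v)})$, $Y^1_e\sim\mathrm{Bern}(s_1)$ and $Y^2_e\sim\mathrm{Bern}(s_2)$ are drawn and $\mathbf A_e=X_eY^1_e$, $\mathbf B_e=X_eY^2_e$, so that $\Delta_\pi-\Delta_{\pi_0}=\sum_e w_e\bigl(|\mathbf A_e-\mathbf B_{\pi(e)}|-|\mathbf A_e-\mathbf B_e|\bigr)$; every pair with $\pi(e)=e$ cancels, so only the ``relevant'' pairs (those with $\pi(e)\neq e$, which include all $\gtrsim mn$ pairs meeting the displaced set) survive, and since $\pi$ preserves community labels one has $p_{\pi(e)}=p_e$ for every relevant $e$.

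First I would establish a positive drift. For a relevant $e$ with $p:=p_e$, a short computation gives $\mathbb E\bigl[|\mathbf A_e-\mathbf B_{\pi(e)}|-|\mathbf A_e-\mathbf B_e|\bigr]=2p\,s_1s_2(1-p)>0$, so the summand at $e$ has mean $2w_e\,p\,s_1s_2(1-p)$; positivity is exactly what the MAP-derived weights buy, since $w_e=\log\!\bigl(\tfrac{1-p(s_1+s_2-s_1s_2)}{p(1-s_1)(1-s_2)}\bigr)>0$ for all $p<1$. Because $p\mapsto p\,w(p)$ is increasing on the relevant range and $w(p)=\log(1/p)(1+o(1))$ (using $\alpha,\beta\to0$ and that $s_1,s_2$ stay away from $1$), summing over the $\gtrsim mn$ relevant pairs gives $\mathbb E[\Delta_\pi-\Delta_{\pi_0}]=\Omega\!\bigl(mn\,\alpha\log(1/\alpha)\,s_1s_2(1-\beta)\bigr)$.

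Next comes concentration of $\Delta_\pi-\Delta_{\pi_0}$ around its mean. The summands are dependent, but the one indexed by a pair $e$ is independent of all others except possibly those indexed by $\pi(e)$ and $\pi^{-1}(e)$, so the dependency graph has maximum degree $\le2$ and each summand is bounded by $\overline w$. The key point for the variance proxy is that $|\mathbf A_e-\mathbf B_{\pi(e)}|-|\mathbf A_e-\mathbf B_e|\neq0$ only when $\mathbf B_e\neq\mathbf B_{\pi(e)}$, which, since $\mathbf B_e$ and $\mathbf B_{\pi(e)}$ are independent and $\mathrm{Bern}(p_es_2)$, has probability $\le2p_es_2$; using the dual decomposition that moves $\pi^{-1}$ onto the $\mathbf A$'s one may replace $s_2$ here by $\min(s_1,s_2)$, and $p_{\pi(e)}=p_e$ keeps the covariance terms of the same order, so $\mathrm{Var}(\Delta_\pi-\Delta_{\pi_0})=O\!\bigl(\min(s_1,s_2)\sum_{e}w_e^2p_e\bigr)$. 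A Bernstein/Freedman-type inequality for bounded-degree dependency graphs then yields $\Pr(\Delta_\pi\le\Delta_{\pi_0})\le\exp\!\bigl(-\Omega(\min\{(\mathbb E[\Delta_\pi-\Delta_{\pi_0}])^2/V,\ \mathbb E[\Delta_\pi-\Delta_{\pi_0}]/\overline w\})\bigr)$, and bounding $w_e^2p_e=(w_ep_e)\log(1/p_e)(1+o(1))\le(w_ep_e)\log(1/\alpha)(1+o(1))$ makes the weights cancel in the variance-dominated regime, leaving $(\mathbb E[\Delta_\pi-\Delta_{\pi_0}])^2/V=\Omega\!\bigl(\tfrac{mn\,\alpha\,s_1^2s_2^2(1-\beta)^2}{s_1+s_2}\bigr)$; the hypotheses $\log\alpha/\log\beta\le\gamma$ and $\alpha,\beta\to0$ then let one recast this exponent, and the $\overline w$--$\underline w$ gap, in terms of $\log(1/\alpha)$ and $\gamma$ as in the theorem.

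Finally, the number of $\pi\in\Pi$ with $m$ non-fixed points is at most $n^m$, so $\Pr(\hat\pi\neq\pi_0)\le\sum_{m\ge2}\exp\bigl(m\log n-\Omega(m\,\psi)\bigr)$ with $\psi=\Omega\!\bigl(n\,\alpha\,s_1^2s_2^2(1-\beta)^2/(s_1+s_2)\bigr)$; the dominant term is $m=2$, and the series is $o(1)$ as soon as $\psi$ exceeds $C_0\log n$ for a suitable constant $C_0$, which unwinds to the stated density condition, with the separate $\omega(1/n)$ term covering permutations that displace only a bounded number of vertices. I expect the concentration step to be the main obstacle: handling the dependence among the edge-disagreement summands, pushing through the sharp ``$\mathbf B_e\neq\mathbf B_{\pi(e)}$'' variance estimate (which, together with $p_{\pi(e)}=p_e$, is what stops the covariance from blowing up), and carrying out the $\overline w$-versus-$\underline w$ bookkeeping via $\log\alpha/\log\beta\le\gamma$ so that the final threshold exhibits the claimed dependence on $\alpha,\beta,\gamma,s_1,s_2$ rather than the cruder $\alpha^2/\beta$ and $\gamma^2$ a black-box concentration bound would give.
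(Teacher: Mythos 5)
Your proposal is correct and follows the same architecture as the paper's proof: a first-moment/union-bound argument over mappings grouped by the number $k$ of displaced nodes (with the $n^k$ counting bound), restriction of $\Delta_\pi-\Delta_{\pi_0}$ to the $\approx kn$ pairs meeting the displaced set after discarding the $\le k/2$ invariant transposition pairs, the per-pair drift $2w_e p_e s_1s_2(1-p_e)$ (which matches the paper's means $p(s_1+s_2-2ps_1s_2)$ versus $p(s_1+s_2-2s_1s_2)$), an exponential tail bound whose exponent is linear in $k$, and a geometric series controlled by the stated density condition. The one place you genuinely diverge is the concentration step. The paper writes $\Delta_\pi-\Delta_{\pi_0}=X_\pi-Y_\pi$, ``conservatively ignores the positive correlation'' to stochastically dominate $X_\pi$ from below by a sum of \emph{independent} weighted Bernoullis, and then applies a Chernoff bound for weighted independent sums (Lemma A.2); you instead keep the dependent sum, observe that the summand at $e$ interacts only with those at $\pi(e)$ and $\pi^{-1}(e)$ (a degree-$2$ dependency graph, hence chromatic number $\le 3$), and invoke a Bernstein-type inequality for bounded-degree dependency graphs with the variance proxy coming from $\Pr(\mathbf B_e\neq\mathbf B_{\pi(e)})\le 2p_e s_2$. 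Your route costs only a constant factor in the exponent and is arguably the more rigorous of the two, since the domination-by-independent-sums step in the paper (inherited from Pedarsani--Grossglauser) is asserted rather than proved; your joint bound $\min_e w_ep_e\ge\alpha\log(1/\alpha)(1+o(1))$ via monotonicity of $p\mapsto pw(p)$ is also slightly sharper than the paper's separate bounds $w_e\ge\underline w$, $p_e\ge\alpha$. The remaining bookkeeping with $\overline w$, $\underline w$ and $\gamma$, which you correctly flag as the residual work, is no worse than what the paper itself does to reach the stated threshold.
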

\vspace{-4mm}
\footnotetext{We use standard Knuth's notations in this paper.}
\begin{proof}
	Due to space limitations, here we only presenting an outline of the proof and defer the details to \textbf{Appendix \ref{sec:MAP-Estimate}}. 
	Recall that for a mapping $\pi$, we define $\Delta_{\pi}=\sum_{i\le j}^{n}w_{ij}$$|\mathbbm{1}\{(i,j)\in E_1\}-$$\mathbbm{1}\{\pi(i),\pi(j)\in E_2\}|$. Also, we denote $\Pi_k$ as the set of mappings that map $k$ nodes incorrectly and $S_k$ as a random variable representing the the number of mappings $\pi\in\Pi_k$ with $\Delta_{\pi}\le \Delta_{\pi_0}$. We then define $S=\sum_{k=2}^nS_k$ as the total number of incorrect mappings $\pi$ with $\Delta_{\pi}\le \Delta_{\pi_0}$ and derive an upper bound on the mean of $S$ as $\mathbb{E}[S]\le \sum_{k=2}^{n}n^k\max_{\pi\in\Pi_k}Pr\{\Delta_\pi-\Delta_{\pi_0}\le 0\}$. We further show that under the conditions stated in the theorem, this upper bound, and consequently $\mathbb{E}[S]$, go to 0 as $n\rightarrow \infty$, which implies that $\pi_0$ is the unique minimizer of $\Delta_{\pi}$ and concludes the proof. \qed

\end{proof}
\vspace{-2mm}
\textbf{Remark: }We now present two further notes regarding Theorem \ref{theorem:MAP}. (i) \textit{Applicability of the Theorem:} 
Recall that for a random Erd\H{o}s-R\'{e}nyi graph $G(n,p)$ to be connected and free of isolated nodes with high probability, it must satisfy $p=\Omega(\frac{\log n}{n})$ \cite{cite:ER-Graph}, and the absence of isolated nodes is necessary for successful de-anonymization since there is no way that we can distinguish the isolated nodes in $G_1$ and $G_2$. Conventionally setting the sampling probabilities $s_1,s_2$ as constants, it is easy to verify that the conditions in Theorem \ref{theorem:MAP} only have constant gap from the graph connectivity conditions even when the expected degree distributions (or equivalently, the closeness between the communities) of $G_1$ and $G_2$ are non-uniform (e.g. power law distribution where $\alpha/\beta =O(n)$ and $\log \alpha/\log\beta =O(\log n)$). From this aspect, the conditions are quite mild and thus make Theorem \ref{theorem:MAP} fairly general; 
(ii) \textit{Extension of the Theorem:} The cost function we design is robust, in the sense that any approximate minimizer $\Delta_{\pi}$ can map most of the nodes correctly. We formally present the claim in Corollary \ref{corollary:accuracy}. 
\vspace{-1.5mm}
\begin{corollary}\label{corollary:accuracy}
	Let $\alpha,\beta,\overline{w},\underline{w}$ be the same parameters defined in Theorem \ref{theorem:MAP}. Assume that $\alpha,\beta,s_1,s_2$ do not go to 0 and $\frac{\log \alpha}{\log\beta}\le \gamma$. Additionally, let $\delta,\epsilon$ be two real numbers with $0\le \delta,\epsilon \le 1$ with $\epsilon=O(\delta-\frac{\delta^2}{2})\alpha(1-\beta)s_1s_2\log(1/\alpha).$~If 
	\vspace{-1mm}
	\abovedisplayskip=2pt
	\begin{small}
	\begin{equation*}
	\frac{\alpha(1-\beta)^2s_1^2s_2^2\log(1/\alpha)}{s_1+s_2}=\Omega\left({\frac{\gamma\log^2 n}{(1-\delta/2)n}}\right)+\omega\left(\frac{1}{n}\right),
	\end{equation*}
	\end{small}
	then for all $\pi^*$ with $\Delta_{\pi^*}-\min_{\pi\in\Pi}\Delta_\pi\le \epsilon n^2$, $\pi^*$ is guaranteed to map at least $(1-\delta)n$ nodes correctly as $n\rightarrow \infty$.
\end{corollary}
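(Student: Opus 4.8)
The plan is to reduce the claim to a refined copy of the first‑moment argument behind Theorem~\ref{theorem:MAP}. First observe that the density hypothesis here is strictly stronger than the one in Theorem~\ref{theorem:MAP}: since $1/(1-\delta/2)\ge 1$, the stated lower bound implies $\frac{\alpha(1-\beta)^2s_1^2s_2^2\log(1/\alpha)}{s_1+s_2}=\Omega\!\left(\frac{\gamma\log^2 n}{n}\right)+\omega\!\left(\frac1n\right)$, so Theorem~\ref{theorem:MAP} applies and asymptotically almost surely $\min_{\pi\in\Pi}\Delta_\pi=\Delta_{\pi_0}$. Hence it suffices to prove that a.a.s. every mapping that is wrong on more than $\delta n$ nodes — i.e. every $\pi\in\Pi_k$ with $k>\delta n$ — satisfies $\Delta_\pi-\Delta_{\pi_0}>\epsilon n^2$. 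Indeed, on the intersection of these two events, any $\pi^*$ with $\Delta_{\pi^*}-\min_{\pi\in\Pi}\Delta_\pi\le\epsilon n^2$ obeys $\Delta_{\pi^*}-\Delta_{\pi_0}\le\epsilon n^2$, so $\pi^*$ cannot misplace more than $\delta n$ nodes, i.e. it maps at least $(1-\delta)n$ nodes correctly.

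Mimicking the proof of Theorem~\ref{theorem:MAP}, let $S'_k$ be the number of $\pi\in\Pi_k$ with $\Delta_\pi-\Delta_{\pi_0}\le\epsilon n^2$, set $S'=\sum_{k>\delta n}S'_k$, and use $|\Pi_k|\le n^k$ to get $\mathbb{E}[S']\le\sum_{k>\delta n}n^k\max_{\pi\in\Pi_k}Pr\{\Delta_\pi-\Delta_{\pi_0}\le\epsilon n^2\}$. The new ingredient is the tail bound with slack $\epsilon n^2$ instead of $0$. For $\pi\in\Pi_k$, write $\Delta_\pi-\Delta_{\pi_0}$ as a sum, over the set $D_\pi$ of node‑pairs on which $\pi$ and $\pi_0$ can induce different edge contributions, of bounded random variables that partition into a constant number of mutually independent groups; one has $|D_\pi|=\binom n2-\binom{n-k}2$, which is increasing in $k$ and hence, for $k>\delta n$, at least $\binom n2-\binom{n-\delta n}2=(\delta-\tfrac{\delta^2}{2})n^2+O(n)$, while each contributing term has expectation bounded below by a multiple of $\alpha(1-\beta)s_1s_2\log(1/\alpha)$. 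Therefore $\mathbb{E}[\Delta_\pi-\Delta_{\pi_0}]\ge c\,(\delta-\tfrac{\delta^2}{2})\,\alpha(1-\beta)s_1s_2\log(1/\alpha)\,n^2$ uniformly over such $\pi$, and the hypothesis $\epsilon=O\big((\delta-\tfrac{\delta^2}{2})\alpha(1-\beta)s_1s_2\log(1/\alpha)\big)$ is exactly what guarantees $\epsilon n^2\le\tfrac12\mathbb{E}[\Delta_\pi-\Delta_{\pi_0}]$. Consequently $Pr\{\Delta_\pi-\Delta_{\pi_0}\le\epsilon n^2\}\le Pr\{\Delta_\pi-\Delta_{\pi_0}\le\tfrac12\mathbb{E}[\Delta_\pi-\Delta_{\pi_0}]\}$, which a Bernstein‑type inequality (accounting for the weights being as large as $\overline w=\Theta(\log(1/\alpha))$) bounds by $\exp\!\big(-\Omega(kn\,\alpha(1-\beta)^2s_1^2s_2^2\log(1/\alpha)/((s_1+s_2)\gamma))\big)$ — the same per‑mapping estimate that drives Theorem~\ref{theorem:MAP}, now evaluated only on the range $k>\delta n$.

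Plugging this into the union bound gives $\mathbb{E}[S']\le\sum_{k>\delta n}\exp\!\big(k\log n-\Omega(kn\,\alpha(1-\beta)^2s_1^2s_2^2\log(1/\alpha)/((s_1+s_2)\gamma))\big)$, and the assumed lower bound $\frac{\alpha(1-\beta)^2s_1^2s_2^2\log(1/\alpha)}{s_1+s_2}=\Omega\!\big(\frac{\gamma\log^2 n}{(1-\delta/2)n}\big)+\omega\!\big(\frac1n\big)$ forces each exponent to be negative and bounded away from $0$, so the (geometric‑type) sum tends to $0$; then $S'=0$ a.a.s. by Markov's inequality, which is the desired statement. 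I expect the main obstacle to be the same one that sits at the core of Theorem~\ref{theorem:MAP}: producing a clean uniform lower bound on $\mathbb{E}[\Delta_\pi-\Delta_{\pi_0}]$ together with a matching Bernstein‑type upper tail strong enough to beat the union over the (up to $n^n$) candidate mappings — here with the extra bookkeeping that the slack $\epsilon n^2$ must be absorbed into half of that mean, which is precisely what pins down the admissible order of $\epsilon$ via the identity $\binom n2-\binom{n-\delta n}2\sim(\delta-\tfrac{\delta^2}{2})n^2$.
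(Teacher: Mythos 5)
Your proposal is correct and follows essentially the same route as the paper's own proof: the paper likewise upper-bounds $\sum_{k\ge \delta n}\sum_{\pi\in\Pi_k}Pr\{\Delta_\pi-\Delta_{\pi_0}\le\epsilon n^2\}$ using the first-moment/union-bound machinery of Theorem \ref{theorem:MAP}, with the slack $\epsilon n^2$ absorbed into the mean gap exactly as you do via $|E_\pi|\ge(\delta-\tfrac{\delta^2}{2})n^2+O(n)$ for $k>\delta n$. The only cosmetic difference is that your opening appeal to Theorem \ref{theorem:MAP} to identify $\min_{\pi\in\Pi}\Delta_\pi$ with $\Delta_{\pi_0}$ is unnecessary, since $\Delta_{\pi^*}-\Delta_{\pi_0}\le\Delta_{\pi^*}-\min_{\pi\in\Pi}\Delta_\pi$ holds deterministically.
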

\vspace{-3.5mm}
\begin{proof}
	The proof is similar to that of Theorem \ref{theorem:MAP}. Instead of bounding $\sum_{k=2}^n\sum_{\pi\in\Pi_k}Pr\{\Delta_\pi-\Delta_{\pi_0}\le 0\}$, we upper bound $\sum_{k=\delta n}^n\sum_{\pi\in\Pi_k}Pr\{\Delta_\pi-\Delta_{\pi_0}\le \epsilon n^2\}$. Using similar technique as in Theorem \ref{theorem:MAP}, we have that under the conditions stated in the corollary, $\sum_{k=\delta n}^n\sum_{\pi\in\Pi_k}Pr\{\Delta_\pi-\Delta_{\pi_0}\le \epsilon n^2\}\rightarrow 0$ as $n\rightarrow \infty$. Therefore, for a mapping $\pi^*$ with $\Delta_{\pi^*}-\Delta_{\pi_0} \le \epsilon n^2$, it maps at most $k=\delta n$ nodes incorrectly. Since $\Delta_{\pi_0}\ge\arg\min_{\pi\in\Pi}\Delta_{\pi}$, we conclude that all $\pi^*$ with $\Delta_{\pi^*}-\min_{\pi\in\Pi}\Delta_\pi\le \epsilon n^2$ are guaranteed to map at least $(1-\delta)n$ nodes correctly as $n\rightarrow \infty$. 
\end{proof}

\vspace{-5mm}
\section{Algorithmic Aspect of Bilateral De-anonymization}\label{sec:bilateral2}
\vspace{-0.7mm}
The quantification in Section \ref{sec:bilateral1} justified that, under mild conditions, we can unravel the correct mapping through computing its MAP estimate, i.e., the minimizer of $\Delta_{\pi}$. This naturally puts forward the optimization problem of computing the minimizer of $\Delta_{\pi}$, which reasonably serves as the instantiation of the social network de-anonymization problem (Definition 3.1).  To meet the demand of the quantification, in this section, we formally define and investigate this optimization problem, presenting a first look into the algorithmic aspect of social network de-anonymization.
\vspace{-1mm}
\subsection{The Bilateral MAP-ESTIMATE Problem}
\vspace{-0.5mm}
Naturally, with some previously defined notations inherited, the optimization problem induced by the cost function can be formulated as follows.
\vspace{-1.5mm}
\begin{definition}{(\bf{The BI-MAP-ESTIMATE Problem})}\ \label{def:BI-MAP-ESTIMATE}
	Given two graphs $G_1(V_1,E_1,\mathbf{A})$ and $G_2(V_2,E_2,\mathbf{B})$, community assignment function $c$ and a set of weights $\{w\}$, the goal is to compute a mapping $\hat{\pi}:V_1\mapsto V_2$ that satisfies
	\belowdisplayskip=-1pt
	\vspace{-2.7mm}
	\begin{small}
	\begin{equation*}
	\begin{aligned}
	\mathbf{P1:}\ \ \quad
	\hat{\pi}&=\arg\min_{\pi\in \Pi}\sum_{i\le j}^{n}w_{ij}\left|\mathbbm{1}\{(i,j)\in E_1\}-\mathbbm{1}\{\pi(i),\pi(j)\in E_2\}\right|\\[-3pt]
	&\triangleq\arg\min_{\pi\in\Pi}\Delta_{\pi}, 
	\end{aligned}
		\vspace{-2mm}
	\end{equation*} 
	\end{small}
	where $\Pi=\{\pi\mid \forall i,c(i)=c(\pi(i))\}$. 
\end{definition}
\vspace{-2mm}
Note that we require the weights $\{w\}$ to be induced by implicit and well-defined community affinity values and sampling probabilities. Also, the BI-MAP-ESTIMATE Problem denoted as $\mathbf{P1}$ above has several equivalent formulations, which will be presented later.  

The BI-MAP-ESTIMATE seems to be easy at first glance due to the simplicity of its objective function $\Delta_{\pi}$, but as justified by the following proposition, it is not only computationally intractable but also highly inapproximable. 
\vspace{-2mm}
\begin{proposition}\label{proposition:hardness}
	BI-MAP-ESTIMATE problem is NP-hard. And there is no polynomial time (pseudo-polynomial time) approximation algorithm for BI-MAP-ESTIMATE with any multiplicative approximation guarantee unless $GI\in P$ ($GI\in DTIME(n^{\mathrm{polylog}n})$).\footnote{$GI$ denotes the complexity class Graph Isomorphsim.}
\end{proposition}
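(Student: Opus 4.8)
The plan is to prove both parts by polynomial-time reductions that stay inside the class of legitimate BI-MAP-ESTIMATE instances, i.e., those whose weights $\{w\}$ come from admissible affinity values and sampling probabilities. It is convenient to work entirely with a single community ($\kappa=1$), so that the constraint $c(i)=c(\pi(i))$ is vacuous and every weight collapses to the common value $w=w_{11}=\log\!\frac{1-p_{11}(s_1+s_2-s_1s_2)}{p_{11}(1-s_1)(1-s_2)}$; a direct computation gives the identity $(s_1+s_2-s_1s_2)+(1-s_1)(1-s_2)=1$, whence $w>0$ for every admissible $p_{11}\in(0,1)$, $s_1,s_2\in(0,1)$, so such instances are always valid and have uniform positive weight. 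With uniform weight $w$ one has $\Delta_\pi=w\bigl(|E_1|+|E_2|-2m_\pi\bigr)$, where $m_\pi:=|E_1\cap\pi^{-1}(E_2)|$ is the number of edge-agreements under $\pi$; hence minimizing $\Delta_\pi$ is exactly maximizing the common-edge count $m_\pi$, which is the viewpoint both reductions exploit.

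For NP-hardness I would reduce from Hamiltonian Cycle. Given a graph $H$ on $n\ge 3$ vertices, take $G_1:=C_n$ (the $n$-cycle) and $G_2:=H$, both as one-community instances with the weights above. Since $\pi$ is a bijection, the image of $C_n$ under $\pi$ is always a spanning cycle, so $m_\pi\le n=|E(C_n)|$ with equality iff that spanning cycle lies in $H$, i.e., iff $H$ is Hamiltonian; thus $\min_\pi\Delta_\pi=w(|E(H)|-n)$ when $H$ is Hamiltonian and $\min_\pi\Delta_\pi\ge w(|E(H)|-n+2)$ otherwise. Testing whether $\min_\pi\Delta_\pi\le w(|E(H)|-n)$ therefore decides Hamiltonicity, and the reduction is clearly polynomial, establishing NP-hardness. (A reduction from Subgraph Isomorphism, padding the pattern graph with isolated vertices, works equally cleanly.)

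For the inapproximability claim I would reduce from Graph Isomorphism. Given two $n$-vertex graphs $G_1,G_2$, view them as one-community instances with the constant weight $w>0$; then $\min_\pi\Delta_\pi=0$ precisely when $G_1\cong G_2$, and $\min_\pi\Delta_\pi\ge w>0$ otherwise. If $\mathcal{A}$ were a polynomial-time $\rho$-approximation algorithm for BI-MAP-ESTIMATE, for any function $\rho$, then on this instance it returns a mapping of cost at most $\rho\cdot\min_\pi\Delta_\pi$: when $G_1\cong G_2$ this bound is $0$, forcing $\mathcal{A}$ to output an actual isomorphism, whereas when $G_1\not\cong G_2$ its output necessarily has positive cost. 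Comparing the returned cost to $0$ thus decides Graph Isomorphism in polynomial time, so $GI\in P$; if instead $\mathcal{A}$ runs only in quasi-polynomial time $n^{\mathrm{polylog}\,n}$, the same argument gives $GI\in DTIME(n^{\mathrm{polylog}\,n})$.

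The step needing the most care is conceptual rather than computational: one must be precise that a multiplicative guarantee is forcing here only because the optimum can equal exactly $0$, which is also the reason the barrier is $GI$ (distinguishing $\mathrm{opt}=0$ from $\mathrm{opt}>0$) rather than an NP-hardness-of-approximation statement, since there is no multiplicative gap to amplify away from $0$. Besides that, the only routine checks are that the constructed instances respect the model's parameter constraints (handled by the identity above) and that single-community inputs are themselves legitimate BI-MAP-ESTIMATE instances, so no embedding into larger $\kappa$ is required. The full write-up would merely spell out the edge-count arithmetic and the choice of threshold in the decision versions.
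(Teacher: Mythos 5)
Your proposal is correct, and it splits into two parts that relate to the paper differently. The inapproximability argument is essentially identical to the paper's: both of you reduce from Graph Isomorphism by taking the two input graphs as $G_1,G_2$ with a single community and uniform weights, and both exploit that the optimum is exactly $0$ iff the graphs are isomorphic, so any multiplicative guarantee collapses to an exact decision procedure for $GI$ (and the pseudo-polynomial variant gives $GI\in DTIME(n^{\mathrm{polylog}\,n})$). Where you genuinely diverge is the NP-hardness claim: the paper supports it only by the same $GI$ reduction, which is technically insufficient, since $GI$ is not known to be NP-hard, a reduction \emph{from} $GI$ only establishes $GI$-hardness. Your separate reduction from Hamiltonian Cycle (taking $G_1=C_n$, $G_2=H$, and observing that with uniform positive weight $w$ one has $\Delta_\pi=w(|E_1|+|E_2|-2m_\pi)$, so the optimum hits $w(|E(H)|-n)$ iff $H$ is Hamiltonian) actually proves the NP-hardness statement the proposition asserts, which the paper's own argument does not. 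Your additional care in verifying that uniform weights are realizable by admissible parameters, via the identity $(s_1+s_2-s_1s_2)+(1-s_1)(1-s_2)=1$ giving $w_{11}>0$, also addresses the paper's stated requirement that weights be induced by well-defined affinity values and sampling probabilities, a point the paper glosses over by simply setting $w_{ij}=1$. In short, your write-up is a strict improvement on the published proof for the NP-hardness half and matches it on the inapproximability half.
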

\vspace{-3mm}
\normalsize
\begin{proof}
The proof can be easily constructed by reduction from the graph isomoprhism problem. The reduction is completed by just setting the two graphs in the instance of the graph isomorphism as $G_1$ and $G_2$, as well as assigning all $w_{ij}=1$ and $c(v)=1$ for all $v\in V_1,V_2$. Obviously, if the two graphs are isomorphic, the value $\Delta_{\hat{\pi}}$ of the optimal mapping $\hat{\pi}$ will be zero. Therefore, in this case, any algorithm with multiplicative approximation guarantee must find a mapping $\pi$ with $\Delta_{\pi}=0$. Furthermore, if $G_1$ and $G_2$ are not isomorphic, then any mapping $\pi$ must induce a $\Delta_{\pi}$ strictly larger than 0. Hence, a polynomial time approximation algorithm for BI-MAP-ESTIMATE with multiplicative guarantee implies a polynomial time algorithm for the graph isomorphism problem. Note that the result can be further extended as there is no pseudo-polynomial time algorithm with multiplicative approximation guarantee unless $GI\in DTIME(n^{\mathrm{polylog}n})$.
\end{proof}
\vspace{-4mm}
\subsection{Approximation Algorithms}
\vspace{-0.7mm}
As demonstrated above, the BI-MAP-ESTIMATE problem bears high computational complexity and approximation hardness. It is thus unrealistic to pursue exact or even multiplicative approximation algorithms. To circumvent this obstacle and still find solutions with provable theoretical properties, we propose two algorithms with their respective advantages: one has an $\epsilon$-additive approximation guarantee and the other has lower time complexity and yields optimal solutions under certain conditions. The main idea behind them is to convert $\textbf{P1}$ to equivalent formulations which are more amenable to relaxation techniques.
\vspace{-2mm}
\subsubsection{Additive Approximation Algorithm}\label{sec:additiveapprox}\vspace{-1mm}
The additive approximation algorithm we propose is based on the following quadratic assignment formulation of the BI-MAP-ESTIMATE Problem which we denote as $\mathbf{P2}$.
\abovedisplayskip=3.5pt
\begin{align}
\mathbf{P2:}\ \ \quad\text{maximize } & \textstyle\sum_{i,j,k,l}q_{ijkl}x_{ik}x_{jl}\\[-2pt]
\text{\textbf{s.t. }}& \textstyle\sum_{i}x_{ij}=1,\quad \forall i\in V_1\\[-2pt]
& \textstyle\sum_{j}x_{ij}=1,\quad \forall j\in V_2\\[-2pt]
& x_{ij}\in\{0,1\}
\end{align}
The coefficients $\{q\}_{ijkl}$ of $\mathbf{P2}$ are defined as:
\belowdisplayskip=3pt
\belowdisplayskip=3pt
\[
q_{ijkl}=
\begin{cases}
w_{ij}, &\text{if }(i,j)\in E_{1}, (k,l)\in E_{2} \mbox{ and }\\[-2pt]& c(i)=c(k),c(j)=c(l),\\
-1 &\text{if $c(i)\neq c(k)$ or $c(j)\neq c(l)$,} \\
0 &\text{otherwise.}
\end{cases}
\]

The solutions to $\mathbf{P2}$ are a set of integers $\{x\}$. We will refer to the value of $\sum_{i,j,k,l}q_{ijkl}x_{ik}x_{kl}$ as the value of $\{x\}$. Based on a solution $\{x\}$, we can construct its equivalent mapping for the BI-MAP-ESTIMATE problem by setting $\pi(i)=j$ iff $x_{ij}=1$. The following proposition shows the correspondence between $\mathbf{P1}$ and $\mathbf{P2}$. 
\vspace{-1.5mm}
\begin{proposition}\label{proposition:QAP}
 	Given $G_1$, $G_2$, $c$ and $\{w\}$, the optimal solutions of $\mathbf{P1}$ and $\mathbf{P2}$ are equivalent.
\end{proposition}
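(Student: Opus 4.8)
I would establish the equivalence in two stages: first that, restricted to community-respecting mappings, the objectives of $\mathbf{P1}$ and $\mathbf{P2}$ differ only by an additive constant, and second that an optimal solution of $\mathbf{P2}$ is automatically community-respecting, so that imposing the constraint $c(i)=c(\pi(i))$ costs nothing.

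For the first stage, a $0/1$ feasible point $\{x\}$ of $\mathbf{P2}$ is exactly a permutation matrix and hence encodes a bijection $\pi$ with $x_{ik}x_{jl}=\mathbbm{1}\{\pi(i)=k\}\,\mathbbm{1}\{\pi(j)=l\}$, so the objective of $\mathbf{P2}$ becomes $\sum_{i,j}q_{i\,j\,\pi(i)\,\pi(j)}$. If $\pi\in\Pi$, the penalty branch ($c(i)\ne c(k)$ or $c(j)\ne c(l)$) never fires, so this sum equals $\sum_{(i,j)\in E_1,\,(\pi(i),\pi(j))\in E_2}w_{ij}=2T_\pi$, where $T_\pi:=\sum_{i<j}w_{ij}\mathbbm{1}\{(i,j)\in E_1\}\mathbbm{1}\{(\pi(i),\pi(j))\in E_2\}$ (using $w_{ij}=w_{ji}$ and undirectedness to pass between ordered and unordered pairs). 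On the $\mathbf{P1}$ side I would expand $|\mathbbm{1}\{(i,j)\in E_1\}-\mathbbm{1}\{(\pi(i),\pi(j))\in E_2\}|=\mathbbm{1}\{(i,j)\in E_1\}+\mathbbm{1}\{(\pi(i),\pi(j))\in E_2\}-2\,\mathbbm{1}\{(i,j)\in E_1\}\mathbbm{1}\{(\pi(i),\pi(j))\in E_2\}$ inside $\Delta_\pi$. The first term contributes $W_1:=\sum_{i\le j}w_{ij}\mathbbm{1}\{(i,j)\in E_1\}$, a constant; the second contributes $\sum_{i\le j}w_{ij}\mathbbm{1}\{(\pi(i),\pi(j))\in E_2\}$, which is also a constant $W_2$, since $w_{ij}$ depends on $(i,j)$ only through $(c(i),c(j))$ and $\pi\in\Pi$ preserves community labels, so the substitution $(k,l)=(\pi(i),\pi(j))$ rewrites it as $\sum_{k\le l}w_{kl}\mathbbm{1}\{(k,l)\in E_2\}$; the remaining cross term is $2T_\pi$. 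Thus $\Delta_\pi=W_1+W_2-(\text{value of }\{x\})$ with $W_1+W_2$ independent of $\pi$, so over $\Pi$ minimizing $\Delta_\pi$ and maximizing the $\mathbf{P2}$ objective are the same problem.

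For the second stage, let $\pi'$ be any permutation that maps a nonempty set $B\subseteq V_1$ into wrong communities, and put $B'=\pi'(B)$. Every ordered pair $(i,j)$ with $i\in B$ or $j\in B$ contributes exactly $-1$ to the $\mathbf{P2}$ objective at $\pi'$, a total penalty of $|B|(2n-|B|)>0$; on the complementary indices the contributions are the nonnegative weights $w_{ij}$ (note $w_{ij}\ge0$ because every affinity value is at most $1$) of the edges matched by the community-respecting injection $\pi'|_{V_1\setminus B}$. Because $G_1$ and, after the bilateral relabeling, $G_2$ have the same community sizes and $\pi'$ is a bijection, a short counting argument shows that for each community the numbers of its nodes lying in $B$ and in $B'$ coincide; hence $\pi'|_{V_1\setminus B}$ extends to some $\tilde\pi\in\Pi$. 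Since $\tilde\pi$ agrees with $\pi'$ off $B$ and all weights are nonnegative, the $\mathbf{P2}$ objective at $\tilde\pi$ is at least the nonnegative contribution of the complementary indices to the objective at $\pi'$, so the objective at $\pi'$ is strictly smaller than at $\tilde\pi$. Thus no community-violating permutation is optimal for $\mathbf{P2}$, and combined with the first stage this gives the claimed correspondence of optimal solutions.

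\textbf{Main obstacle.} The algebraic identity in the first stage is routine; the real work is the second stage, namely showing precisely that a community-violating permutation loses strictly more from the $-1$ penalties than it can regain from matched edges. This hinges on the counting identity relating $B$ and $B'$ and on extending $\pi'|_{V_1\setminus B}$ to a member of $\Pi$, and it also forces some care about ordered-versus-unordered pair conventions and about the nonnegativity $w_{ij}\ge0$ used throughout.
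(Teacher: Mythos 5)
Your proposal is correct, and it follows the same two-stage skeleton as the paper's proof: (i) show that an optimal solution of $\mathbf{P2}$ must observe the community assignment, and (ii) show that on community-respecting mappings the $\mathbf{P2}$ objective and $\Delta_\pi$ sum to a constant, so maximizing one is minimizing the other. The difference lies in how stage (i) is argued. The paper locates a ``cycle of community assignment violations'' and reverses it, gaining strictly positive value because the cycle's entries each contribute $-1$; you instead take the whole violating set $B$, observe that the $|B|(2n-|B|)$ ordered pairs touching $B$ each contribute $-1$, prove via the counting identity $|C_a\cap B|=|C_a\cap \pi'(B)|$ that $\pi'|_{V_1\setminus B}$ extends to some $\tilde\pi\in\Pi$, and compare objectives directly. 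These are really two implementations of the same repair idea (the paper's cycles exist precisely because of your counting identity), but your version is more self-contained: it makes explicit the two facts the paper uses silently, namely that $w_{ij}=\log\bigl(\tfrac{1-p(s_1+s_2-s_1s_2)}{p(1-s_1)(1-s_2)}\bigr)\ge 0$ whenever $p\le 1$ (needed so the repaired pairs contribute nonnegatively), and that $\sum_{i\le j}w_{ij}\mathbbm{1}\{(\pi(i),\pi(j))\in E_2\}$ is constant over $\pi\in\Pi$ because $w_{ij}$ depends on $(i,j)$ only through the community labels. The latter observation is exactly what justifies the paper's identity $\sum_{ij}w_{ij}=\sum_{ijkl}q_{ijkl}x^{\pi}_{ik}x^{\pi}_{jl}+\Delta_\pi$, which the paper asserts without derivation, so your expansion of $|a-b|=a+b-2ab$ is a welcome addition rather than a detour.
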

\vspace{-2mm}
\begin{proof}
	We write the equivalent set of integers $\{x\}$ of a mapping $\pi$ as $\{x^\pi\}$. First, we prove that the optimal solution $\{x^*\}$ to $\mathbf{P2}$ must observe the community assignment, i.e., if $x^*_{ij}=1$, then $c(i)=c(j)$. Indeed, for a solution $\{x\}$ having some $x_{i_0i_1}=1$ but $c(i_0)\neq c(i_1)$, we can find a ``cycle of community assignment violations" starting from $i$ with $x_{i_0i_1}=x_{i_1'i_2}=x_{i_2'i_3}=\ldots x_{i_\rho'i_0'}$ and $c(i_0)=c(i_0'),c(i_1)=c(i_1'),\ldots,c(i_\rho)=c(i_\rho')$. Due to the special structure of the coefficients $\{q\}$, this cycle only contributes negative value to the objective function of $\mathbf{P2}$. Therefore, by ``reversing" the cycle, we obtain a new solution $\{x'\}$ from $\{x\}$ with $x'_{i_0i_0'}=x'_{i_1'i_1}=x'_{i_2'i_2}=\ldots=x'_{i_\rho'i_\rho}=1$ and $\sum_{i,j,k,l}q_{ijkl}x'_{ij}x'_{kl}>\sum_{i,j,k,l}q_{ijkl}x_{ij}x_{kl}$.The process of reversing cycles of community assignment violations is demonstrated in Figure 2. If follows that the optimal solution to $\mathbf{P2}$ must observe the community assignment. Then, we proceed to show that the optimal solution to $\mathbf{P1}$ is equivalent to the optimal solution to $\mathbf{P2}$. Notice that for all $\{x^\pi\}$ that observe the community assignment, we have $\sum_{ij}w_{ij}=\sum_{ijkl}q_{ijkl}x^\pi_{ik}x^\pi_{jl}+\Delta_{\pi}$. Therefore, the corresponding $\{x^{\hat\pi}\}$ of the optimal solution $\hat{\pi}$ to $\mathbf{P1}$ is also optimal for $\mathbf{P2}$ and vice versa. 
\end{proof}
\begin{figure}[!tb]
\centering
\includegraphics[width=0.92\linewidth,height=0.11\textheight]{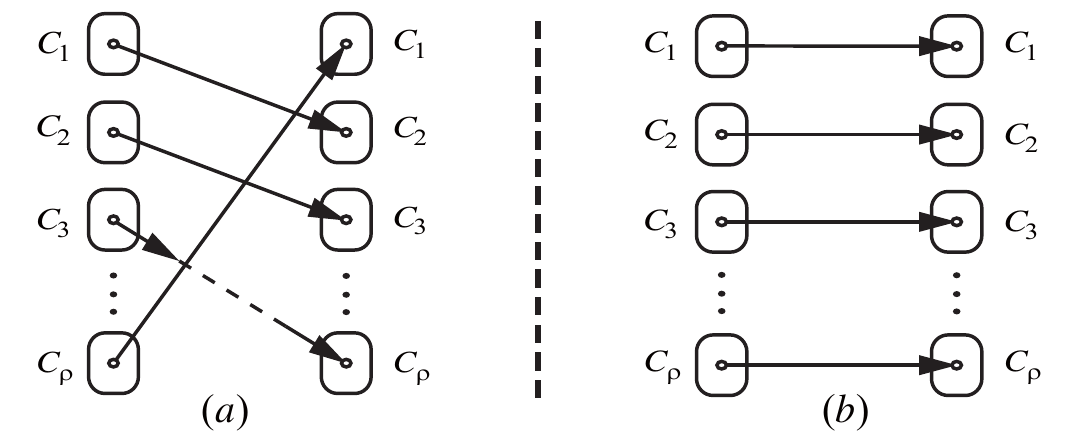}
\vspace{-2mm}
\caption{\small\bf Illustration of the reversal of a cycle of community assignment violations: (a) a cycle of community assignment violations in a mapping; (b) reversal of the cycle of violations.}
\label{fig:reverse}
\vspace{-6mm}
\end{figure}
\normalsize
\vspace{-1.5mm}

The proof of Proposition \ref{proposition:QAP} also provides the two main stages in our additive approximation algorithm: (i) Convert the instance of the BI-MAP-ESTIMATE problem into its corresponding quadratic assignment formulation $\mathbf{P2}$ where the solution is then computed. (ii) Reverse all the ``cycles of community assignment violations" in the solution and construct the desired mapping based on it.

For the first stage, we adopt the relaxing-rounding based algorithm proposed by Arora at al. \cite{cite:QAP} as a sub-procedure referred to as ``QA-Rounding" to solve the converted instances of $\mathbf{P2}$. QA-Rounding has additive approximation guarantee when the instances have coefficients $\{q\}$ that do not scale with the size of the problem \cite{cite:QAP}. Note that the requirement for the coefficients to be independent of the size of the problem is one of the key factors for the seemingly unnatural formulation of $\mathbf{P2}$. For the sake of completeness, we state in the following lemma the related result from \cite{cite:QAP}. 
\vspace{-3mm}
\begin{lemma}\label{lemma:Arora}(Theorem 3 in \cite{cite:QAP})
	Given an instance of $\mathbf{P2}$ with $-C\le q_{ijkl} \le C$ for all $i,j,k,l\in \{1\ldots n\}$ where $C$ is a constant that is independent of $n$, then for any $\epsilon>0$, QA-Rounding finds a solution $\{x\}$ with
	\abovedisplayskip=2pt
	\belowdisplayskip=2pt
	\[
	\sum_{i,j,k,l}q_{ijkl}x_{ik}x_{jl}\ge \sum_{ijkl}q_{ijkl}x^*_{ijkl}-\epsilon n^2
	\] 
	in $n^{O(\log n/\epsilon^2)}$ time, where $\{x^*\}$ is the optimal solution.
\end{lemma}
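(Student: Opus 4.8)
This lemma is quoted from \cite{cite:QAP}; below we only sketch the argument one would use to reprove it. The plan is to combine an \emph{exhaustive sampling} step with the exact solution of an ordinary \emph{linear} assignment problem. Writing the objective as a sum of row contributions,
\[
\textstyle\sum_{i,j,k,l} q_{ijkl}\, x_{ik} x_{jl} \;=\; \sum_{i,k} x_{ik}\, d_{ik}(x), \qquad d_{ik}(x)\triangleq \sum_{j,l} q_{ijkl}\, x_{jl},
\]
the value of any feasible assignment is fixed once the ``linear coefficients'' $d_{ik}(\cdot)$ are known. If we knew $d_{ik}(x^{*})$ for an optimal $\{x^{*}\}$, maximizing $\sum_{i,k} d_{ik}(x^{*})\, x_{ik}$ over the (integral) assignment polytope would already give a near-optimal value; the obstacle is that $d_{ik}(x^{*})$ depends on the unknown $\{x^{*}\}$, and this is exactly what the sampling circumvents.

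First I would draw a uniformly random $S\subseteq V_1$ of size $t=\Theta(\log n/\epsilon^{2})$ and enumerate all at most $n^{|S|}=n^{O(\log n/\epsilon^{2})}$ assignments of the rows in $S$ to distinct columns; this enumeration is the sole source of the $n^{O(\log n/\epsilon^{2})}$ running time, since for a fixed guess everything else (forming the coefficient estimates, solving an $n\times n$ linear assignment by the Hungarian method, evaluating the quadratic objective) costs only $\mathrm{poly}(n)$. Next, for the guess $g$ that coincides with $\{x^{*}\}$ on $S$, I would estimate each linear coefficient by the scaled sample average $\widehat d_{ik}\triangleq \tfrac{n}{t}\sum_{j\in S} q_{i\,j\,k\,g(j)}$. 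Because $d_{ik}(x^{*})$ is a sum of $n$ terms each bounded by $C$ in absolute value and $C$ is a \emph{constant}, Hoeffding's inequality gives $|\widehat d_{ik}-d_{ik}(x^{*})|\le \epsilon' n$ with probability $1-n^{-\omega(1)}$ once $t=\Theta(\log n/\epsilon'^{2})$, and a union bound over the $n^{2}$ pairs $(i,k)$ makes this hold for all of them simultaneously. (The hypothesis that $C$ is independent of $n$ is precisely what keeps $t$ at $\Theta(\log n/\epsilon^{2})$.)

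I would then solve the linear assignment $\widetilde x\triangleq \arg\max_{x}\sum_{i,k}\widehat d_{ik}\, x_{ik}$, which has an integral optimum, and output the best $\widetilde x$ over all guesses, scored by the true quadratic objective. For the good guess, chaining the estimation error yields
\[
\textstyle\sum_{i,k} x^{*}_{ik} d_{ik}(x^{*}) \;\le\; \sum_{i,k} x^{*}_{ik}\widehat d_{ik}+\epsilon' n^{2} \;\le\; \sum_{i,k}\widetilde x_{ik}\widehat d_{ik}+\epsilon' n^{2} \;\le\; \sum_{i,k}\widetilde x_{ik} d_{ik}(x^{*})+2\epsilon' n^{2},
\]
so the \emph{bilinear} form evaluated at $(\widetilde x,x^{*})$ is within $2\epsilon' n^{2}$ of the optimum. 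The main obstacle --- and the technical core of \cite{cite:QAP} --- is passing from this cross term back to the genuine quadratic value $\sum_{i,j,k,l} q_{ijkl}\widetilde x_{ik}\widetilde x_{jl}$; this can be handled by a further re-estimation/rounding step in which the ``diagonal'' ($i=j$) terms contribute only $O(n)=o(\epsilon n^{2})$ and an independent-rounding concentration argument transfers the fractional value to an integral one. Choosing $\epsilon'$ a suitable constant multiple of $\epsilon$ then delivers $\sum_{ijkl} q_{ijkl}\widetilde x_{ik}\widetilde x_{jl}\ge \sum_{ijkl} q_{ijkl} x^{*}_{ijkl}-\epsilon n^{2}$ within the stated time bound; the remaining bookkeeping (the union bound, the exact assignment solve, the running-time accounting) is routine.
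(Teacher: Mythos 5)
This lemma is not proved in the paper at all: it is imported verbatim as Theorem~3 of the cited reference \cite{cite:QAP}, so there is no in-paper argument to compare yours against. Judged on its own terms, your sketch correctly identifies the standard exhaustive-sampling skeleton (random seed set $S$ of size $\Theta(\log n/\epsilon^2)$, enumeration of its $n^{|S|}$ placements, Hoeffding estimation of the linear coefficients $d_{ik}$, a linear assignment solve against the estimates), and your accounting of where the $n^{O(\log n/\epsilon^2)}$ running time and the role of the constant bound $C$ come from is accurate.

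However, the step you defer --- passing from the bilinear guarantee $\sum_{i,k}\widetilde x_{ik}\,d_{ik}(x^{*})\ge \mathrm{OPT}-2\epsilon' n^{2}$ to the genuine quadratic value $\sum_{i,j,k,l}q_{ijkl}\widetilde x_{ik}\widetilde x_{jl}$ --- is not ``remaining bookkeeping''; it is the entire content of the cited theorem, and the mechanism you propose for it does not work as stated. An ``independent-rounding concentration argument'' cannot be applied here because independently rounding the rows of a fractional solution destroys the permutation constraints $\sum_i x_{ij}=1$ and $\sum_j x_{ij}=1$; the reason \cite{cite:QAP} is titled a \emph{new rounding procedure for the assignment problem} is precisely that the authors had to construct a dependent rounding scheme that simultaneously (i) outputs a feasible assignment and (ii) preserves, up to additive $\epsilon n^{2}$, every linear form $\sum_{j,l}q_{ijkl}x_{jl}$ appearing in the quadratic objective. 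Without that procedure (or an equivalent substitute), your chain of inequalities only certifies the value of the cross term evaluated at the pair $(\widetilde x, x^{*})$, which is not the value of any solution your algorithm can output. Also, the objects $\widetilde x$ produced by your Hungarian-method step are already integral, so the concluding ``transfer the fractional value to an integral one'' does not attach to anything in your construction; the fractional object that actually needs rounding in the AFK argument arises from their LP relaxation, which your sketch omits. So the proposal is a reasonable outline of the sampling half of the argument, but it has a genuine gap exactly at the rounding half.
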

\vspace{-1.5mm}
The second stage can be completed by repeatedly traversing the solution $\{x\}$ to identify all the cycles of community assignment violations and reversing them. \textbf{Algorithm 1} illustrates a whole diagram of our proposed additive approximation algorithm.

\textbf{Approximation Guarantee: }By Lemma \ref{lemma:Arora}, QA-Rounding yields a solution whose value has a gap of less than $\epsilon n^2$ from the optimal. Combined with the equality $\sum_{i,j}w_{ij}=\Delta_{\pi}+\sum_{i,j,k,l}q_{ijkl}x_{ik}x_{jl}$ and the fact that the reversal of all the cycles of community assignment violations only incurs an increase on the value of the computed solution $\{x\}$, we have that the mapping $\pi$ given by \textbf{Algorithm 1} has an $\epsilon$-additive approximation guarantee and satisfies $c(i)=c(\pi(i))$ for all~$i$. Moreover, by Corollary \ref{corollary:accuracy}, we know that when $\epsilon,\delta$ satisfy the conditions in the corollary, the mappings yielded by \textbf{Algorithm 1} map at least $(1-\delta)n$ nodes correctly.

\vspace{-2mm}
\SetAlCapSkip{0.2em}
\begin{algorithm}[htbp]
	\begin{small}
		\begin{spacing}{0.9}
			\SetAlgoLined
			
			\label{algorithm:addaprox}		
			\KwIn{{Graphs $G_1,G_2$, weights $\{w\}$,}\\
				{$\ \ \quad\qquad$community assignment function $c$.}}
			\KwOut{\text{mapping $\pi$.}}
			\textbf{Initialize: $\pi=\emptyset$, $\forall i,j,k,l\in \{1\ldots n\}, x_{ijkl}=0$, $i',j'=0$}\\
			\text{Compute the set of coefficients $\{q\}_{ijkl}$ and}\\
			\text{form an instance $\mathcal{I}$ of $\mathbf{P2}$.} \\
			\text{$\{x\}:=$QA-Rounding($\mathcal{I}$).}\\
			\For{$i=1$ to $n$}{
				\For{$j=1$ to $n$}{
					\If{$x_{ij}=1$ and $c(i)\neq c(j)$}
					{ $x_{ij}:=0$.\\
						\While{$c(j')\neq c(i)$}{
							\text{Find $i',j'$ with $x_{i'j'}=1$ and $c(i')=c(j)$.}\\
							\text{$x_{i'j'}:=0,x_{i'j}:=1,j:=j'$.}\\
						}
						$x_{ij'}:=1.$\\
					}
				}}
				Construct $\pi$ based on $\{x\}$.\\
				Return {$\pi$}
			\end{spacing}
		\end{small}
		\caption{{The Additive Approximation Algorithm}}
	\end{algorithm}
	\vspace{-2mm}

\textbf{Time Complexity: }The QA-Rounding has a time complexity of $n^{O(\log n/\epsilon ^2)}$. The reversal of all the cycles can be completed in  $O(n^2)$ time when $\{x\}$ is represented in the form of an adjacency list-like structure. Based on those, the time complexity of \textbf{Algorithm 1} is $O(n^{O({\log n/\epsilon^2})}+n^2)$.
\vspace{-1mm}
\subsubsection{Convex Optimization-Based Heuristic}
\vspace{-1mm}
Beside the algorithm that provides additive approximation guarantee under general case, it is also useful to pursue algorithms that have stronger guarantee in special cases. In this section, we present one such algorithm that can find the optimal solution in the cases where the structural similarity between the two networks are higher than certain threshold.

The algorithm is based on convex optimization, which relies on a matrix formulation of the BI-MAP-ESTIMATE problem. The main idea is to first solve a convex-relaxed version of the matrix formulation and then convert the solution back to a legitimate one. Specifically, the matrix formulation of the BI-MAP-ESTIMATE problem, which we denote by P3, is formally stated as follows:
\belowdisplayskip=2pt
\vspace{-1mm}
\begin{align}
\mathbf{P3:}\quad\ \ \text{mininize }  \|\mathbf{W}\circ(\mathbf{A}-&\mathbf{\Pi}^\mathrm{T}\mathbf{B}\mathbf{\Pi})\|_\mathrm{F}^2+\mu\|\mathbf{\Pi m}-\mathbf{m}\|_\mathrm{F}^2\nonumber\\[-2pt]
\text{\textbf{s.t. }}  \forall i\in V_1,\ \textstyle\sum_{i}\mathbf{\Pi}_{ij}&=1\label{P3:constraint1}\\[-2pt]
\forall j\in V_2,\ \textstyle\sum_{j}\mathbf{\Pi}_{ij}&=1 \label{P3:constraint2}\\[-2pt]
\forall i,j,\ \mathbf{\Pi}_{ij}\in&\{0,1\},\label{P3:constraint3}
\end{align}
where $\mathbf{W}$ is a symmetric matrix with $\mathbf{W}_{ij}=\mathbf{W}_{ji}=\sqrt{w_{ij}}$, $\mathbf{m}$ represents the community assignment vector $(c(1),\ldots,c(n))^\mathrm{T}$, $\mu$ is a positive constant that is large enough, $\circ$ denotes the matrix Hadamard product with $(\mathbf{W\circ A})_{ij}=\mathbf{W}_{ij}\cdot \mathbf{A}_{ij}$ and $\|\cdot\|_\mathrm{F}$ represents the Frobenius norm.

Note that $\mathbf{P3}$ is equivalent to $\mathbf{P1}$ from the perspective of the relation between a mapping and its corresponding permutation matrix, as is stated in the following proposition.
\vspace{-3.5mm}
\begin{proposition}
		Given $G_1$, $G_2$, $c$ and $\{w\}$, the optimal solution of $\mathbf{P1}$ and $\mathbf{P3}$ are equivalent.
\end{proposition}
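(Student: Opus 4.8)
The plan is to show that, for $\mu$ large enough, the penalty $\mu\|\mathbf{\Pi m}-\mathbf m\|_\mathrm{F}^2$ acts as an \emph{exact} surrogate for the hard community constraint of $\mathbf{P1}$, while the remaining term of the $\mathbf{P3}$ objective equals $2\Delta_\pi$ on every community‑consistent permutation; the equivalence of the optimal solutions is then immediate.

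First I would record two pointwise identities. If $\mathbf{\Pi}$ is the permutation matrix of $\pi$, then $(\mathbf{\Pi m})_i=m_{\pi(i)}=c(\pi(i))$ while $m_i=c(i)$, so $\mathbf{\Pi m}=\mathbf m$ holds iff $c(\pi(i))=c(i)$ for all $i$, i.e. iff $\pi\in\Pi$; moreover $\mathbf{\Pi m}-\mathbf m$ has integer entries, so $\pi\notin\Pi$ forces $\|\mathbf{\Pi m}-\mathbf m\|_\mathrm{F}^2\ge 1$. Next, with the standard identification of $\mathbf{\Pi}$ with $\pi$, $\mathbf{\Pi}^\mathrm{T}\mathbf{B}\mathbf{\Pi}$ is $\mathbf B$ relabeled by $\pi$, so $(\mathbf A-\mathbf{\Pi}^\mathrm{T}\mathbf B\mathbf{\Pi})_{ij}=\mathbbm 1\{(i,j)\in E_1\}-\mathbbm 1\{(\pi(i),\pi(j))\in E_2\}\in\{-1,0,1\}$, whence its square equals its absolute value. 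Since $\mathbf W_{ij}^2=w_{ij}$, $\mathbf A,\mathbf B,\mathbf W$ are symmetric, and the graphs are self‑loop free, expanding the Frobenius norm and folding the symmetric double sum onto $i\le j$ gives $\|\mathbf W\circ(\mathbf A-\mathbf{\Pi}^\mathrm{T}\mathbf B\mathbf{\Pi})\|_\mathrm{F}^2=2\sum_{i\le j}w_{ij}\,|\mathbbm 1\{(i,j)\in E_1\}-\mathbbm 1\{(\pi(i),\pi(j))\in E_2\}|=2\Delta_\pi$ for every permutation $\pi$. (Under the transposed matrix convention one gets $2\Delta_{\pi^{-1}}$ instead; since $\Pi$ is closed under inversion this affects nothing in what follows.)

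Then I would fix $\mu$. As observed in the paper, $0<p_{ab}\le\beta<1$ makes every $w_{ij}>0$, and $\Pi\neq\emptyset$ since one may match the $n_a$ nodes of community $a$ in $G_1$ with the $n_a$ nodes of community $a$ in $G_2$ for each $a$. Hence the minimum of the $\mathbf{P3}$ objective over $\{\mathbf{\Pi}:\pi\in\Pi\}$ is at most $2\Delta_{\pi_1}\le 2\binom n2\overline w\le \overline w n^2$ for any such $\pi_1$, whereas any $\pi\notin\Pi$ yields $\mathbf{P3}$ objective at least $0+\mu\cdot 1=\mu$. Choosing $\mu>\overline w n^2$ — the ``large enough'' of the statement — thus excludes every non‑community‑consistent permutation from being $\mathbf{P3}$‑optimal. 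On the set $\{\mathbf{\Pi}:\pi\in\Pi\}$ the penalty vanishes, so by the second identity the $\mathbf{P3}$ objective there is exactly $2\Delta_\pi$, a fixed positive multiple of the $\mathbf{P1}$ objective; combined with the previous sentence this yields that $\arg\min\mathbf{P3}$ and $\arg\min\mathbf{P1}$ coincide under $\pi\leftrightarrow\mathbf{\Pi}$, and both are nonempty, which is the claim.

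The one genuinely delicate point I expect is fixing $\mu$: one must certify that a \emph{finite} constant enforces what is a priori a hard constraint, which rests on combining the integer lower bound $\|\mathbf{\Pi m}-\mathbf m\|_\mathrm{F}^2\ge1$ with a crude but valid a priori upper bound on the best community‑consistent objective. The rest — the matrix‑to‑sum identity, in particular the bookkeeping of the factor $2$ and the vanishing diagonal, and the harmless $\pi\mapsto\pi^{-1}$ convention issue — is routine, but must be handled carefully so that the stated equivalence is literal rather than merely up to constants.
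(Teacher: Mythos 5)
Your proposal is correct and follows essentially the same route as the paper: the penalty term forces any $\mathbf{P3}$-optimal permutation to observe the community assignment, and on community-consistent permutations the weighted Frobenius term reduces to (a fixed multiple of) $\Delta_\pi$. You are in fact more careful than the paper on two points it glosses over --- quantifying how large $\mu$ must be (your $\mu>\overline{w}n^2$ via the integrality bound $\|\mathbf{\Pi m}-\mathbf{m}\|_\mathrm{F}^2\ge 1$) and the factor of $2$ from the symmetric double sum, neither of which affects the argmin.
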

\vspace{-2.5mm}
\begin{proof}
	The proof is similar to that of Proposition \ref{proposition:QAP}. First, due to the existence of the penalty factor $\mu\|\mathbf{\Pi m}-\mathbf{m}\|_\mathrm{F}^2$, we have that the optimal solution of $\mathbf{P3}$ must observe the community assignment. Second, as for all the permutation matrices $\mathbf{\Pi}$'s and their corresponding mappings $\pi$'s that observe the community assignment, it is easy to show that $\Delta_{\pi}=\|\mathbf{W}\circ(\mathbf{A}-\mathbf{\Pi}^\mathrm{T}\mathbf{B}\mathbf{\Pi})\|_\mathrm{F}^2+\mu\|\mathbf{\Pi m}-\mathbf{m}\|_\mathrm{F}^2$ (the second term equals to 0 in this case). Hence, the optimal solution of $\mathbf{P1}$ and $\mathbf{P3}$ are equivalent.
\end{proof}
\vspace{-2mm}
Before introducing the algorithm, we further transform the objective function of $\mathbf{P3}$ into an equivalent but more tractable form. Lemma \ref{lemma:rewrite} gives the main idea of the transformation.
\vspace{-5mm}
\begin{lemma}\label{lemma:rewrite}
	Let $\mathbf{\tilde{A}}=\mathbf{W\circ A}$ and $\mathbf{\tilde{B}}=\mathbf{W\circ B}$ be the weighted adjacency matrices of $G_1$ and $G_2$ respectively, then for all permutation matrices that observe the community assignment\footnote{A permutation matrix $\mathbf{\Pi}$ observes community assignment if for all $\mathbf{\Pi}_{ij}=1$, $c(i)=c(j)$.},\normalsize the following equality holds:
	\begin{equation*}
	\|\mathbf{W\circ(A-\Pi^{\mathrm{T}}B\Pi)}\|_{\mathrm{F}}=\|\mathbf{\Pi\tilde{A}-B\tilde{\Pi}}\|_{\mathrm{F}}.\label{eq:convex}
	\end{equation*}
\end{lemma}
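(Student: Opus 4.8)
The plan is to reduce the claimed identity to the invariance of the Frobenius norm under multiplication by permutation matrices, after first exploiting the special structure of the weight matrix $\mathbf{W}$. The crucial observation is that each weight $w_{ij}=\log\big(\frac{1-p_{c(i)c(j)}(s_1+s_2-s_1s_2)}{p_{c(i)c(j)}(1-s_1)(1-s_2)}\big)$, and hence each entry $\mathbf{W}_{ij}=\sqrt{w_{ij}}$, depends on the pair $(i,j)$ only through the community labels $(c(i),c(j))$; in particular $\mathbf{W}_{ij}=\mathbf{W}_{i'j'}$ whenever $c(i)=c(i')$ and $c(j)=c(j')$.

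First I would show that, for a permutation matrix $\mathbf{\Pi}$ whose associated permutation $\pi$ observes the community assignment (so that $c(\pi(i))=c(i)$ for all $i$), the Hadamard weighting commutes with conjugation by $\mathbf{\Pi}$, i.e. $\mathbf{W}\circ(\mathbf{\Pi}^{\mathrm T}\mathbf{B}\mathbf{\Pi})=\mathbf{\Pi}^{\mathrm T}(\mathbf{W}\circ\mathbf{B})\mathbf{\Pi}=\mathbf{\Pi}^{\mathrm T}\tilde{\mathbf{B}}\mathbf{\Pi}$. This is a one-line entrywise check: $(\mathbf{\Pi}^{\mathrm T}\mathbf{B}\mathbf{\Pi})_{ij}=\mathbf{B}_{\pi(i)\pi(j)}$, so multiplying by $\mathbf{W}_{ij}$ and using $\mathbf{W}_{ij}=\mathbf{W}_{\pi(i)\pi(j)}$ (which follows from the community-observing property together with the above observation) gives $\tilde{\mathbf{B}}_{\pi(i)\pi(j)}=(\mathbf{\Pi}^{\mathrm T}\tilde{\mathbf{B}}\mathbf{\Pi})_{ij}$. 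Since $\mathbf{W}\circ\mathbf{A}=\tilde{\mathbf{A}}$ by definition and the Hadamard product distributes over subtraction, this already yields $\mathbf{W}\circ(\mathbf{A}-\mathbf{\Pi}^{\mathrm T}\mathbf{B}\mathbf{\Pi})=\tilde{\mathbf{A}}-\mathbf{\Pi}^{\mathrm T}\tilde{\mathbf{B}}\mathbf{\Pi}$.

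It then remains to match the norms. Permutation matrices are orthogonal, so left multiplication by $\mathbf{\Pi}$ preserves the Frobenius norm; applying this and using $\mathbf{\Pi}\mathbf{\Pi}^{\mathrm T}=\mathbf{I}$ gives $\|\tilde{\mathbf{A}}-\mathbf{\Pi}^{\mathrm T}\tilde{\mathbf{B}}\mathbf{\Pi}\|_{\mathrm F}=\|\mathbf{\Pi}(\tilde{\mathbf{A}}-\mathbf{\Pi}^{\mathrm T}\tilde{\mathbf{B}}\mathbf{\Pi})\|_{\mathrm F}=\|\mathbf{\Pi}\tilde{\mathbf{A}}-\tilde{\mathbf{B}}\mathbf{\Pi}\|_{\mathrm F}$, which is the right-hand side of the lemma. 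The squared form appearing in $\mathbf{P3}$ then follows by squaring both sides.

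I expect the only genuine obstacle to be the first step: recognizing that the community-observing hypothesis is \emph{exactly} what allows the weight matrix to pass through the conjugation by $\mathbf{\Pi}$; once $\mathbf{W}\circ(\mathbf{\Pi}^{\mathrm T}\mathbf{B}\mathbf{\Pi})=\mathbf{\Pi}^{\mathrm T}\tilde{\mathbf{B}}\mathbf{\Pi}$ is in hand, the rest is routine linear algebra. Minor care is needed only in tracking the direction of the permutation ($\pi$ versus $\pi^{-1}$) and in handling the symmetry of $\mathbf{W}$ and the treatment of diagonal entries consistently, but none of this affects the structure of the argument.
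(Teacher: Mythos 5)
Your proof is correct and follows essentially the same route as the paper's: both arguments rest on exactly the same two facts --- that the community-observing hypothesis together with the community-determined structure of $\mathbf{W}$ lets the Hadamard weighting commute with multiplication by $\mathbf{\Pi}$, and that permutation matrices preserve the Frobenius norm --- merely applied in the opposite order (the paper first inserts $\mathbf{\Pi}$ via norm invariance and then interchanges the Hadamard and matrix products, while you first establish $\mathbf{W}\circ(\mathbf{\Pi}^{\mathrm T}\mathbf{B}\mathbf{\Pi})=\mathbf{\Pi}^{\mathrm T}\tilde{\mathbf{B}}\mathbf{\Pi}$ and then invoke norm invariance). The right-hand side as printed in the lemma, $\|\mathbf{\Pi}\tilde{\mathbf{A}}-\mathbf{B}\tilde{\mathbf{\Pi}}\|_{\mathrm F}$, is a misplaced-tilde typo for $\|\mathbf{\Pi}\tilde{\mathbf{A}}-\tilde{\mathbf{B}}\mathbf{\Pi}\|_{\mathrm F}$, which is what you correctly derive and what the paper itself proves and uses.
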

\vspace{-2mm}
\begin{proof}
	We prove the lemma by repeatedly using the symmetry of $\mathbf{A}$ and $\mathbf{B}$ and special properties of $\mathbf{W}$ and~$\mathbf{\Pi}$. The detailed steps are presented as follows:
	\vspace{-1.5mm}
	\begin{small}
	\begin{align}
	\|\mathbf{W}\circ(\mathbf{A}-\mathbf{\Pi}^\mathrm{T}\mathbf{B}\mathbf{\Pi})\|_\mathrm{F}&=\|\mathbf{W}\circ(\mathbf{\Pi}(\mathbf{A}-\mathbf{\Pi}^\mathrm{T}\mathbf{B}\mathbf{\Pi}))\|_\mathrm{F}\label{eq:transform1}\\[-2pt]
	&=\|\mathbf{W}\circ(\mathbf{\Pi}\mathbf{A}-\mathbf{B}\mathbf{\Pi})\|_\mathrm{F}\label{eq:transform2}\\[-2pt]
	&=\| \mathbf{W}\circ(\mathbf{\Pi A})-\mathbf{W}\circ(\mathbf{B}\mathbf{\Pi}) \|_\mathrm{F}\label{eq:transform3}\\[-2pt]
	&=\| \mathbf{\Pi}(\mathbf{W}\circ\mathbf{{A}})-(\mathbf{W}\circ\mathbf{{B}})\mathbf{\Pi} \|_\mathrm{F}\label{eq:transform4}\\[-2pt]
	&=\|(\mathbf{\Pi}\mathbf{\tilde{A}}-\mathbf{\tilde{B}}\mathbf{\Pi})\|_\mathrm{F}.\label{eq:transform5}
	\end{align}
	\end{small}
	Note that Equation (\ref{eq:transform1}) holds because multiplying by a permutation matrix does not change the value of element-wise Frobenius norm. Equations (\ref{eq:transform2}), (\ref{eq:transform3}) and (\ref{eq:transform5}) hold due to the definition of Hadamard product and $\mathbf{\tilde{A},\tilde{B}}$. The validity of Equation (\ref{eq:transform4}) is less straightforward and can be interpreted in the following way: For the weight $w_{ij}$ of a node pair $(i,j)$, it is determined only by $p_{c(i)c(j)},s_1,s_2$. Therefore, if $c(i)=c(j),c(k)=c(l)$ for some nodes $i,j,k,l$, then we have $\mathbf{W}_{ik}=\mathbf{W}_{jl}$, i.e., the weight is invariant within communities. This crucial property, combined with the fact that $\mathbf{\Pi}$ is permutation matrix that observes the community assignment, makes the Hadamard products and normal matrix multiplication in Equation (\ref{eq:transform4}) interchangeable.
\end{proof}
\vspace{-1mm}
Based on Lemma \ref{lemma:rewrite}, we can rewrite the objective function of $\mathbf{P3}$ as $\|(\mathbf{\Pi}\mathbf{\tilde{A}}-\mathbf{\tilde{B}}\mathbf{\Pi})\|^2_\mathrm{F}+\mu\|\mathbf{\Pi m-m}\|^2_\mathrm{F}$. Then, we further relax constraints (\ref{P3:constraint2}) and (\ref{P3:constraint3}) in $\mathbf{P3}$ and obtain the optimization problem $\mathbf{P3'}$ that can be formulated as:
\belowdisplayskip=6pt
\vspace{-1mm}
\begin{equation*}
\begin{aligned}
\mathbf{P3'}\quad\ \ \text{minimize }  \|(\mathbf{\Pi}\mathbf{\tilde{A}}-\mathbf{\tilde{B}}\mathbf{\Pi})\|_\mathrm{F}^2&+\mu\|\mathbf{\Pi m}-\mathbf{m}\|_\mathrm{F}^2\qquad\\[-2pt]
\text{\textbf{s.t. }}  \forall i,\ \textstyle\sum_{i\in V_1}\mathbf{\Pi}_{ij}&=1\\[-2pt]
\end{aligned}
\end{equation*}
Obviously the objective function and the set of feasible solutions are both convex. Immediately we can conclude that $\mathbf{P3'}$ is a convex-relaxed version of $\mathbf{P3}$, which is stated in the following lemma.
\vspace{-1.5mm}
\begin{lemma}
	$\mathbf{P3'}$ is a convex optimization problem.
\end{lemma}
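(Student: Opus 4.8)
The plan is to verify directly that $\mathbf{P3'}$ fits the standard template of a convex program: a convex objective minimized over a convex feasible set. All the constraints of $\mathbf{P3'}$ that survive the relaxation of $\mathbf{P3}$ are linear equalities $\sum_{i\in V_1}\mathbf{\Pi}_{ij}=1$ (together with, if constraint~(\ref{P3:constraint3}) is relaxed to a box rather than dropped, the inequalities $0\le\mathbf{\Pi}_{ij}\le 1$), so the feasible region is an intersection of hyperplanes and halfspaces, hence an affine set (resp.\ a polyhedron), which is convex. The only real work is therefore to show that the objective
\[
f(\mathbf{\Pi})=\|\mathbf{\Pi}\mathbf{\tilde{A}}-\mathbf{\tilde{B}}\mathbf{\Pi}\|_\mathrm{F}^2+\mu\|\mathbf{\Pi m}-\mathbf{m}\|_\mathrm{F}^2
\]
is a convex function of $\mathbf{\Pi}$ on the space of $n\times n$ real matrices.

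First I would observe that $\mathbf{\Pi}\mapsto \mathbf{\Pi}\mathbf{\tilde{A}}-\mathbf{\tilde{B}}\mathbf{\Pi}$ is a \emph{linear} operator on $\mathbb{R}^{n\times n}$: left-multiplication by the fixed matrix $\mathbf{\tilde{B}}$ and right-multiplication by the fixed matrix $\mathbf{\tilde{A}}$ are each linear in $\mathbf{\Pi}$, and the difference of linear maps is linear; similarly $\mathbf{\Pi}\mapsto \mathbf{\Pi m}-\mathbf{m}$ is affine. Next, the squared Frobenius norm $g(X)=\|X\|_\mathrm{F}^2=\sum_{ij}X_{ij}^2$ is convex, being a sum of the convex functions $X_{ij}^2$ (equivalently, a positive semidefinite quadratic form). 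Since composing a convex function with an affine map preserves convexity, both $\mathbf{\Pi}\mapsto\|\mathbf{\Pi}\mathbf{\tilde{A}}-\mathbf{\tilde{B}}\mathbf{\Pi}\|_\mathrm{F}^2$ and $\mathbf{\Pi}\mapsto\|\mathbf{\Pi m}-\mathbf{m}\|_\mathrm{F}^2$ are convex, and because $\mu>0$ the objective $f$ is a nonnegative combination of convex functions, hence convex.

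Combining the two observations, $\mathbf{P3'}$ minimizes the convex function $f$ over a convex feasible set, which is by definition a convex optimization problem, as claimed. I do not anticipate a genuine obstacle here: the statement is essentially a consistency check that the relaxation from $\mathbf{P3}$ (in particular the rewriting from Lemma~\ref{lemma:rewrite}) really does linearize the matrix argument, and the only point that deserves a moment's care is confirming that $\mathbf{\Pi}\mapsto\mathbf{\Pi}\mathbf{\tilde{A}}-\mathbf{\tilde{B}}\mathbf{\Pi}$ is linear (not merely affine) in the matrix variable, after which convexity of the whole objective follows from the standard calculus of convex functions.
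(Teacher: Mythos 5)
Your proof is correct and matches the paper's reasoning, which simply asserts that the objective and feasible set are "obviously" convex after the rewriting of Lemma \ref{lemma:rewrite}; you supply the standard details (linear map composed with squared Frobenius norm, linear equality constraints) and correctly identify that the rewriting from $\|\mathbf{W}\circ(\mathbf{A}-\mathbf{\Pi}^{\mathrm{T}}\mathbf{B}\mathbf{\Pi})\|_{\mathrm{F}}$ to $\|\mathbf{\Pi}\tilde{\mathbf{A}}-\tilde{\mathbf{B}}\mathbf{\Pi}\|_{\mathrm{F}}$ is what makes the argument of the norm linear in $\mathbf{\Pi}$ and hence the objective convex.
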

\vspace{-1.5mm}
With all the prerequisites above, we are now ready to present our second convex optimization-based algorithm, which firstly solves for a fractional optimal solution of $\mathbf{P3'}$ and then projects that fractional solution into an integral permutation matrix (and its corresponding mapping). During the projection process, we use an $n$-dimensional array $Mapped$ to record the projected nodes and a set $Legal_i$ for each node $i$ to record the remaining legitimate nodes to which it can be mapped. The details are illustrated in \textbf{Algorithm~2}.

\textbf{Performance Guarantee: }Generally, \textbf{Algorithm 2} can not yield the optimal solution to the BI-MAP-ESTIMATE problem and the gap between its solution and the optimal one may be large. However, we will demonstrate that when the similarity between $G_1$ and $G_2$ are high enough, or equivalently, the difference between the weighted adjacency matrices $\mathbf{\tilde{A}}$ and $\mathbf{\tilde{B}}$ is sufficiently small, \textbf{Algorithm 2} is guaranteed to find the optimal mapping.
\vspace{-1.5mm}

\SetAlCapSkip{0.2em}
\begin{algorithm}[!htbp]
	\renewcommand\arraystretch{0.3}
	\begin{small}
		\begin{spacing}{0.9}
			\SetAlgoLined
			
			\label{algorithm:convexaprox}		
			\KwIn{{Graphs $G_1,G_2$, weights $\{w\}$,}\\
				{$\ \ \quad\qquad$community assignment function $c$.}}
			\KwOut{\text{mapping $\pi$.}}
			\textbf{Initialize: $Mapped[i]=0.$ $Legal_i=\emptyset$ for all $i$,}\\ \text{$\qquad\qquad\quad\pi=\emptyset$, $\mathbf{\Pi}^p,\mathbf{\Pi}^f=\bm{0}$.}\\
			\text{Compute the weight matrix $\mathbf{W}$ and}\\
			\text{form an instance $\mathcal{I}$ of $\mathbf{P3}$.} \\
			\text{Relax $\mathcal{I}$ into an instance $\mathcal{I}'$ of $\mathbf{P3}'$.}
			\text{$\mathbf{\Pi}^f:=$ the optimal (fractional) solution to ($\mathcal{I'}$).}\\
			\For{$i=1$ to $n$}{
				\text{$Legal_i:=\{k\mid Mapped[k]=0\mbox{ and }c(k)=c(i)\}$}
				\text{$j:=\arg\max_{k\in Legal_i}\mathbf{\Pi}^f_{ik}$}.\\
				\text{$\mathbf{\Pi}^p_{ij}:=1$. $Mapped[j]$:=1.}\\
			}
			Construct $\pi$ based on $\mathbf{\Pi}^p$.\\
			Return {$\pi$}
		\end{spacing}
	\end{small}
	\caption{Convex Optimization-Based Algorithm}
\end{algorithm}
\vspace{-5mm}

\begin{theorem}\label{theorem:convexoptimal}
	Let $\mathbf{\tilde{B}'}$ be a symmetric matrix that is related with $\mathbf{\tilde{A}}$ by a unique $\mathbf{\hat{\Pi}}$ that observes the community assignment, i.e., $\mathbf{\tilde{B}'=\hat{\Pi}A\hat{\Pi}^{\mathrm{T}}}$. Denote $\mathbf{\tilde{B}'=U\Lambda U^\mathrm{T}}$ as its unitary eigen-decomposition with $\epsilon_2\le\sum_j|\mathbf{U}_{ij}|\le\epsilon_1$ for all $i$. Define $\lambda_1,\lambda_2,\ldots,\lambda_n$ as the eigenvalues of $\mathbf{\tilde{B}'}$ with $\sigma=\max_i|\lambda_i|$ and $\delta\le |\lambda_i-\lambda_j|$ for all $i,j$.  Assume that there exists a matrix $\mathbf{R}$ that satisfies $\mathbf{\tilde{B}=\tilde{B}'+R}$. We denote $\mathbf{E=URU^\mathrm{T}}$ with $\|\mathbf{E}\|_\mathrm{F}=\xi$ and $\mathbf{M=m^\mathrm{T}m}$ with $\|\mathbf{M}\|_\mathrm{F}=M$. Let $\mathbf{\Pi}^p$ be the solution obtained by \textbf{Algorithm 2} and $\mathbf{\Pi^*}$ be the optimal solution. If
	\vspace{-1mm}
	\belowdisplayskip=0pt
	\abovedisplayskip=0pt
	\begin{small}
		\begin{equation*}
		(\sigma^2+1)\xi^2+\mu^2M^2\le \left[\frac{\delta^2}{(2\sqrt{n}+1)(1+\sqrt{n}\epsilon_1/\epsilon_2)(1+2\epsilon_1/\epsilon_2)}\right]^2,
		\vspace{-2mm}
	\end{equation*}
		\end{small}
		then $\mathbf{\Pi^p=\Pi^*}$.
\end{theorem}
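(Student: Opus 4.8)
The plan is to read the statement as a stability result for the convex relaxation $\mathbf{P3'}$ around the noiseless instance $\mathbf{R}=\mathbf{0}$, and then to propagate smallness of $\mathbf{R}$ through the projection step of \textbf{Algorithm 2}. First I would dispose of the noiseless case: when $\mathbf{\tilde{B}}=\mathbf{\tilde{B}'}=\mathbf{\hat{\Pi}}\mathbf{\tilde{A}}\mathbf{\hat{\Pi}}^{\mathrm{T}}$, the permutation $\mathbf{\hat{\Pi}}$ makes $\mathbf{\hat{\Pi}}\mathbf{\tilde{A}}-\mathbf{\tilde{B}}\mathbf{\hat{\Pi}}=\mathbf{0}$ and $\mathbf{\hat{\Pi}}\mathbf{m}=\mathbf{m}$, hence is a zero-cost feasible point of both $\mathbf{P3}$ and $\mathbf{P3'}$; the distinctness of the eigenvalues of $\mathbf{\tilde{B}'}$ (forced by $\delta\le|\lambda_i-\lambda_j|$) together with the uniqueness of $\mathbf{\hat{\Pi}}$ among community-observing permutations pins $\mathbf{\hat{\Pi}}$ down as the \emph{unique} minimizer of the convex program $\mathbf{P3'}$, so Algorithm 2 computes $\mathbf{\Pi}^f=\mathbf{\hat{\Pi}}$, the greedy projection returns it, and $\mathbf{\Pi}^p=\mathbf{\Pi}^*=\mathbf{\hat{\Pi}}$. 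Everything else is showing this picture is robust to a nonzero $\mathbf{R}$.

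For general $\mathbf{R}$ I would write the fractional optimum as $\mathbf{\Pi}^f=\mathbf{\hat{\Pi}}+\mathbf{\Delta}$ and set $\mathbf{\Gamma}=\mathbf{\Delta}\mathbf{\hat{\Pi}}^{\mathrm{T}}$, so that $\|\mathbf{\Delta}\|_{\mathrm{F}}=\|\mathbf{\Gamma}\|_{\mathrm{F}}$. Using $\mathbf{\tilde{A}}=\mathbf{\hat{\Pi}}^{\mathrm{T}}\mathbf{\tilde{B}'}\mathbf{\hat{\Pi}}$, a direct computation yields the identities $\mathbf{\Pi}^f\mathbf{\tilde{A}}-\mathbf{\tilde{B}}\mathbf{\Pi}^f=[\mathbf{\Gamma},\mathbf{\tilde{B}'}]\mathbf{\hat{\Pi}}-\mathbf{R}\mathbf{\Pi}^f$ and $\mathbf{\Pi}^f\mathbf{m}-\mathbf{m}=\mathbf{\Gamma}\mathbf{m}$, with $[\mathbf{X},\mathbf{Y}]=\mathbf{XY}-\mathbf{YX}$. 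Since $\mathbf{P3'}$ relaxes $\mathbf{P3}$ and $\mathbf{\hat{\Pi}}$ is $\mathbf{P3}$-feasible, the $\mathbf{P3'}$-objective at $\mathbf{\Pi}^f$ is at most its value at $\mathbf{\hat{\Pi}}$, which is $\|\mathbf{R}\mathbf{\hat{\Pi}}\|_{\mathrm{F}}^2=\|\mathbf{R}\|_{\mathrm{F}}^2=\|\mathbf{E}\|_{\mathrm{F}}^2=\xi^2$ (the penalty term at $\mathbf{\hat{\Pi}}$ vanishes; here one also uses that $\mu$ is large enough that the integral optimum $\mathbf{\Pi}^*$ is itself community-observing). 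Hence $\|[\mathbf{\Gamma},\mathbf{\tilde{B}'}]\mathbf{\hat{\Pi}}-\mathbf{R}\mathbf{\Pi}^f\|_{\mathrm{F}}\le\xi$ and $\mu\|\mathbf{\Gamma}\mathbf{m}\|^2\le\xi^2$, and isolating the cross term via $\|\mathbf{R}\|_{\mathrm{op}}\le\|\mathbf{R}\|_{\mathrm{F}}=\xi$, $\|\mathbf{\Pi}^f\|_{\mathrm{F}}\le\sqrt{n}+\|\mathbf{\Delta}\|_{\mathrm{F}}$ and $\sigma=\max_i|\lambda_i|$ gives a bound of order $(\sigma^2+1)\xi^2+\mu^2M^2$ on the quantities that matter.

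The crux is converting this into the statement that every entry of $\mathbf{\Delta}$ has magnitude below $\tfrac12$. Diagonalizing $\mathbf{\tilde{B}'}=\mathbf{U}\mathbf{\Lambda}\mathbf{U}^{\mathrm{T}}$ and putting $\mathbf{G}=\mathbf{U}^{\mathrm{T}}\mathbf{\Gamma}\mathbf{U}$, the commutator $[\mathbf{\Gamma},\mathbf{\tilde{B}'}]$ has entries $\mathbf{G}_{ij}(\lambda_j-\lambda_i)$, so $\|[\mathbf{\Gamma},\mathbf{\tilde{B}'}]\|_{\mathrm{F}}\ge\delta\,\|\mathbf{G}^{\mathrm{off}}\|_{\mathrm{F}}$ controls the off-diagonal block of $\mathbf{G}$ but is \emph{blind} to the component of $\mathbf{\Gamma}$ that is diagonal in the eigenbasis of $\mathbf{\tilde{B}'}$. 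That component has to be pinned down separately: the penalty term $\mu\|\mathbf{\Gamma}\mathbf{m}\|^2$ and the stochasticity constraint retained by $\mathbf{P3'}$ (which makes the relevant row/column sums of $\mathbf{\Gamma}$ vanish) are two families of linear constraints on $\mathbf{\Gamma}$, and pushing them through the eigenvector conditioning $\epsilon_2\le\sum_j|\mathbf{U}_{ij}|\le\epsilon_1$ bounds $\mathrm{diag}(\mathbf{G})$ too — this is precisely where the factors $(2\sqrt{n}+1)$, $(1+\sqrt{n}\epsilon_1/\epsilon_2)$ and $(1+2\epsilon_1/\epsilon_2)$ in the stated threshold come from. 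Assembling the estimates gives, up to constants, $\|\mathbf{\Delta}\|_{\mathrm{F}}^2\le\big((\sigma^2+1)\xi^2+\mu^2M^2\big)/\big[\delta^2/((2\sqrt{n}+1)(1+\sqrt{n}\epsilon_1/\epsilon_2)(1+2\epsilon_1/\epsilon_2))\big]^2$, so the hypothesis forces $\|\mathbf{\Delta}\|_{\mathrm{F}}<\tfrac12$ and hence $|\mathbf{\Delta}_{ij}|<\tfrac12$ for all $i,j$. I expect this step — extracting a tight bound on the eigenbasis-diagonal part of $\mathbf{\Gamma}$ from the penalty term and the single stochasticity constraint, and tracking all the $\epsilon_1/\epsilon_2$ and $\sqrt{n}$ factors — to be the main obstacle.

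It then remains to run the rounding and the integral-optimality arguments, both routine once $|\mathbf{\Delta}_{ij}|<\tfrac12$ everywhere. Writing $\hat{\pi}$ for the bijection of $\mathbf{\hat{\Pi}}$, this means $\mathbf{\Pi}^f_{i,\hat{\pi}(i)}>\tfrac12>\mathbf{\Pi}^f_{ik}$ for every $k\ne\hat{\pi}(i)$ with $c(k)=c(i)$; inducting on the rows $i=1,\dots,n$ processed by Algorithm 2, the node $\hat{\pi}(i)$ is still in $Legal_i$ (the earlier rows consumed $\hat{\pi}(1),\dots,\hat{\pi}(i-1)$, all distinct from $\hat{\pi}(i)$) and attains the maximum, so $\mathbf{\Pi}^p=\mathbf{\hat{\Pi}}$. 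Finally, for any community-observing permutation $\mathbf{\Pi}\ne\mathbf{\hat{\Pi}}$, the permutation $\mathbf{Q}=\mathbf{\Pi}\mathbf{\hat{\Pi}}^{\mathrm{T}}\ne\mathbf{I}$ cannot commute with $\mathbf{\tilde{B}'}$ (otherwise $\mathbf{Q}\mathbf{\hat{\Pi}}$ would be a second community-observing map relating $\mathbf{\tilde{A}}$ and $\mathbf{\tilde{B}'}$, contradicting uniqueness), so a companion commutator/eigengap estimate of the same flavor makes $\|\mathbf{\Pi}\mathbf{\tilde{A}}-\mathbf{\tilde{B}'}\mathbf{\Pi}\|_{\mathrm{F}}$ bounded away from $0$ by a multiple of $\delta$; combined with the triangle inequality and the smallness of $\mathbf{R}$ this forces $f(\mathbf{\Pi})>\xi^2=f(\mathbf{\hat{\Pi}})$, hence $\mathbf{\Pi}^*=\mathbf{\hat{\Pi}}=\mathbf{\Pi}^p$, which is the claim.
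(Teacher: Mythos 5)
Your proposal is sound in outline and shares the paper's three-act architecture (show the fractional optimum is entrywise within $1/2$ of $\mathbf{\hat{\Pi}}$, conclude the greedy projection returns $\mathbf{\hat{\Pi}}$, then argue $\mathbf{\hat{\Pi}}$ is also the integral optimum), but the engine of the central estimate is genuinely different. The paper works with $\mathbf{Q}=\mathbf{\Pi}\mathbf{\hat{\Pi}}^{\mathrm{T}}$, writes down the \emph{stationarity} conditions of the Lagrangian of $\mathbf{P3'}$, conjugates by $\mathbf{U}$ to get a linear system in $\mathbf{F}=\mathbf{U}^{\mathrm{T}}\mathbf{Q}\mathbf{U}$, views it as a perturbation of a system whose solution is $\mathbf{I}$, and invokes an external stability lemma for perturbed linear systems (Lemma \ref{lemma:perturb}) to bound $\|\mathbf{F}-\mathbf{I}\|_{\mathrm{F}}$. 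You instead use an \emph{objective-value comparison}: the relaxation's optimum is at most the cost $\xi^2$ of the feasible point $\mathbf{\hat{\Pi}}$, which via the commutator identity and the eigengap $\delta$ controls the off-diagonal of $\mathbf{U}^{\mathrm{T}}\mathbf{\Gamma}\mathbf{U}$, while the retained row-stochasticity constraint $\mathbf{\Gamma}\mathbf{1}=\mathbf{0}$ controls the diagonal exactly where the commutator is blind — this is the same role Equation (\ref{eq:condition2}) plays in the paper, and is where $\epsilon_1/\epsilon_2$ enters in both arguments. Your route is more elementary (no multipliers, no imported perturbation lemma), and your final step is cleaner: you lower-bound $\|\mathbf{\Pi}\mathbf{\tilde{A}}-\mathbf{\tilde{B}}\mathbf{\Pi}\|_{\mathrm{F}}$ directly for competing permutations, whereas the paper re-applies Lemma \ref{lemma:bounddifference} to the decomposition $\mathbf{\tilde{B}}=\mathbf{\Pi}'^{\mathrm{T}}\mathbf{A}\mathbf{\Pi}'+\mathbf{R}'$ and asserts without justification that $\mathbf{R}'$ still satisfies the hypotheses. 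Two caveats on your side: the inequality is self-referential (the term $\mathbf{R}\mathbf{\Pi}^f$ contributes $\xi\|\mathbf{\Delta}\|_{\mathrm{F}}$ because $\mathbf{P3'}$ drops nonnegativity, so $\|\mathbf{\Pi}^f\|_{\mathrm{F}}$ is not a priori $\sqrt{n}$), which closes only after verifying the coefficient of $\|\mathbf{\Delta}\|_{\mathrm{F}}$ is below $1$ under the hypothesis; and your route naturally yields a threshold scaling like $\delta$ where the theorem states $\delta^2$ (the paper's $\delta^{-4}$ arises from dividing the stationarity equations by $(\lambda_i-\lambda_j)^2$), so reproducing the exact stated constant would take additional bookkeeping. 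These are constant-level discrepancies in a sufficient condition, not structural gaps.
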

\vspace{-3.2mm}
\begin{proof}
	The proof is divided into three steps: (i) First, similar to the argument in \cite{cite:convex}, by constructing the Lagrangian function of $\mathbf{P3'}$ and setting its gradient to 0, we obtain the necessary conditions that the optimal fractional solution $\mathbf{\Pi}^f$ to $\mathbf{P3'}$ must satisfy; (ii) Then, combining these with the conditions stated in the theorem and the projection from $\mathbf{\Pi}^f$ to $\mathbf{\Pi}^p$, we show that $\mathbf{\Pi}^p=\mathbf{\hat{\Pi}}$; (iii) Finally, we prove that in this case $\mathbf{\hat{\Pi}=\mathbf{\Pi}^*}$, which concludes the~proof.
	
	
		1. Derivation of the Necessary Conditions: We start the first step with rewriting $\mathbf{P3'}$ as an optimization problem with respect to $\mathbf{Q=\Pi\hat{\Pi}^\mathrm{T}}$. Since
		\vspace{-0.7mm}
		\[
		\mathbf{\Pi \tilde{A}}-\mathbf{\tilde{B}\Pi}=(\mathbf{\Pi \hat{\Pi}^{\mathrm{T}}\tilde{B}'}-\mathbf{\tilde{B}\Pi\hat{\Pi}^{\mathrm{T}}})\mathbf{\hat{\Pi}}=(\mathbf{Q\tilde{B}'}-\mathbf{\tilde{B}Q})\mathbf{\hat{\Pi}},
		\]
		and 
		$
		\mathbf{\Pi m}-\mathbf{m}=(\mathbf{Qm}-\mathbf{m})\mathbf{\hat{\Pi}},
		$
		we can reformulate the objective function of $\mathbf{P3}'$ with $\mathbf{Q}$ as variable and divide it by two for ease of further manipulation as
		$
		\frac{1}{2}\|\mathbf{Q\tilde{B}}-\mathbf{BQ}\|^2_{\mathrm{F}}+\frac{\mu}{2}\|\mathbf{Q m}-\mathbf{m}\|_\mathrm{F}^2.
		$
		The constraint $\sum_{j}\mathbf{\Pi}_{ij}$ for all $i$ can be expressed as $\mathbf{Q1=1}$. The solution of the reformulated version can be associated with the original one by $\mathbf{\Pi=Q\hat{\Pi}}$. Next, by introducing multiplier $\bm{\alpha}$ for the equality constraint of $\mathbf{P3}'$, we construct its Lagrangian function as
		\belowdisplayshortskip=6pt
		\belowdisplayskip=6pt
		\vspace{-2mm}
		\begin{equation*}
		L(\mathbf{Q},\mathbf{\bm{\alpha}})=\frac{1}{2}\|\mathbf{Q\tilde{B}}-\mathbf{BQ}\|^2_{\mathrm{F}}+\frac{\mu}{2}\|\mathbf{Q m}-\mathbf{m}\|_\mathrm{F}^2+\mathrm{tr}(\mathbf{Q1}-\mathbf{1})\mathbf{\bm{\alpha}}^\mathrm{T}.	
			\vspace{-1.5mm}
		\end{equation*}
	
		The key element of the proof of the lemma is the sufficient conditions for $\mathbf{Q}$ to be the optimal (fractional) solution to $\mathbf{P3'}$.
		To yield the sufficient conditions, we take the gradient of $L(\mathbf{Q},\mathbf{\bm{\alpha}})$ with respect to $\mathbf{Q}$ and set it as 0. Then we have
		\belowdisplayshortskip=2pt
		\vspace{-1mm}
		\begin{small}
		\begin{equation*}
		\bigtriangledown_{\mathbf{Q}}L(\mathbf{Q},\mathbf{\bm{\alpha}})=\mathbf{QB}^2+\mathbf{\tilde{B}}^2\mathbf{Q}-2\mathbf{\tilde{B}}\mathbf{QB}+\mathbf{\bm{\alpha} 1}^\mathrm{T}+\mu(\mathbf{QM}-\mathbf{M})
		=\mathbf{0}.
		\vspace{-1.5mm}
		\end{equation*}
		
		\end{small}
		Multiplying $\mathbf{U}^\mathrm{T}$ to the left side of $\bigtriangledown_{\mathbf{Q}}L(\mathbf{Q},\mathbf{\bm{\alpha}})$ and $\mathbf{U}$ to the right side we get
		\vspace{-1mm}
		\begin{equation*}
		\begin{aligned}
		(\mathbf{F\Lambda}^2+\mathbf{\Lambda^2F}-2\mathbf{\Lambda F\Lambda})&+(\mathbf{FE\Lambda}+\mathbf{F\Lambda E}-2\mathbf{\Lambda FE})\\[-3pt]&+\mathbf{\bm{\gamma} v}^{\mathrm{T}}+\mathbf{FG}+\mu\mathbf{FM'}-\mu\mathbf{M'}=\mathbf{0},
		\end{aligned}
		\vspace{-1.5mm}
		\end{equation*}
		where $\mathbf{F}=\mathbf{U}^\mathrm{T}\mathbf{Q}\mathbf{U}$, $\mathbf{v}=\mathbf{U}^\mathrm{T}\mathbf{1}$, $\mathbf{\bm{\gamma}=U^\mathrm{T}\bm{\alpha}}$, $\mathbf{G=E}^2$ and $\mathbf{M'=U^\mathrm{T}MU}$. 
		
		Rewriting the equation coordinate-wise, we have
		\vspace{-1mm}
			\begin{equation*}
		\begin{aligned}
		\mathbf{F}_{ij}(&\lambda_i-\lambda_j)^2+\mathbf{v}_j\bm{\gamma}_i-\mu \mathbf{M}'_{ij}\nonumber\\[-2pt]&+\textstyle\sum_{k}\mathbf{F}_{ik}(\mathbf{E}_{kj}(\lambda_j+\lambda_k-2\lambda_i)+\mathbf{G}_{kj}+\mu\mathbf{M'}_{kj})=0\label{eq:element}
		\end{aligned}
		\vspace{-1.5mm}
		\end{equation*}
		Substituting $i=j$ into the above equation and plugging the results back to eliminate variables $\bm{\gamma}_i$'s, it follows that
		\vspace{-1mm}
%
\begin{small}
		\begin{equation*}
		\begin{aligned}
		\mathbf{F}_{ij}\mathbf{v}_i(\lambda_i-&\lambda_j)^2+\textstyle\sum_k\mathbf{F}_{ik}(\mathbf{v}_i\mathbf{G}_{kj}-\mathbf{v}_j\mathbf{G}_{ki}+\mu \mathbf{v}_i\mathbf{M'}_{kj}-\mu \mathbf{v}_j\mathbf{M'}_{ki})\\[-2pt]&+\textstyle\sum_k\mathbf{F}_{ik}(\mathbf{v}_i\,\mathbf{E}_{kj}(\lambda_j+\lambda_k-2\lambda_i)-\mathbf{v}_j\mathbf{E}_{kj}(\lambda_k-\lambda_i))\\[-2pt]&+\mu (\mathbf{v}_j\mathbf{M'}_{ii}-\mathbf{v}_i\mathbf{M'}_{ij})=0
		\end{aligned}
		\end{equation*}
		\vspace{-2mm}
		\end{small}
		
		We further define the following variables
		\vspace{-1mm}
		\begin{small}
		\begin{equation*}
		\begin{aligned}
		r_{ij}&=\frac{\mu}{(\lambda_i-\lambda_j)^2}(\mathbf{v}_j\mathbf{M'}_{ii}-\mathbf{v}_i\mathbf{M'}_{ij})\\[-2pt]
		s_{jk}^i &=\frac{1}{(\lambda_i-\lambda_j)^2}\left(\mathbf{E}_{kj}(\lambda_j+\lambda_k-2\lambda_i)-\frac{\mathbf{v}_j}{\mathbf{v}_i}\mathbf{E}_{ki}(\lambda_k-\lambda_i)\right)\\[-2pt]
		t_{jk}^i&=\frac{1}{(\lambda_i-\lambda_j)^2}\left(\mathbf{G}_{kj}-\frac{\mathbf{v}_j}{\mathbf{v}_i}\mathbf{G}_{ki}\right)\\[-2pt]
		w_{jk}^{i}&=\frac{\mu}{(\lambda_i-\lambda_{j})^2} \left(\mathbf{M'}_{kj}-\frac{\mathbf{v}_j}{\mathbf{v}_i}\mathbf{M'}_{ki}\right),
		\end{aligned}
		\vspace{-1.5mm}
		\end{equation*}
		\end{small}
		for $i\neq j$. And $s_{ik}^j=t_{ik}^j=w_{jk}^i=r_{ij}=0$ for $i=j$.
		Then, we arrive at the following linear system
\vspace{-1mm}
		\begin{align}
		\mathbf{F}_{ij}+\textstyle\sum_{k}\mathbf{F}_{ik}&(s_{jk}^i+t_{jk}^i+w_{jk}^i+\frac{r_{ij}}{n})=0,\quad i\neq j \label{eq:condition1}\\[-2pt]
		\textstyle\sum_{k}\mathbf{F}_{ik}\mathbf{v}_k&=\mathbf{v}_i\label{eq:condition2},
		\end{align}
		where the second set of equations come from the constraint $\mathbf{Q1}=\mathbf{1}$. Equations (\ref{eq:condition1}) and (\ref{eq:condition2}) represent conditions that the optimal solution $\mathbf{Q}$ (or equivalently $\mathbf{F}$) needs to satisfy. 
		
		2. The Equivalence of $\mathbf{\Pi}^p$ and $\mathbf{\hat{\Pi}}$: Based on the conditions above, we move on to the second step. Recall that in this step our goal is to prove that $\mathbf{\Pi}^p$, which is a projection of the optimal fractional solution $\mathbf{\Pi}^f$, equals to $\mathbf{\hat{\Pi}}$.
		We formalize this notion in Lemma \ref{lemma:bounddifference}, the proof of which carries on the main idea of the second~step.
		\vspace{-1.5mm}
		\begin{lemma}\label{lemma:bounddifference}
			Let $\mathbf{\Pi}^p$ be the solution computed by \textbf{Algorithm 2} and $\mathbf{\hat{\Pi}}$ be defined as in Theorem \ref{theorem:convexoptimal}. Under the conditions stated in the theorem, $\mathbf{\Pi}^p=\mathbf{\hat{\Pi}}$.
		\end{lemma}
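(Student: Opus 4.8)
The plan is to reduce the claim to showing that the transformed optimal fractional solution lies within Frobenius distance $\tfrac12$ of the identity, and then to observe that the greedy projection in \textbf{Algorithm 2} necessarily recovers $\hat{\mathbf{\Pi}}$ once this holds. Set $\mathbf{Q}^f=\mathbf{\Pi}^f\hat{\mathbf{\Pi}}^{\mathrm T}$ and $\mathbf{F}=\mathbf{U}^{\mathrm T}\mathbf{Q}^f\mathbf{U}$, so that $\mathbf{F}=\mathbf{I}$ corresponds exactly to $\mathbf{\Pi}^f=\hat{\mathbf{\Pi}}$; since $\mathbf{U}$ and $\hat{\mathbf{\Pi}}$ are orthogonal, $\|\mathbf{\Pi}^f-\hat{\mathbf{\Pi}}\|_{\mathrm F}=\|(\mathbf{Q}^f-\mathbf{I})\hat{\mathbf{\Pi}}\|_{\mathrm F}=\|\mathbf{Q}^f-\mathbf{I}\|_{\mathrm F}=\|\mathbf{F}-\mathbf{I}\|_{\mathrm F}$. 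If this quantity is $<\tfrac12$, then, writing $\hat\pi(i)$ for the unique column with $\hat{\mathbf{\Pi}}_{i,\hat\pi(i)}=1$, we get $\mathbf{\Pi}^f_{i,\hat\pi(i)}\ge 1-\tfrac12>\tfrac12\ge\mathbf{\Pi}^f_{ij}$ for every $j\neq\hat\pi(i)$, and a short induction on the row index shows that \textbf{Algorithm 2} selects $\hat\pi(i)$ at step $i$: the column $\hat\pi(i)$ is still in $Legal_i$ because it carries community label $c(\hat\pi(i))=c(i)$ (as $\hat{\mathbf{\Pi}}$ observes the community assignment) and has not yet been used --- by the inductive hypothesis the columns fixed in steps $1,\ldots,i-1$ are $\hat\pi(1),\ldots,\hat\pi(i-1)$, none equal to $\hat\pi(i)$ --- and among the legal columns it has the largest entry in row $i$ of $\mathbf{\Pi}^f$. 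Hence $\mathbf{\Pi}^p=\hat{\mathbf{\Pi}}$.

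So it remains to prove $\|\mathbf{F}-\mathbf{I}\|_{\mathrm F}<\tfrac12$ under the hypothesis of Theorem~\ref{theorem:convexoptimal}. Write $\mathbf{\Delta}=\mathbf{F}-\mathbf{I}$ and substitute into the optimality conditions (\ref{eq:condition1})--(\ref{eq:condition2}) derived in Step~1 of the proof of the theorem. Because $\mathbf{F}=\mathbf{I}+\mathbf{\Delta}$, the $k=i$ term of the sum in (\ref{eq:condition1}) splits off as a constant forcing term, so for each row $i$ and each $j\neq i$
\begin{equation*}
\mathbf{\Delta}_{ij}+\textstyle\sum_k\mathbf{\Delta}_{ik}\bigl(s_{jk}^i+t_{jk}^i+w_{jk}^i+r_{ij}/n\bigr)=-\bigl(s_{ji}^i+t_{ji}^i+w_{ji}^i+r_{ij}/n\bigr),
\end{equation*}
while (\ref{eq:condition2}) becomes the homogeneous relation $\sum_k\mathbf{\Delta}_{ik}\mathbf{v}_k=0$, equivalently $\mathbf{\Delta}_{ii}=-\mathbf{v}_i^{-1}\sum_{k\neq i}\mathbf{\Delta}_{ik}\mathbf{v}_k$. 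Substituting this expression for $\mathbf{\Delta}_{ii}$ back, each row $\mathbf{\Delta}_i$ becomes the solution of an $(n-1)\times(n-1)$ linear system $(\mathbf{I}+\mathbf{N}^{(i)})\mathbf{\Delta}_i=\mathbf{b}^{(i)}$, whose coefficient matrix and right-hand side are built entirely from $s_{jk}^i,t_{jk}^i,w_{jk}^i,r_{ij}$.

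The heart of the argument is estimating $\mathbf{N}^{(i)}$ and $\mathbf{b}^{(i)}$. Using $|\mathbf{E}_{kj}|\le\|\mathbf{E}\|_{\mathrm F}=\xi$ and $\sum_k|\mathbf{E}_{kj}|\le\sqrt n\,\xi$, $\|\mathbf{G}\|_{\mathrm F}=\|\mathbf{E}^2\|_{\mathrm F}\le\xi^2\le\xi$ (the hypothesis forces $\xi<1$), $|\lambda_i|\le\sigma$ and $(\lambda_i-\lambda_j)^2\ge\delta^2$, $|\mathbf{M}'_{ij}|\le\|\mathbf{M}'\|_{\mathrm F}=\|\mathbf{M}\|_{\mathrm F}=M$, and $\epsilon_2\le|\mathbf{v}_i|\le\epsilon_1$ (hence $|\mathbf{v}_j/\mathbf{v}_i|\le\epsilon_1/\epsilon_2$), one bounds the $\ell_1$-norm of each row of $\mathbf{N}^{(i)}$ and the $\ell_\infty$-norm of $\mathbf{b}^{(i)}$ by a constant multiple of $\delta^{-2}\sqrt{(\sigma^2+1)\xi^2+\mu^2M^2}$ times $(2\sqrt n+1)$, $(1+\sqrt n\,\epsilon_1/\epsilon_2)$ and $(1+2\epsilon_1/\epsilon_2)$; the inequality assumed in the theorem is exactly what makes this product strictly less than $\tfrac12$. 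Consequently $\|\mathbf{N}^{(i)}\|_\infty<1$, so $\mathbf{I}+\mathbf{N}^{(i)}$ is invertible --- which also gives uniqueness of the stationary point, hence of the minimizer $\mathbf{\Pi}^f$, of the convex problem $\mathbf{P3}'$ --- and a Neumann-series bound gives $\|\mathbf{\Delta}_i\|_\infty\le\|\mathbf{b}^{(i)}\|_\infty/(1-\|\mathbf{N}^{(i)}\|_\infty)$. Feeding this back through $\mathbf{\Delta}_{ii}=-\mathbf{v}_i^{-1}\sum_{k\neq i}\mathbf{\Delta}_{ik}\mathbf{v}_k$ and summing over all entries yields $\|\mathbf{\Delta}\|_{\mathrm F}=\|\mathbf{F}-\mathbf{I}\|_{\mathrm F}<\tfrac12$, which by the first paragraph completes the proof.

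The main obstacle is the coefficient estimate in the last paragraph: getting the bounds on $s_{jk}^i,t_{jk}^i,w_{jk}^i,r_{ij}$ to assemble into precisely the constant $(2\sqrt n+1)(1+\sqrt n\,\epsilon_1/\epsilon_2)(1+2\epsilon_1/\epsilon_2)$ appearing in the hypothesis requires careful bookkeeping --- tracking which sums pick up a factor $\sqrt n$ (column sums of $\mathbf{E},\mathbf{G},\mathbf{M}'$ versus the single off-diagonal forcing entries and the $r_{ij}/n$ term), handling the $\mathbf{v}_j/\mathbf{v}_i$ ratios uniformly, using $\xi<1$ to fold $\mathbf{G}=\mathbf{E}^2$ into the first-order part, grouping the $\mathbf{E}$-, $\mathbf{G}$- and $\mathbf{M}'$-contributions via Cauchy--Schwarz into $\sqrt{(\sigma^2+1)\xi^2+\mu^2M^2}$, and accounting for the extra $(1+2\epsilon_1/\epsilon_2)$ factor incurred when $\mathbf{\Delta}_{ii}$ is eliminated --- all without losing the factor $2$ that the rounding step needs. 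The remaining ingredients (the orthogonality reductions, the Neumann bound, and the induction in the rounding argument) are routine.
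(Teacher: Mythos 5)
Your proposal follows essentially the same route as the paper's proof: both reduce the claim to showing $\|\mathbf{F}-\mathbf{I}\|_{\mathrm F}<\tfrac{1}{2}$ by treating the stationarity conditions of $\mathbf{P3'}$ as a perturbation of the trivial linear system whose solution is the identity, bound that perturbation in terms of $\delta,\sigma,\xi,\epsilon_1,\epsilon_2,\mu,M$, and then argue that entry-wise closeness of $\mathbf{\Pi}^f$ to $\hat{\mathbf{\Pi}}$ forces the greedy community-respecting projection to output $\hat{\mathbf{\Pi}}$. The only differences are cosmetic --- you use a row-wise Neumann-series bound where the paper invokes a cited linear-system stability theorem on the stacked $n^2\times n^2$ system, and you make the rounding induction explicit --- and, like the paper's main text, you leave the detailed coefficient bookkeeping (the paper's Appendix C) sketched rather than fully carried out.
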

		\vspace{-3mm}
		\begin{proof}
		As the optimal fractional solution $\mathbf{\Pi}^f=\mathbf{Q\hat{\Pi}}$, we first show that $\mathbf{Q}$ (or $\mathbf{F}$) is sufficiently close to the identity matrix $\mathbf{I}$, from which using the property of the projection process we obtain that $\mathbf{\Pi}^p$ is identical to $\mathbf{\hat{\Pi}}$. We achieve this by treating linear system consisting of Equations (\ref{eq:condition1}) and (\ref{eq:condition2}) as a perturbed version of
		\vspace{-1.5mm}
\begin{align}
\mathbf{F}_{ij}&=0,\quad i\neq j\\[-3pt]
\textstyle\sum_{k}\mathbf{F}_{ik}\mathbf{v}_k&=\mathbf{v}_i,
\end{align}
	the solution of which is clearly $\mathbf{I}$. Then using the results from stability of perturbed linear system \cite{cite:perturb} that is presented in Lemma \ref{lemma:perturb} below and the conditions in Theorem \ref{theorem:convexoptimal}, we can bound the difference between $\mathbf{F}$ and $\mathbf{I}$.
\vspace{-2mm}
\begin{lemma} (Theorem 1 in \cite{cite:perturb})\label{lemma:perturb}
	\ Let $\|\cdot\|$ be any $p$-norm. For two linear systems $\mathbf{Dx=b}$ and $\tilde{\mathbf{D}}\mathbf{x}=\tilde{\mathbf{b}}$, let $\mathbf{x}_0$ and $\mathbf{x}$ be their solutions, if $\|\mathbf{D-\tilde{D}}\|\|\mathbf{D}^{-1}\|<1$, then we have
	\[
	\frac{\|\mathbf{x}-\mathbf{x_0}\|}{\|\mathbf{x}_0\|}\le \frac{\|\mathbf{D}\|\|\mathbf{D}^{-1}\|}{1-\|\mathbf{D}-\tilde{\mathbf{D}}\|\|\mathbf{D}^{-1}\|}\left\lbrace\frac{\|\mathbf{D}-\tilde{\mathbf{D}}\|}{\|\mathbf{D}\|}+\frac{\|\mathbf{b}-\tilde{\mathbf{b}}\|}{\mathbf{b}}\right\rbrace.
	\]
\end{lemma}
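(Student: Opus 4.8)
The plan is to derive Lemma~\ref{lemma:perturb} from the elementary bootstrap argument underlying the classical condition-number estimate for perturbed linear systems; since the statement is quoted as Theorem~1 of \cite{cite:perturb}, only this short self-contained derivation is really needed. First I would set $\Delta\mathbf{D}:=\tilde{\mathbf{D}}-\mathbf{D}$ and $\Delta\mathbf{b}:=\tilde{\mathbf{b}}-\mathbf{b}$, so that $\mathbf{D}\mathbf{x}_0=\mathbf{b}$ and $(\mathbf{D}+\Delta\mathbf{D})\mathbf{x}=\mathbf{b}+\Delta\mathbf{b}$, noting that $\mathbf{D}$ is invertible because $\|\mathbf{D}^{-1}\|$ appears in the hypothesis. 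Subtracting the first identity from the second and moving the perturbation term to the right gives $\mathbf{D}(\mathbf{x}-\mathbf{x}_0)=\Delta\mathbf{b}-\Delta\mathbf{D}\,\mathbf{x}$, hence $\mathbf{x}-\mathbf{x}_0=\mathbf{D}^{-1}\bigl(\Delta\mathbf{b}-\Delta\mathbf{D}\,\mathbf{x}\bigr)$.

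Next I would take the (submultiplicative, vector-consistent) $p$-norm on both sides to get $\|\mathbf{x}-\mathbf{x}_0\|\le\|\mathbf{D}^{-1}\|\bigl(\|\Delta\mathbf{b}\|+\|\Delta\mathbf{D}\|\,\|\mathbf{x}\|\bigr)$, and then break the implicit dependence on $\mathbf{x}$ with the triangle inequality $\|\mathbf{x}\|\le\|\mathbf{x}_0\|+\|\mathbf{x}-\mathbf{x}_0\|$. This produces a self-referential inequality in $\|\mathbf{x}-\mathbf{x}_0\|$; collecting like terms and dividing by $1-\|\mathbf{D}-\tilde{\mathbf{D}}\|\,\|\mathbf{D}^{-1}\|$, which is strictly positive by the hypothesis $\|\mathbf{D}-\tilde{\mathbf{D}}\|\,\|\mathbf{D}^{-1}\|<1$, yields
\[
\|\mathbf{x}-\mathbf{x}_0\|\le\frac{\|\mathbf{D}^{-1}\|}{1-\|\mathbf{D}-\tilde{\mathbf{D}}\|\,\|\mathbf{D}^{-1}\|}\bigl(\|\Delta\mathbf{b}\|+\|\Delta\mathbf{D}\|\,\|\mathbf{x}_0\|\bigr).
\]
Finally I would normalize by $\|\mathbf{x}_0\|$ and use $\|\mathbf{b}\|=\|\mathbf{D}\mathbf{x}_0\|\le\|\mathbf{D}\|\,\|\mathbf{x}_0\|$, i.e. $1/\|\mathbf{x}_0\|\le\|\mathbf{D}\|/\|\mathbf{b}\|$, to rewrite $\|\Delta\mathbf{b}\|/\|\mathbf{x}_0\|$ as $\|\mathbf{D}\|\,\|\mathbf{b}-\tilde{\mathbf{b}}\|/\|\mathbf{b}\|$ and to factor $\|\Delta\mathbf{D}\|=\|\mathbf{D}\|\cdot\|\mathbf{D}-\tilde{\mathbf{D}}\|/\|\mathbf{D}\|$ out of the remaining term; pulling the common condition-number factor $\|\mathbf{D}\|\,\|\mathbf{D}^{-1}\|$ out front reproduces verbatim the displayed bound.

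There is no genuine obstacle here; the only points requiring care are (i) that $\|\cdot\|$ is a norm consistent with an induced matrix norm, so that $\|\mathbf{D}^{-1}\mathbf{v}\|\le\|\mathbf{D}^{-1}\|\,\|\mathbf{v}\|$ and $\|\mathbf{D}\mathbf{x}_0\|\le\|\mathbf{D}\|\,\|\mathbf{x}_0\|$ both hold, which is exactly the ``any $p$-norm'' hypothesis, and (ii) that $\mathbf{b}\neq\mathbf{0}$ (hence $\mathbf{x}_0\neq\mathbf{0}$) so that the relative errors are well defined; I would also flag the harmless typo whereby the last denominator should read $\|\mathbf{b}\|$ rather than $\mathbf{b}$. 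In the paper this lemma is then simply invoked as a black box to bound, in the proof of Lemma~\ref{lemma:bounddifference}, the deviation of $\mathbf{F}$ from $\mathbf{I}$ by viewing the linear system (\ref{eq:condition1})--(\ref{eq:condition2}) as a perturbation of the unperturbed system whose solution is the identity.
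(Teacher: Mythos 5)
Your derivation is correct: it is the standard bootstrap argument that turns $\mathbf{x}-\mathbf{x}_0=\mathbf{D}^{-1}(\Delta\mathbf{b}-\Delta\mathbf{D}\,\mathbf{x})$ into the condition-number bound by absorbing the self-referential term under the hypothesis $\|\mathbf{D}-\tilde{\mathbf{D}}\|\,\|\mathbf{D}^{-1}\|<1$ and then normalizing with $\|\mathbf{b}\|\le\|\mathbf{D}\|\,\|\mathbf{x}_0\|$. The paper itself offers no proof of this lemma --- it is quoted verbatim as Theorem~1 of the cited reference and used as a black box --- so there is nothing to compare against; your self-contained argument fills that gap faithfully, and your flagged caveats (consistency of the vector and induced matrix norms, $\mathbf{b}\neq\mathbf{0}$, and the typo $\mathbf{b}$ versus $\|\mathbf{b}\|$ in the displayed denominator) are all apt.
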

		Denoting by $\mathbf{f}=(\mathbf{F}_{11},\ldots,\mathbf{F}_{1n},\ldots,\mathbf{F}_{n1},\ldots,\mathbf{F}_{nn})^\mathrm{T}$ the row stack vector representation of $\mathbf{F}$, we can rewrite the perturbed system as 
		$\mathbf{(D+N)f=b},\label{linear:perturb}$
		and the original unperturbed system as 
	$	\mathbf{Df=b},\label{linear:unperturb}$
		with $\mathbf{D}=\mathrm{diag}\{\mathbf{D_1},\ldots,\mathbf{D}_n\}$ being an $n^2\times n^2$ block-diagonal matrix, where each $\mathbf{D}_i$ is an $n\times n$ block consisting of identity matrix with the $i$th row replaced by vector $\mathbf{v}^\mathrm{T}$. $\mathbf{N}$ is also an $n^2\times n^2$ block-diagonal matrix with the $n\times n$ blocks $\mathbf{N}_i$ being a matrix with elements $(\mathbf{N}_i)_{jk}=s_{jk}^i+t_{jk}^i+w_{jk}^i+{r_{ij}}/{n}$. And $\mathbf{b}$ is an $n^2\times 1$ vector with the $[(i-1)(n+1)+1]$-st element as $v_i$ and other element as $0$. 
		Using Lemma \ref{lemma:perturb} on the perturbed system and the unperturbed one with $\|\cdot\|$ taken as 2-norm (Euclidean norm), we obtain that
		\vspace{-2mm}
			\begin{align}
			\|\mathbf{f-f}_0\|\le 
			\|\mathbf{f}_0\|\frac{\|\mathbf{D}^{-1}\|\|\mathbf{N}\|}{1-\|\mathbf{D}^{-1}\|\|\mathbf{N}\|}, \label{eq:matrixbound}
			\end{align}
		where $\mathbf{f}_0$ is the row stack vector representation of $\mathbf{I}$. Therefore, to derive the upper bound for the difference between $\mathbf{F}$ and $\mathbf{I}$, we need to further upper bound the RHS of Inequality (\ref{eq:matrixbound}). The technique we use here is harnessing the special structure of $\mathbf{D}$ and $\mathbf{N}$ so that we can derive bounds for $\|\mathbf{D}^{-1}\|$ and $\|\mathbf{N}\|$, which are represented in functions of variables $\{s\},\{t\},\{w\}$ and $\{r\}$. By further associating the variables with the spectral parameters $\delta,\epsilon_1,\epsilon_2,$ etc. defined in the theorem, we yield an upper bound for the RHS of Inequality (\ref{eq:matrixbound}) that depends on those spectral parameters. Due to the space limitations, we defer the detailed derivation of the upper bound to \textbf{Appendix \ref{app:upperbound}}.

		Based on the upper bound, we have that if the conditions in the theorem are satisfied, then 
		\begin{align*}
		\|\mathbf{F-I}\|_\mathrm{F}=\|\mathbf{f-f_0}\|\le \frac{1}{2}.
		\end{align*}
		Since $\|\mathbf{\Pi}^f-\mathbf{\hat{\Pi}}\|_{\mathrm{F}}=\|\mathbf{Q\hat{\Pi}-\hat{\Pi}}\|_{\mathrm{F}}=\|\mathbf{(Q-I)\hat{\Pi}}\|_{\mathrm{F}}=\mathbf{\|Q-I\|_{\mathrm{F}}}=\mathbf{\|F-I\|_{\mathrm{F}}}\le 1/2$, the entry-wise difference between $\mathbf{\Pi}^f$ and $\mathbf{\hat{\Pi}}$ is less than $1/2$. Thus, the projection process in \textbf{Algorithm 2} is bound to project $\mathbf{\Pi}_f$ as $\mathbf{\hat{\Pi}}$, which concludes the second step, i.e., the proof of Lemma \ref{lemma:bounddifference}.
		\end{proof}
		\vspace{-2mm}
		
		3. Optimality of $\mathbf{\hat{\Pi}}$: Now we proceed to the final step and prove that $\mathbf{\hat{\Pi}=\Pi}^*$ by contradiction. If there exists some permutation matrix $\mathbf{\Pi}'\neq \mathbf{\hat{\Pi}}$ with $\|\mathbf{\Pi' \tilde{A}}-\mathbf{\Pi'\tilde{B}}\|_\mathrm{F}<\|\mathbf{\hat{\Pi} \tilde{A}}-\mathbf{\hat{\Pi}\tilde{B}}\|_\mathrm{F}$. Then, we consider $\mathbf{\tilde{B}=B_0+R'}$ with $\mathbf{B}_0=\mathbf{\Pi'}^{\mathrm{T}}\mathbf{A\Pi'}$. Obviously, $\mathbf{R'}$ satisfies the conditions in Theorem \ref{theorem:convexoptimal}. 
		Hence, by Lemma \ref{lemma:bounddifference}, we should have that the the solution $\mathbf{\Pi}^p$ computed by \textbf{Algorithm 2} equals to $\mathbf{\Pi}'$. However, we also have $\mathbf{\Pi}^p=\mathbf{\hat{\Pi}}$, which leads to a contradiction. Thus, $\mathbf{\hat{\Pi}}$ is the optimal solution to $\mathbf{P3}$, which finishes the proof of the theorem. \qed
\end{proof}
\vspace{-1.5mm}
\textbf{Time Complexity: }In the first stage of \textbf{Algorithm 2}, we use the primal interior point algorithm proposed in \cite{cite:polysolution} to solve the instance of $\mathbf{P3'}$, which has a time complexity of $O(N^3)=O(n^6)$ where $N=n^2$ is the number of variables in the instance. The projection process of the second stage can be implemented in $O(n^2)$ time. Thus, the total time complexity of \textbf{Algorithm 2} is $O(n^6)$. Note that the result is only the worst case guarantee and the average time complexity of \textbf{Algorithm 2} is much lower \cite{cite:polysolution}.
\vspace{-2mm}
\section{De-anonymization with Unilateral Community Information}\label{sec:unilateral}
In this section, we investigate the de-anonymization problem with unilateral community information, i.e., when the adversary only possesses the community assignment function of the published network $G_1$. Following the path of the bilateral de-anonymization in Sections \ref{sec:bilateral1} and \ref{sec:bilateral2}, we will give the corresponding results we obtain for the unilateral case. Through comparisons of these results and illustration in our later experiments, we demonstrate that de-anonymization with only unilateral community information is harder than that with bilateral community information, which shows the importance of community assignment as side information.

\subsection{MAP-based Cost Function}
We first derive our cost function in the unilateral case. 
Again, according to the definition of MAP estimation, given the published network $G_1$, auxiliary network $G_2$, parameters $\bm{\theta}$ and the community assignment function $c$ of $G_1$, the MAP estimate $\hat{\pi}$ of the correct mapping $\pi_0$ is defined as:
\abovedisplayskip=2pt
\abovedisplayshortskip=2pt
\belowdisplayshortskip=2pt
\belowdisplayskip=2pt
\begin{align}
	\hat{\pi}= \arg\max_{\pi\in \Pi}Pr(\pi_0=\pi\mid G_1,G_2,c,\bm{\theta}),\label{eq:mapestimator1}
\end{align}
where $\Pi$ denotes the set of all bijective mappings from $V_1$ to $V_2$. Note that in the unilateral case we have no prior knowledge of the community assignment of $G_2$. Consequently, we can not restrict $\Pi$ to the set of mappings that observe the community assignment. 

Due to the space limit, we omit the processing of the MAP estimator (\ref{eq:mapestimator1}) and present the detailed steps in \textbf{Appendix \ref{app:MAP}}. After a sequence of manipulations, we arrive at the following equation for calculation of the MAP estimate.
\vspace{-0.5mm}
\begin{equation*}
\begin{aligned}
\hat{\pi}
=&\arg\min_{\pi\in\Pi}\left\lbrace \sum_{i<j}^nw_{ij}(\mathbbm{1}\{(i,j)\notin E_1,(\pi(i),\pi(j)\in E_2)\})\right\rbrace\\[-2pt]
\triangleq&\arg\min_{\pi\in\Pi}\Delta_{\pi},
\end{aligned}
\vspace{-1mm}
\end{equation*}
where $w_{ij}=\log\left(\frac{1-p_{c(i)c(j)}(s_1+s_2-s_1s_2)}{p_{c(i)c(j)}(1-s_1)(1-s_2)}\right).$ Note that different from the bilateral case, the cost function in the unilateral case is equivalent to a single-sided weighted edge disagreement induced by a mapping. This subtle difference has crucial implications to our analysis on the algorithmic aspect of unilateral de-anonymization.


\subsection{Validity of the Cost Function}
Following the same thread of thought, we proceed to justify the MAP estimation used in unilateral de-anonymization. Using similar proof technique, we derive the same result for the cost function in unilateral case as in bilateral one.
\vspace{-1mm}
\begin{theorem}\label{theorem:MAP1}
	Let $\alpha=\min_{ab}p_{ab}, \beta=\max_{ab}p_{ab}$, $\overline{w}=\max_{ij}w_{ij}$ and $\underline{w}=\min_{ij}w_{ij}$. Assume that $\alpha,\beta\rightarrow 0$, $s_1,s_2$ do not go to 1 as $n\rightarrow \infty$ and $\frac{\log \alpha}{\log \beta}\le \gamma$. Furthermore, suppose that
	\begin{align*}
	\frac{\alpha(1-\beta)^2s_1^2s_2^2\log(1/\alpha)}{s_1+s_2}=\Omega\left({\frac{\gamma\log^2 n}{n}}\right)+\omega\left(\frac{1}{n}\right),
	\end{align*}
	then the MAP estimate $\hat{\pi}$ in the unilateral case  almost surely equals to the correct mapping $\pi_0$ as $n\rightarrow \infty$.
\end{theorem}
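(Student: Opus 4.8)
The plan is to replay the first-moment argument behind Theorem~\ref{theorem:MAP}, adapted to the one-sided cost and to the fact that in the unilateral case the mapping class $\Pi$ is no longer restricted by the community assignment. Exactly as in the bilateral proof, let $\Pi_k$ be the set of bijections $V_1\mapsto V_2$ that disagree with $\pi_0$ on exactly $k$ nodes, let $S_k$ be the number of $\pi\in\Pi_k$ with $\Delta_\pi\le\Delta_{\pi_0}$, and let $S=\sum_{k=2}^n S_k$. Since $\hat\pi\ne\pi_0$ forces $S\ge 1$, it suffices to show $\mathbb{E}[S]\to 0$. One bounds $\mathbb{E}[S]\le\sum_{k=2}^n |\Pi_k|\max_{\pi\in\Pi_k}\Pr\{\Delta_\pi-\Delta_{\pi_0}\le 0\}\le\sum_{k=2}^n n^k\max_{\pi\in\Pi_k}\Pr\{\Delta_\pi-\Delta_{\pi_0}\le 0\}$; although $\Pi_k$ is larger here than in the bilateral setting (it contains mappings that do not respect communities), the crude bound $|\Pi_k|\le n^k$ still holds, so the combinatorial factor is unchanged and the whole burden falls on the tail probability for a single fixed $\pi$.

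Next I would fix $\pi\in\Pi_k$ and write $\Delta_\pi-\Delta_{\pi_0}=\sum_{\{i,j\}}D_{ij}$ with $D_{ij}=w_{ij}\,\mathbbm{1}\{(i,j)\notin E_1\}\bigl(\mathbbm{1}\{(\pi(i),\pi(j))\in E_2\}-\mathbbm{1}\{(\pi_0(i),\pi_0(j))\in E_2\}\bigr)$. Only the set $T_\pi$ of pairs with $\{\pi(i),\pi(j)\}\ne\{\pi_0(i),\pi_0(j)\}$ contributes, and $|T_\pi|$ is of order $kn$ (for instance $|T_\pi|\ge k(n-k)$ by counting pairs with exactly one displaced endpoint, the few $k$ close to $n$ being handled analogously). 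The key structural point is that $\{D_{ij}\}_{\{i,j\}\in T_\pi}$ is a family of independent random variables, each of absolute value at most $\overline w$: for a moved pair the two $E_2$-indicators are governed by two distinct underlying edges of $G$, hence independent of one another and of the $G_1$-indicator. The heart of the proof is then a lower bound on $\mathbb{E}[\Delta_\pi-\Delta_{\pi_0}]$. Because the weights $\{w\}$ are precisely the MAP log-likelihood weights, $\Delta_\pi$ equals $-\log\Pr(G_1,G_2\mid\pi_0=\pi)$ up to a $\pi$-independent additive constant, so $\mathbb{E}[\Delta_\pi-\Delta_{\pi_0}]=D_{\mathrm{KL}}\!\left(P_{\pi_0}\,\|\,P_\pi\right)\ge 0$ and this divergence factorizes over pairs. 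A per-pair computation (a product of two Bernoullis, the true law of a moved pair, against a correlated Bernoulli pair, its law under the hypothesis $\pi_0=\pi$) shows that an affinity-preserving moved pair contributes at least a constant times $\underline w\,\alpha(1-\beta)s_1s_2$, and a careful accounting over $T_\pi$ then yields $\mathbb{E}[\Delta_\pi-\Delta_{\pi_0}]=\Omega\!\bigl(kn\,\underline w\,\alpha(1-\beta)s_1s_2\bigr)$.

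With the expectation in hand, I would apply Bernstein's inequality to $\sum_{\{i,j\}\in T_\pi}D_{ij}$: using $\mathbb{E}\ge c\,kn\,\underline w\,\alpha(1-\beta)s_1s_2$, the variance bound $\mathrm{Var}\le kn\,\overline w^{\,2}\cdot O(\beta s_2)$, and per-term range $\overline w$, one obtains $\Pr\{\Delta_\pi-\Delta_{\pi_0}\le 0\}\le\exp\!\bigl(-\Omega(\tfrac{kn\,\alpha(1-\beta)^2 s_1^2 s_2^2\log(1/\alpha)}{\gamma(s_1+s_2)})\bigr)$ after the routine regime simplifications allowed by $\alpha,\beta\to 0$, $s_1,s_2$ bounded away from $1$, together with $\overline w=O(\gamma\underline w)$ and $\log(1/\alpha)=\Theta(\overline w)$. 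Substituting into $\sum_{k=2}^n n^k(\cdot)$ and invoking the hypothesis $\tfrac{\alpha(1-\beta)^2 s_1^2 s_2^2\log(1/\alpha)}{s_1+s_2}=\Omega(\gamma\log^2 n/n)+\omega(1/n)$ makes every term $o(n^{-k})$ and the whole sum $o(1)$; hence $\mathbb{E}[S]\to 0$, so with probability $1-o(1)$ the mapping $\pi_0$ is the unique minimizer of $\Delta_\pi$, i.e.\ $\hat\pi=\pi_0$ almost surely as $n\to\infty$. (A robustness statement in the spirit of Corollary~\ref{corollary:accuracy} would follow by the same argument with the sum started at $k=\delta n$ and the threshold $0$ relaxed to $\epsilon n^2$.)

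I expect the main obstacle to be making the expected-gap bound hold uniformly over the enlarged class $\Pi_k$. Because $\Pi$ now contains mappings that scramble communities arbitrarily, a single moved pair can contribute little, or even a negative amount, to $\mathbb{E}[\Delta_\pi-\Delta_{\pi_0}]$ when $\pi$ sends it onto an underlying pair of very different affinity; one therefore cannot argue pair by pair and must instead exploit nonnegativity of the full Kullback--Leibler divergence globally while isolating a sufficiently large sub-collection of affinity-preserving moved pairs to extract the $\Omega(kn\,\alpha(1-\beta)s_1s_2)$ lower bound. This is precisely the extra difficulty created by the one-sided cost in the unilateral setting, in contrast with the symmetric edge-disagreement cost of the bilateral case, where the community constraint on $\Pi$ rules such pathologies out. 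A secondary technical point is tightening the variance and range estimates so that the exponent in the Bernstein bound is governed by exactly the quantity $\tfrac{\alpha(1-\beta)^2 s_1^2 s_2^2\log(1/\alpha)}{s_1+s_2}$ appearing in the statement.
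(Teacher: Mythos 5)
Your strategy is the same as the paper's: a first-moment bound $\mathbb{E}[S]\le\sum_{k=2}^n n^k\max_{\pi\in\Pi_k}\Pr\{\Delta_\pi-\Delta_{\pi_0}\le 0\}$, a decomposition of $\Delta_\pi-\Delta_{\pi_0}$ over the $\Theta(kn)$ affected pairs into bounded per-pair terms, a lower bound of order $kn\,\underline{w}\,\alpha(1-\beta)s_1s_2$ on the expected gap, and a concentration bound. The paper's proof of Theorem 5.1 literally reuses the Theorem 4.1 machinery after redefining the per-pair Bernoulli means; your Bernstein step and your KL-divergence framing of the expected gap are cosmetic variants of its weighted-Chernoff lemma (Lemma A.2) and its direct mean computation. (One small caveat: the per-pair terms are not actually independent across pairs, since the underlying edge $(\pi_0^{-1}(\pi(i)),\pi_0^{-1}(\pi(j)))$ for one pair may coincide with another pair $(i',j')$; the paper handles this by ``conservatively ignoring the positive correlation,'' and your independence claim inherits that same hand-wave.)

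The substantive issue is exactly the obstacle you flag and do not close. In the unilateral case $\Pi_k$ contains mappings that scramble communities, so a moved pair $(i,j)$ lands on an underlying pair of affinity $p'=p_{c(\pi_0^{-1}(\pi(i)))c(\pi_0^{-1}(\pi(j)))}$ possibly very different from $p_{c(i)c(j)}$, and its expected contribution $w_{ij}s_2\left[(1-p_{c(i)c(j)}s_1)p'-p_{c(i)c(j)}(1-s_1)\right]$ can be negative. Global nonnegativity of the KL divergence does not yield the quantitative $\Omega(kn\,\underline{w}\,\alpha(1-\beta)s_1s_2)$ bound uniformly over $\Pi_k$, and your fallback of isolating a large sub-collection of affinity-preserving moved pairs can fail outright: for $\kappa\ge 3$, a cyclic shift of the communities with generic pairwise-distinct affinities changes the affinity of \emph{every} moved pair. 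So the central estimate of your argument is genuinely open as written. You should know, however, that the paper's own proof has the same hole: it asserts the argument is ``basically identical'' after redefining $X_{ij}$ and $Y_{ij}$ to have means $p_{ij}s_1(1-p_{\pi(i)\pi(j)}s_2)$ and $p_{ij}s_1(1-s_2)$ --- a redefinition under which the mean gap is automatically positive, but which does not match the one-sided cost $\sum_{i<j}w_{ij}\mathbbm{1}\{(i,j)\notin E_1,(\pi(i),\pi(j))\in E_2\}$ derived in Appendix D for arbitrary, non-community-preserving $\pi$. Your diagnosis of where the unilateral case truly differs from the bilateral one is sharper than the paper's treatment; it just is not yet a proof.
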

\vspace{-2mm}
\begin{proof}
	The proof is basically identical to the proof of Theorem \ref{theorem:MAP}. The only difference here is that we redefine $X_{ij}$ as a Bernoulli random variable with mean $p_{ij}s_1(1-p_{\pi(i)\pi(j)}s_2)$ and $Y_{ij}$ as a Bernoulli random variable with mean $p_{ij}s_1(1-s_2)$. Then, by using the same bounding technique for $Pr\{X_{\pi}-Y_{\pi}\le 0\}$, we conclude the same result for the cost function in unilateral case. 
\end{proof}
\vspace{-2mm}
Theorems \ref{theorem:MAP} and \ref{theorem:MAP1} show that the cost function based on MAP estimation is equally effective in de-anonymization with bilateral and unilateral community information. However, as we will show in the sequel, the feasibility of the cost function in unilateral case is weaker than in bilateral case.

	\arrayrulewidth=0.52pt
	\belowrulesep=0pt
	\aboverulesep=0pt
	\begin{table*}[!tb]
		\centering
		
		\begin{tabular}{c|c|c|c|c|c}
			\toprule[0.1em]
			\textbf{Dataset}&\textbf{Degree Distribution}&\textbf{Source} &\textbf{\# of Nodes} &\textbf{\# of Edges}&\textbf{\# of Communities}  \\
			\midrule\midrule
			\multirow{3}{*}{Synthetic Networks}& power law & synthetic & 500-2000 & $\approx$500-100000 & 10-40\\
			\cline{2-6} & Poisson & synthetic & 500-2000 & $\approx$500-100000 & 10-40 \\
			\cline{2-6} & exponential & synthetic & 500-2000 & $\approx$500-100000 & 10-40 \\
			\cmidrule{1-6}\morecmidrules\cmidrule[0.04em]{1-6}
			\multicolumn{2}{c|}{Sampled Social Networks}& SNAP\cite{cite:SNAP} & 500-2000 & $\approx$1000-40000 & 10-40\\
			\cmidrule{1-6}\morecmidrules\cmidrule[0.04em]{1-6}
			\multicolumn{2}{c|}{Co-authorship Networks}& MAG\cite{cite:MAG} & $\approx2000$ & $\approx8000$ & $\approx60$\\
			\bottomrule[0.1em]
		\end{tabular}
		\vspace{-3mm}
		\caption{\bf Summary of datasets in experiments}
		\label{table:summary}
		\vspace{-5mm}
	\end{table*}
\subsection{Algorithmic Aspect}
In this section, we investigate the algorithmic aspect of de-anonymization with unilateral community information and propose corresponding algorithms as in the bilateral case.

\subsubsection{The Unilateral MAP-ESTIMATE Problem}
We first formally introduce the combinatorial optimization problem induced by minimizing the cost function in unilateral de-anonymization.
\vspace{-1mm}
\begin{definition}{(\bf{The UNI-MAP-ESTIMATE Problem})}\ \label{def:BI-MAP-ESTIMATE1}
	Given two graphs $G_1(V,E_1,\mathbf{A})$ and $G_2(V,E_2,\mathbf{B})$, community assignment function $c$ of $G_1$ and weights $\{w\}$, the goal is to compute a mapping $\hat{\pi}:V_1\mapsto V_2$ that satisfies
	\begin{small}
		\vspace{-1mm}
		\begin{align}
		\hat{\pi}&=\arg\min_{\pi\in\Pi}\left\lbrace \sum_{i<j}^nw_{ij}(\mathbbm{1}\{(i,j)\notin E_1,(\pi(i),\pi(j))\in E_2)\}\right\rbrace\nonumber\\[-2pt]
		&\triangleq\arg\min_{\pi\in\Pi}\Delta_{\pi}, \nonumber
		\end{align} 
	\end{small}
	where $\Pi=\{\pi:V_1\mapsto V_2\}$. 
\end{definition}
\vspace{-1mm}
Similar to the bilateral de-anonymization, we require the weights $\{w\}$ to be induced by well-defined community affinity values $\{p\}$, $s_1$ and $s_2$, though the latter ones are not explicitly given. 
Due to the asymmetry of $\Delta_{\pi}$ in unilateral de-anonymization, intuitively, the UNI-MAP-ESTIMATE problem may bear higher approximation hardness than the BI-MAP-ESTIMATE problem in bilateral de-anonymization. The proposition we present below consolidates this intuition.
\vspace{-4mm}
\begin{proposition}\label{proposition:hardness1}
	UNI-MAP-ESTIMATE problem is NP-hard. Moreover, there is no polynomial time (pseudo polynomial time) approximation algorithm for UNI-MAP-ESTIMATE with any multiplicative approximation guarantee unless $P=NP$ ($NP\in DTIME(n^{\mathrm{polylog} n})$).
\end{proposition}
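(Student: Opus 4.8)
The plan is to prove the hardness of UNI-MAP-ESTIMATE by a reduction from an NP-complete problem, mirroring the structure of the proof of Proposition~\ref{proposition:hardness} but exploiting the asymmetry of the unilateral cost function $\Delta_\pi$ to obtain the stronger conclusion $P=NP$ (rather than merely $GI\in P$). The key observation is that in the unilateral case $\Delta_\pi$ counts only the \emph{one-sided} disagreements where $(i,j)\notin E_1$ but $(\pi(i),\pi(j))\in E_2$; in particular, $\Delta_\pi=0$ precisely when $\pi$ maps $G_2$ into $G_1$, i.e.\ when the image of $E_2$ under $\pi^{-1}$ is contained in $E_1$. This is exactly a \emph{subgraph isomorphism} condition rather than a graph isomorphism condition, and subgraph isomorphism is NP-complete.

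Concretely, I would take an instance of subgraph isomorphism --- does a graph $H$ on $m$ vertices appear as a (not necessarily induced) subgraph of a graph $G$ on $n$ vertices, $m\le n$? --- and build an instance of UNI-MAP-ESTIMATE. Set $G_2$ to be $H$ together with $n-m$ isolated vertices so that $|V_2|=n$, set $G_1=G$, assign all weights $w_{ij}=1$, and set $c(v)=1$ for all $v\in V_1$ (so the single-community constraint imposes no restriction and $\Pi$ is the full set of bijections). Then a bijection $\pi:V_1\to V_2$ has $\Delta_\pi=0$ if and only if every edge of $G_2$ is carried by $\pi^{-1}$ to an edge of $G_1$, which happens for some $\pi$ if and only if $H$ is a subgraph of $G$. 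As in Proposition~\ref{proposition:hardness}, any multiplicative-approximation algorithm must return a mapping of value exactly $0$ whenever the optimum is $0$, and must return a value strictly greater than $0$ otherwise; hence such an algorithm decides subgraph isomorphism. This yields NP-hardness and rules out any polynomial-time multiplicative approximation unless $P=NP$; the pseudo-polynomial strengthening follows exactly as before, since the weights are $0/1$ and the reduction is polynomial, giving $NP\in DTIME(n^{\mathrm{polylog}\,n})$ as the consequence of a pseudo-polynomial multiplicative approximation.

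The main thing to be careful about --- the only real obstacle --- is ensuring that the reduction genuinely produces a \emph{bijective} mapping problem and that padding $G_2$ with isolated vertices does not accidentally change the answer: isolated vertices in $G_2$ contribute no edges, so they never create a one-sided disagreement regardless of where they are mapped, and conversely the $n-m$ vertices of $G$ not hit by the embedding of $H$ can absorb them. One should also check that the single-community choice $c\equiv 1$ is legitimate under the problem's requirement that $\{w\}$ be induced by well-defined affinity values $\{p\}$ and sampling probabilities --- this is immediate by taking a one-community stochastic block model with an appropriate $p_{11}$ and constants $s_1,s_2$, exactly the device already used in Proposition~\ref{proposition:hardness}. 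With these points dispatched the argument is otherwise a routine adaptation of the earlier reduction, the essential upgrade being the replacement of graph isomorphism by the strictly harder subgraph isomorphism, which is what the asymmetry of the unilateral objective buys us.
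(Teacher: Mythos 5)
Your proposal is correct and takes essentially the same approach as the paper: the paper reduces from $k$-CLIQUE by setting $G_2$ to be a $k$-clique padded with isolated vertices, which is precisely the special case $H=K_k$ of your subgraph-isomorphism reduction, and the zero-cost/nonzero-cost gap argument for ruling out multiplicative approximation is identical. The only (immaterial) difference is your choice of the more general NP-complete source problem.
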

\vspace{-2.5mm}
\begin{proof}
		The proof is done by reduction from $k$-CLIQUE problem. Given a graph $G(V,E)$, the $k$-CLIQUE problem asks whether there exists a clique of size no smaller than $k$ in $G$. The main idea of the reduction is that: Given an instance of $k$-CLIQUE with $G(V,E)$ and $k$, we set $G_1$ as $G$ and $G_2$ as a graph consisting of a  clique of size $k$ and $(|V|-k)$ additional nodes. Setting $w_{ij}=1$ and $c(v)=1$ for all $v$ in $G_1$, we have an instance of UNI-MAP-ESTIMATE. Obviously, if the $G$ contains a clique of size no less than $k$, the value $\Delta_{\hat{\pi}}$ of the optimal mapping $\hat{\pi}$ in UNI-MAP-ESTIMATE will be zero. Therefore, in this case, any algorithm with multiplicative approximation guarantee must find a mapping $\pi$ with $\Delta_{\pi}=0$. Furthermore, if $G$ does not contain a clique of size no smaller than $k$, then any mapping $\pi$ must satisfy $\Delta_{\pi}>0$. Hence, a polynomial (pseudo-polynomial) time approximation algorithm for BI-MAP-ESTIMATE with multiplicative guarantee implies a polynomial (pseudo-polynomial) time algorithm for $k$-CLIQUE. Since $k$-CLIQUE problem is NP-Complete, we justify the approximation hardness of UNI-MAP-ESTIMATE as stated in the proposition. 
\end{proof}
\vspace{-2mm}
Note that the graph isomorphism problem is at least as hard as the problems in $P$, which implies that the approximation hardness result for UNI-MAP-ESTIMATE is stronger than that for BI-MAP-ESTIMATE. 
\vspace{-1mm}
\subsubsection{Additive Approximation Algorithm}
We design a similar approximation algorithm with an $\epsilon$-additive approximation guarantee as in the bilateral case, by formulating the UNI-MAP-ESTIMATE problem in quadratic assignment fashion as follows
\begin{align}
\quad\text{minimize } & \textstyle\sum_{i,j,k,l}q_{ijkl}x_{ik}x_{jl}\\[-2pt]
\text{\textbf{s.t. }}& \textstyle\sum_{i}x_{ij}=1,\quad \forall i\in V_1\\[-2pt]
& \textstyle\sum_{j}x_{ij}=1,\quad \forall j\in V_2\\[-2pt]
& x_{ij}\in\{0,1\}
\end{align}
with the coefficients $\{q\}_{ijkl}$ of the formulation defined as:
\[
q_{ijkl}=
\begin{cases}
w_{ij}, &\text{if }(i,j)\not\in E_{1}, (k,l)\in E_{2}\\
0 &\text{otherwise.}
\end{cases}
\]
Note that due to the absence of community assignment constraints, we can directly formulate the problem as a minimization one and omit the penalty factor as in bilateral de-anonymization. By invoking the same QA-Rounding procedure on the formulated instance and convert the resulting solution $\{x\}$ to its equivalent mapping $\pi$. Using similar analysis technique as in Section \ref{sec:additiveapprox}, we have that the algorithm obtains solutions that have a gap of at most $\epsilon n^2$ to the optimal ones in time $O(n^{O(\log n/\epsilon^2)}+n^2)$.

\subsubsection{Convex Optimization Based Heuristic}
We now proceed to present the heursitc based on convex optimization for the UNI-MAP-ESTIMATE problem, which relies on the following matrix formulation.
	\begin{align}
\text{mininize }  \|\mathbf{W}\circ(\mathbf{\Pi}\mathbf{A}-&\mathbf{B}\mathbf{\Pi})\|_\mathrm{\lfloor F\rfloor}^2\nonumber\\[-2pt]
	\text{\textbf{s.t. }}  \forall i\in V_1,\ \textstyle\sum_{i}\mathbf{\Pi}_{ij}&=1\label{P3:constraint1u}\\[-2pt]
	\forall j\in V_2,\ \textstyle\sum_{j}\mathbf{\Pi}_{ij}&=1 \label{P3:constraint2u}\\[-2pt]
	\forall i,j,\ \mathbf{\Pi}_{ij}\in&\{0,1\},\label{P3:constraint3u}
	\end{align}
where $\mathbf{W}$ and $\circ$ share the same definitions as those in $\mathbf{P3}$ and $\|\cdot\|_\mathrm{\lfloor F\rfloor}$ is defined to be a variant of Frobenius norm. Specifically, $\|\mathbf{M}\|_{\mathrm{\lfloor F\rfloor}}=\sqrt{\sum_{i=1}^n\sum_{j=1}^n(\mathbbm{1}\{\mathbf{M}_{ij}\le0\}\mathbf{M}_{ij}^2)}$ for a matrix $\mathbf{M}$, where only negative elements contribute to the value of the norm\footnote{It is easy to verify that operator $\|\cdot\|_{\mathrm{\lfloor F\rfloor}}$ satisfies the definition of norm.}. By relaxing the integral constraint (\ref{P3:constraint3u}), we again arrive at an optimization problem, which is shown to be convex in \textbf{Appendix \ref{app:convex}}. Our second algorithm for unilateral de-anonymization is to first solve the relaxed version of the matrix formulation of UNI-MAP-ESTIMATE and then project the fractional solution to an integral one. Unfortunately, due to the asymmetry of the operator $\|\cdot\|_{\mathrm{\lfloor F\rfloor}}$, it is difficult to derive closed form expression for the gradient of the Lagrangian function of UNI-MAP-ESTIMATE. Thus, we cannot prove conditional optimality of the algorithm as we did in BI-MAP-ESTIMATE. 

We provide a summary of the differences existing in bilateral and unilateral de-anonymizations from a higher level in \textbf{Appendix \ref{sec:difference}}).

\begin{figure*}[htbp]
	\centering
	\subfigure[]{
		\begin{minipage}[]{0.235\linewidth}
			\centering
			\vspace{-4mm}
			\includegraphics[width=1.02\linewidth]{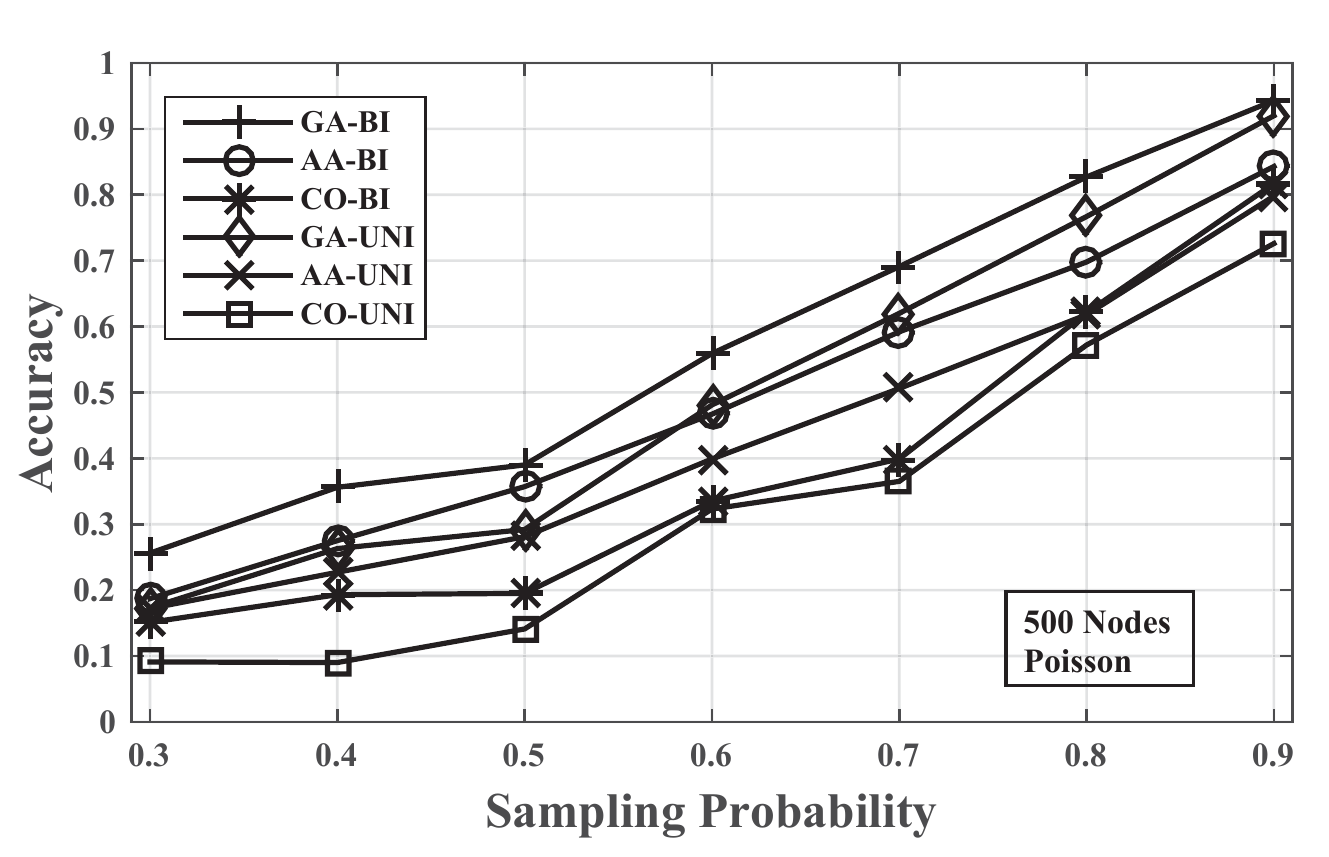}
			\vspace{-4mm}
			\label{fig:500poi}
		\end{minipage}%
		
	}
	\subfigure[]{
		\begin{minipage}[]{0.235\linewidth}
			\centering
			\vspace{-4mm}
			\includegraphics[width=1.02\linewidth]{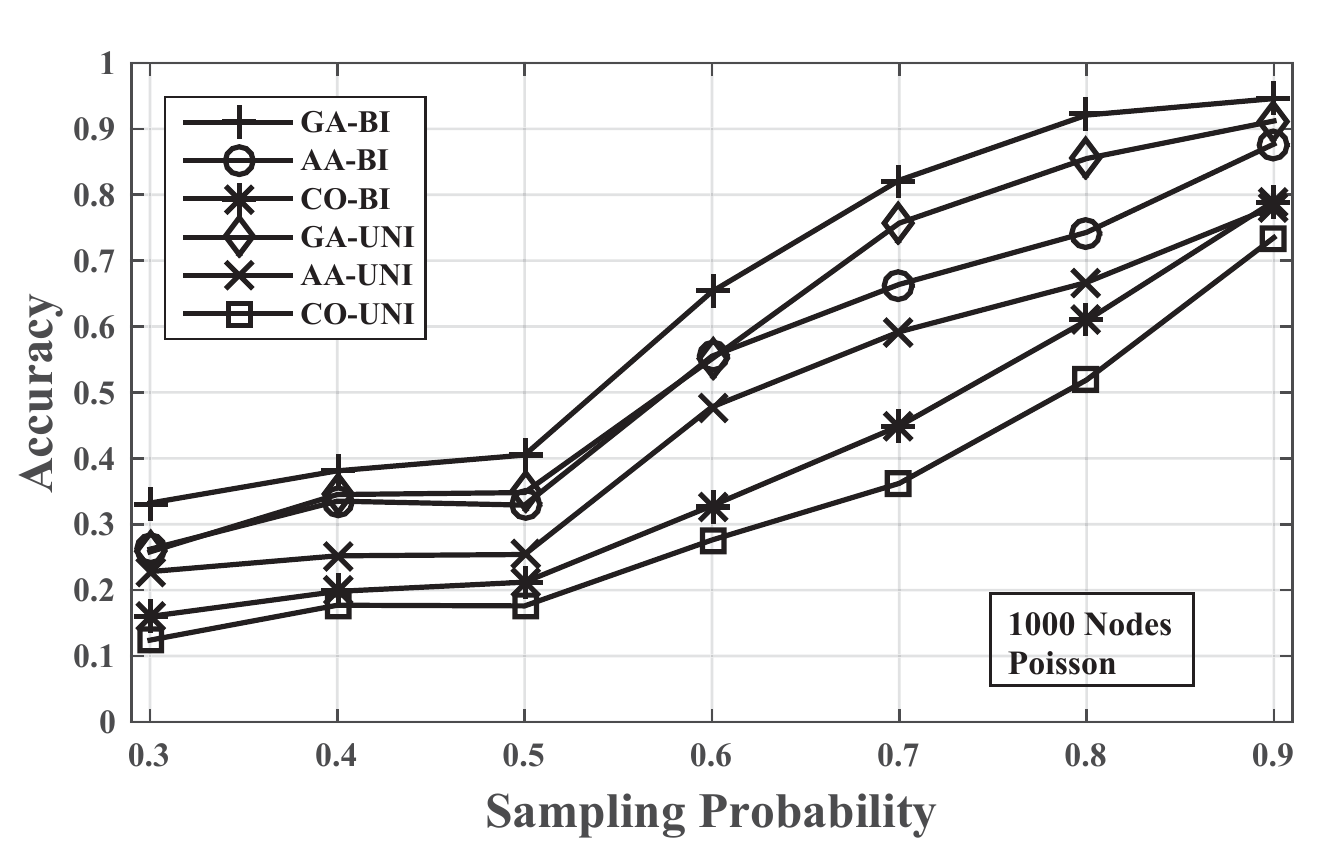}
			\vspace{-4mm}
			\label{fig:1000poi}
		\end{minipage}%
		
	}
	\subfigure[]{
		\begin{minipage}[]{0.235\linewidth}
			\centering
			\vspace{-4mm}
			\includegraphics[width=1.02\linewidth]{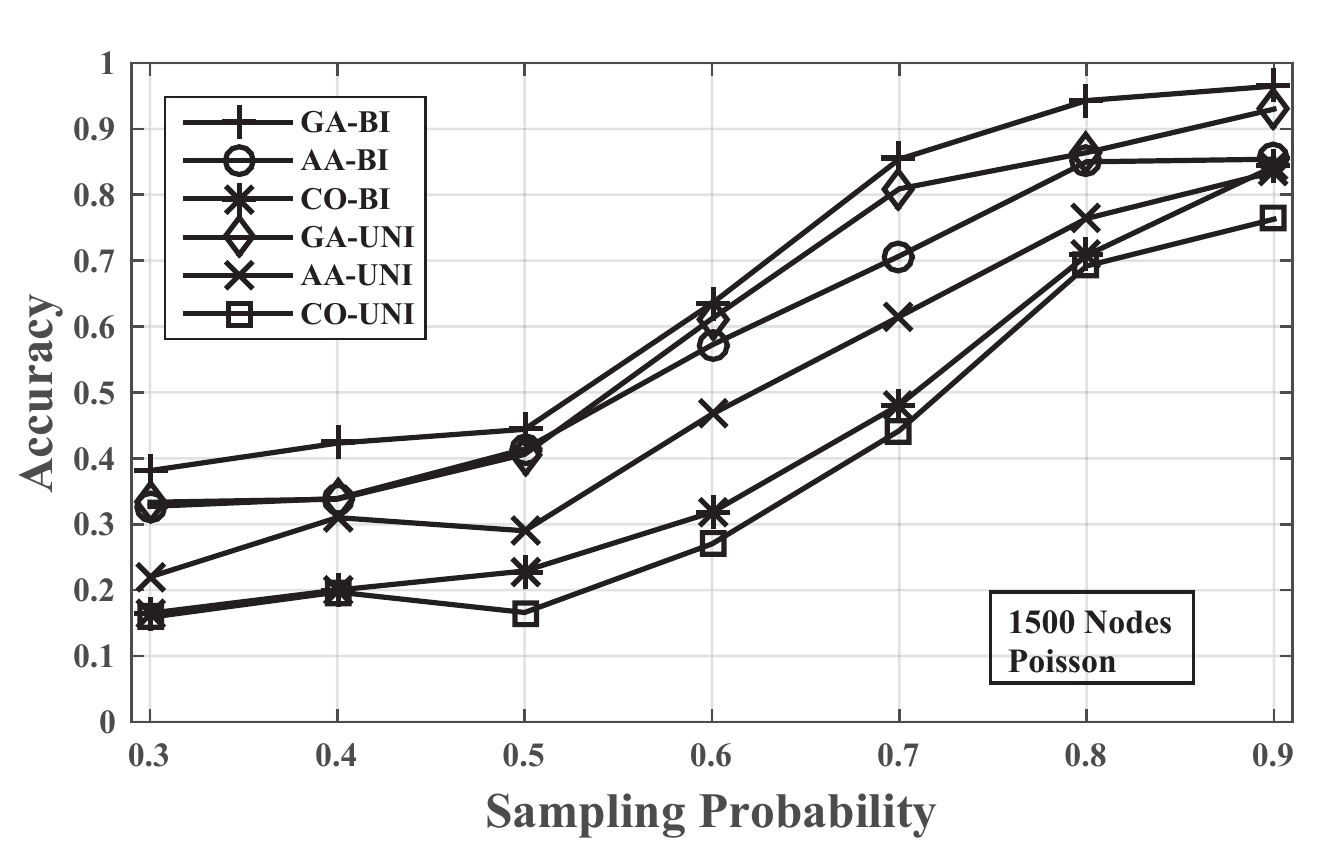}
			\vspace{-4mm}
			\label{fig:1500poi}
		\end{minipage}%
		
	}
	\subfigure[]{
		\begin{minipage}[]{0.235\linewidth}
			\centering
			\vspace{-4mm}
			\includegraphics[width=1.02\linewidth]{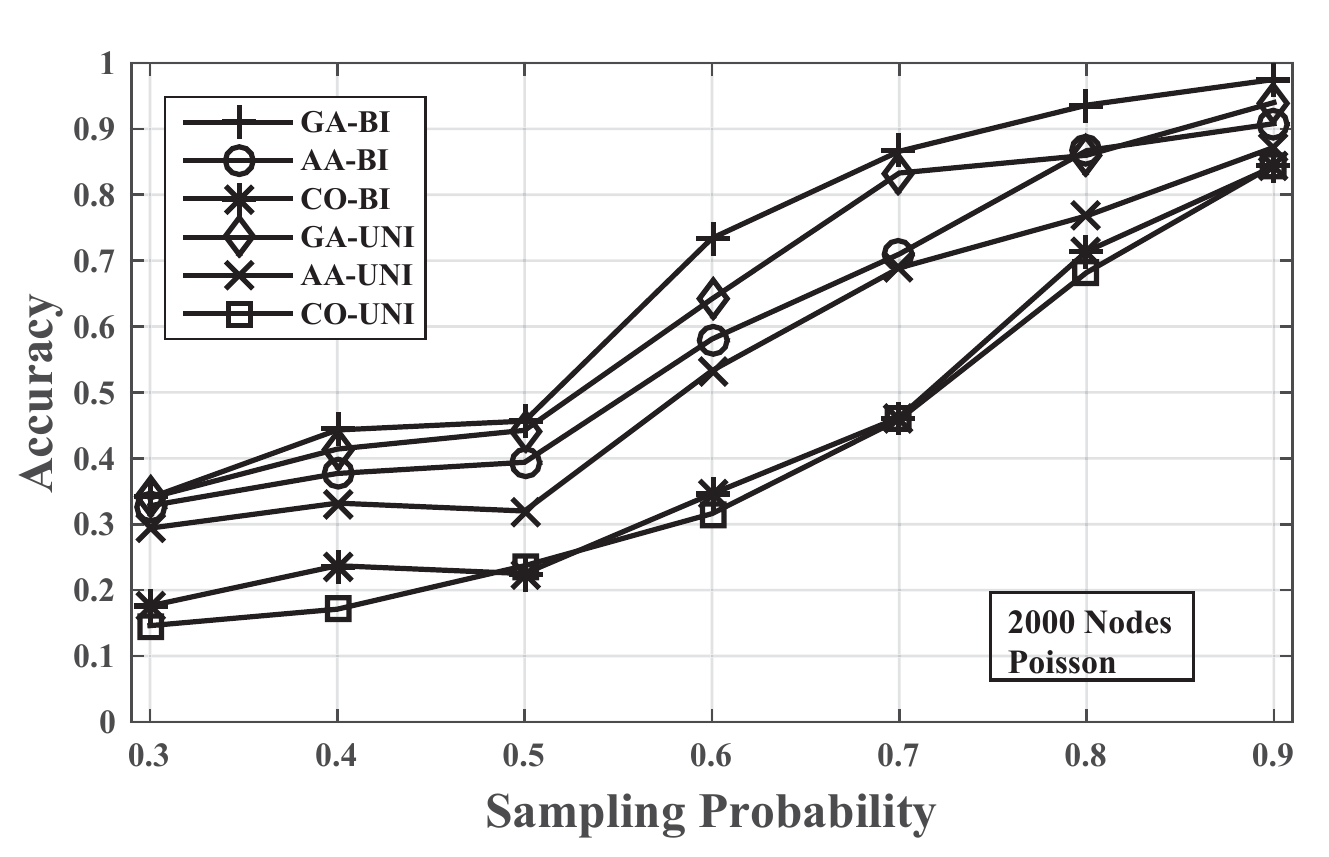}
			\vspace{-4mm}
			\label{fig:2000poi}
		\end{minipage}%
		
	}

	\subfigure[]{
		\begin{minipage}[]{0.235\linewidth}
			\centering
			\vspace{-4mm}
			\includegraphics[width=1.02\linewidth]{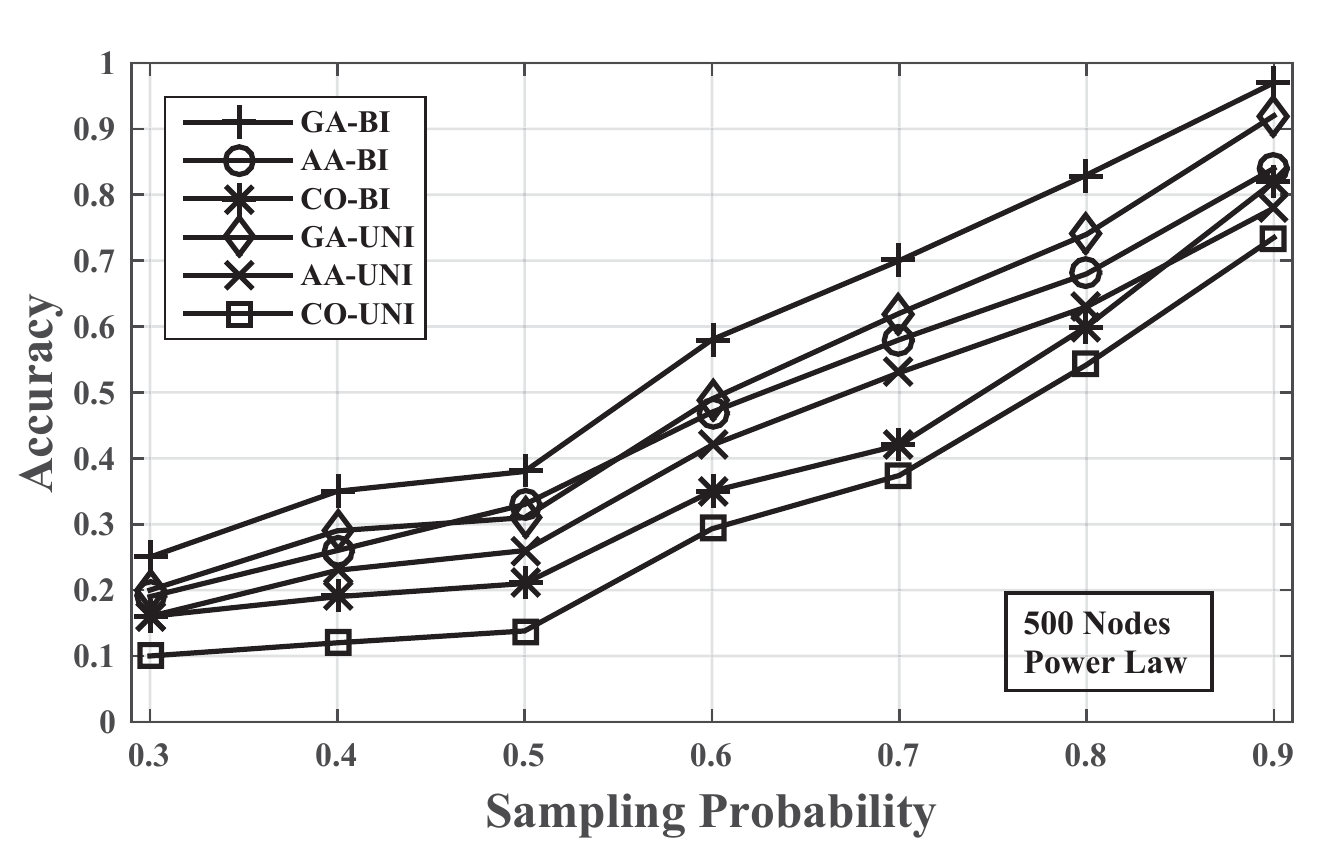}
			\vspace{-4mm}
			\label{fig:500pow}
		\end{minipage}%
		
	}
	\subfigure[]{
		\begin{minipage}[]{0.235\linewidth}
			\centering
			\vspace{-4mm}
			\includegraphics[width=1.02\linewidth]{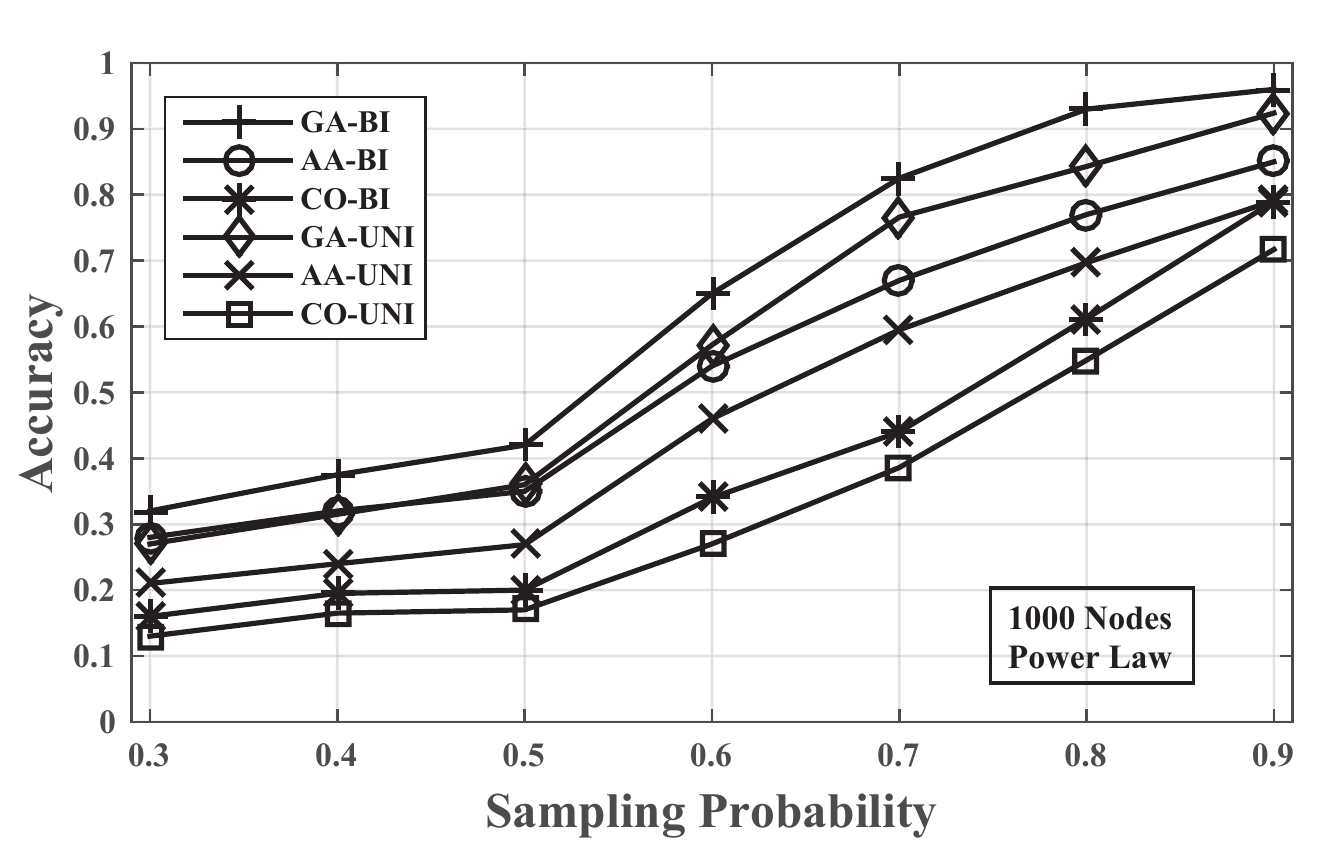}
			\vspace{-4mm}
			\label{fig:1000pow}
		\end{minipage}%
		
	}
	\subfigure[]{
		\begin{minipage}[]{0.235\linewidth}
			\centering
			\vspace{-4mm}
			\includegraphics[width=1.02\linewidth]{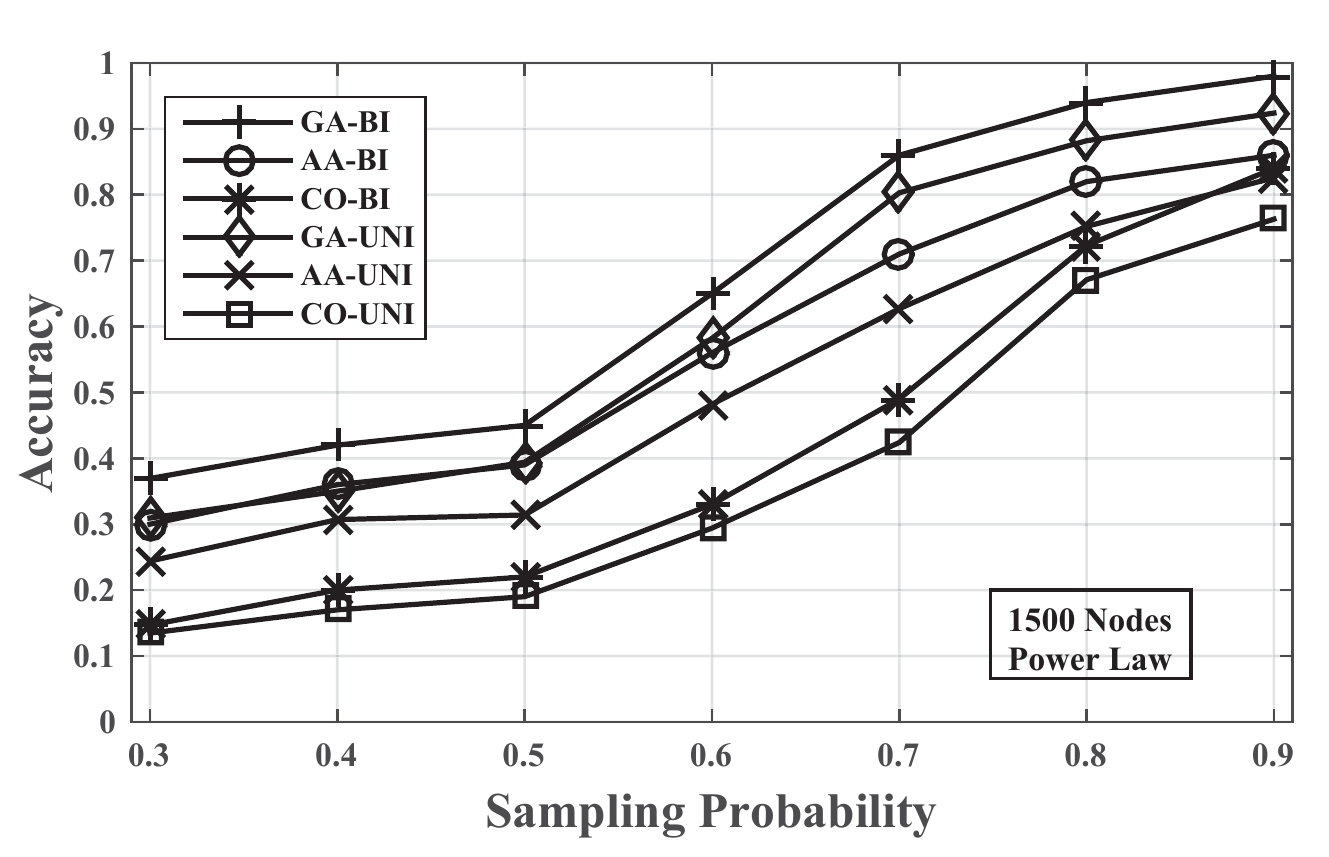}
			\vspace{-4mm}
			\label{fig:1500pow}
		\end{minipage}%
		
	}
	\subfigure[]{
		\begin{minipage}[]{0.235\linewidth}
			\centering
			\vspace{-4mm}
			\includegraphics[width=1.02\linewidth]{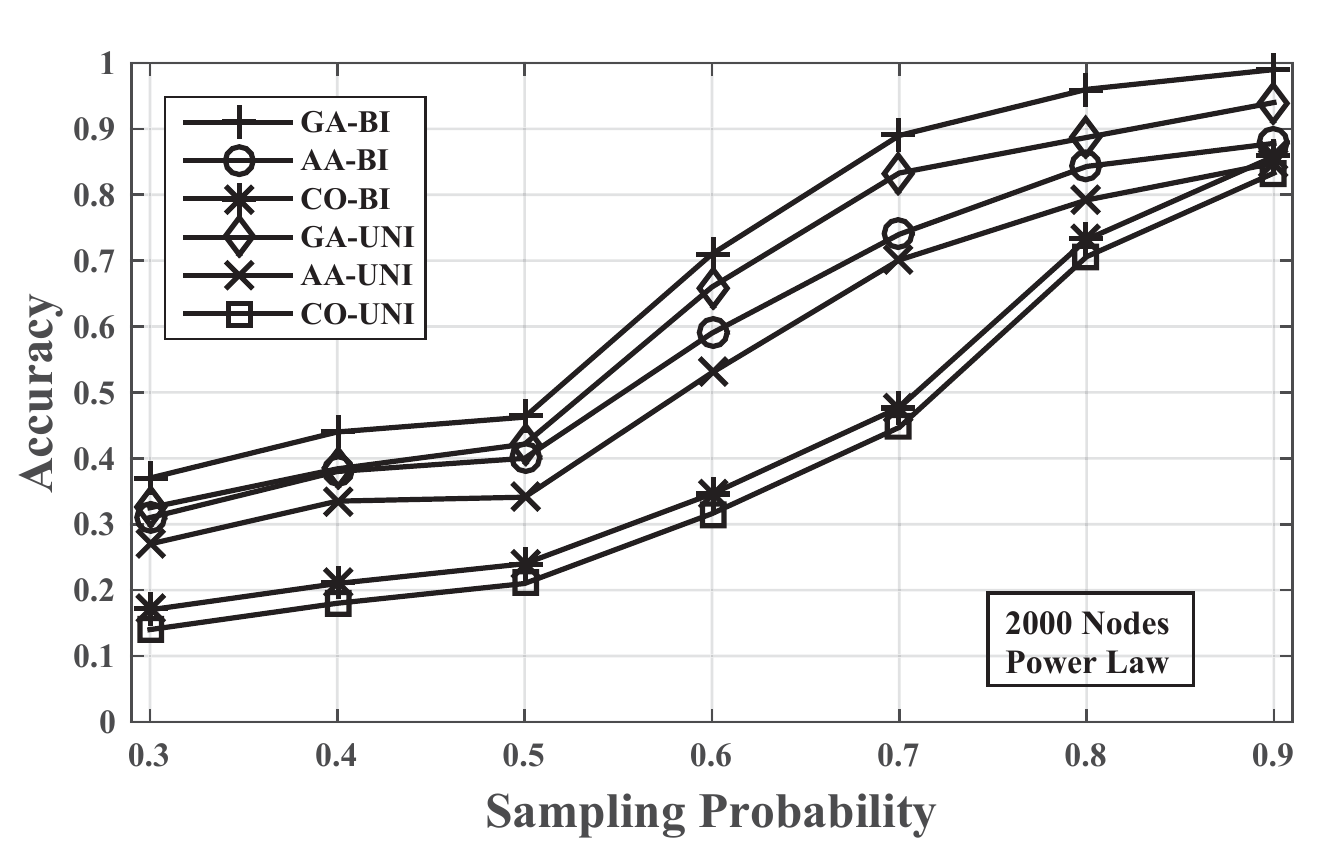}
			\vspace{-4mm}
			\label{fig:2000pow}
		\end{minipage}%
		
	}

	\subfigure[]{
		\begin{minipage}[]{0.235\linewidth}
			\centering
			\vspace{-4mm}
			\includegraphics[width=1.02\linewidth]{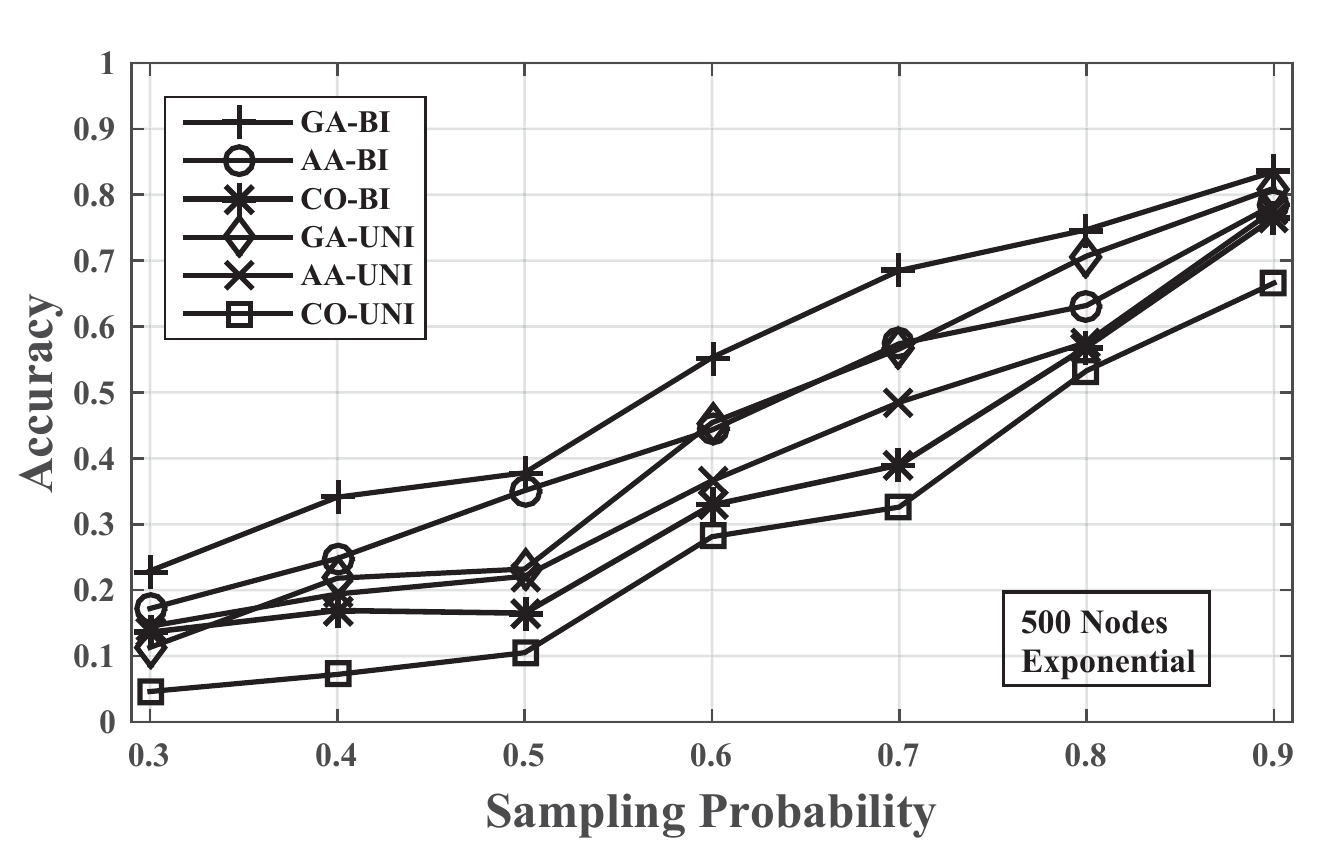}
			\vspace{-4mm}
			\label{fig:500exp}
		\end{minipage}%
		
	}
	\subfigure[]{
		\begin{minipage}[]{0.235\linewidth}
			\centering
			\vspace{-4mm}
			\includegraphics[width=1.02\linewidth]{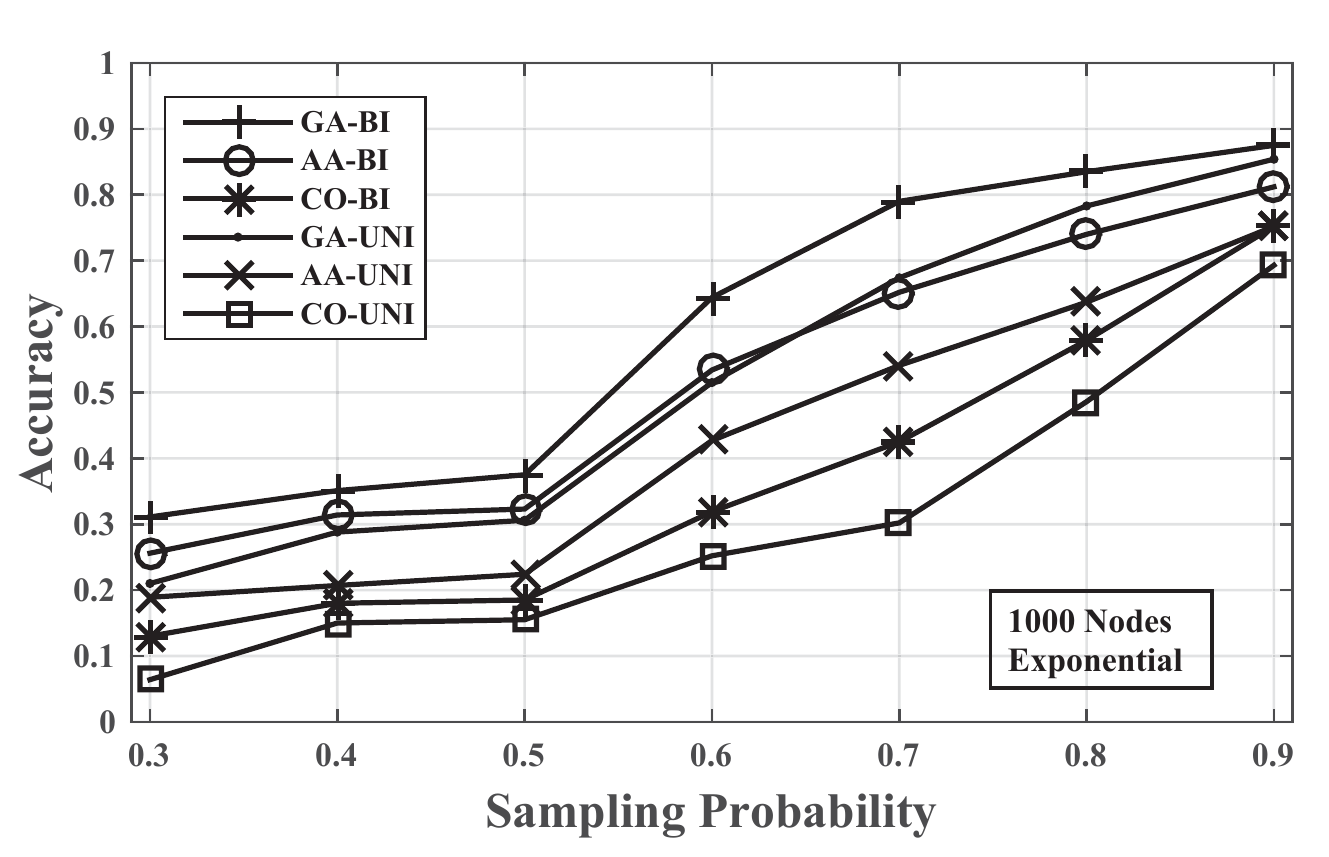}
			\vspace{-4mm}
			\label{fig:1000exp}
		\end{minipage}%
		
	}
	\subfigure[]{
		\begin{minipage}[]{0.235\linewidth}
			\centering
			\vspace{-4mm}
			\includegraphics[width=1.02\linewidth]{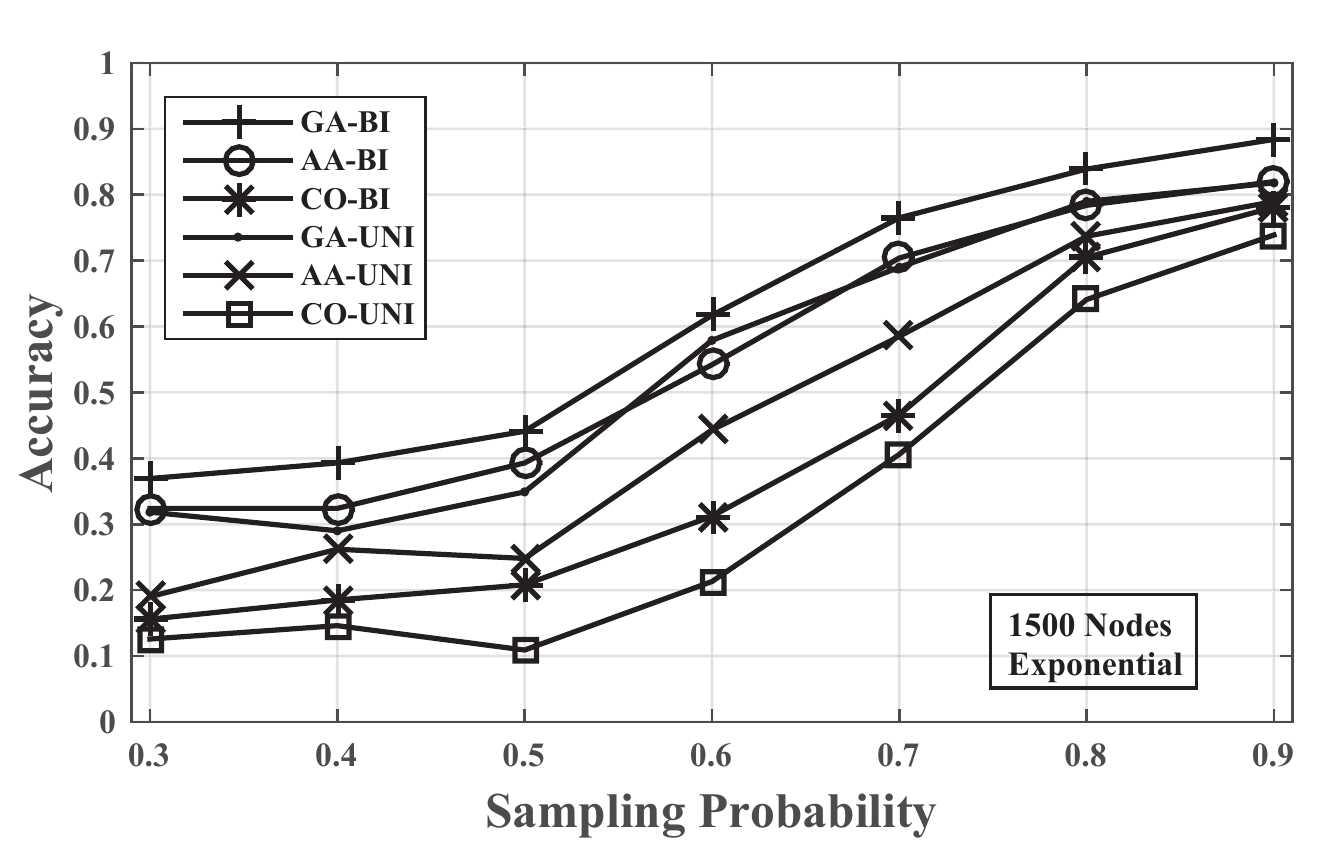}
			\vspace{-4mm}
			\label{fig:1500exp}
		\end{minipage}%
		
	}
	\subfigure[]{
		\begin{minipage}[]{0.235\linewidth}
			\centering
			\vspace{-4mm}
			\includegraphics[width=1.02\linewidth]{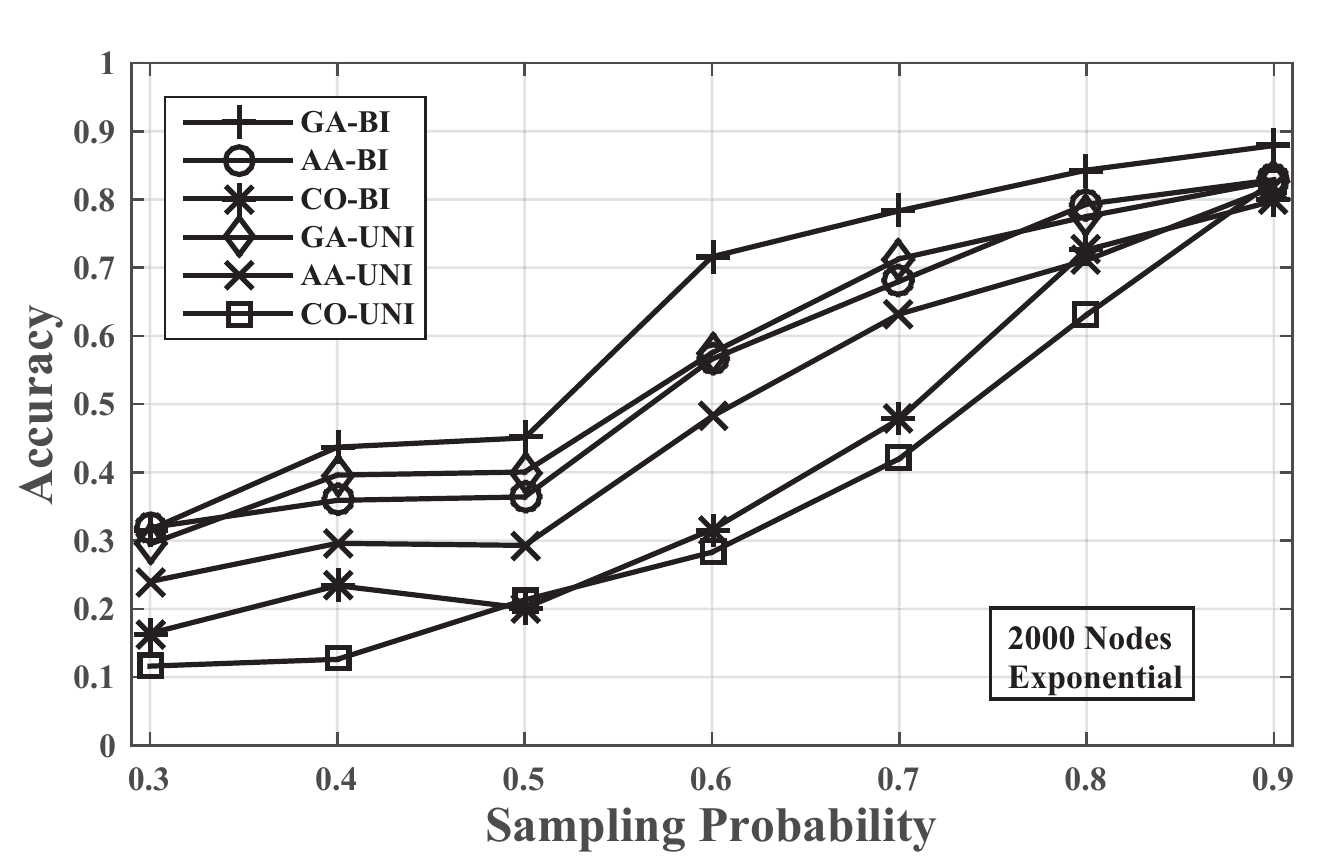}
			\vspace{-4mm}
			\label{fig:2000exp}
		\end{minipage}%
		
	}
	
	\vspace{-4mm}
	\caption{\bf The accuracy of the algorithms on synthetic datasets with different degree distributions.}
	\vspace{-5.7mm}
	\label{fig:syn_accuracy}		
\end{figure*}
	
\vspace{-1.5mm}
\section{Experiments}\label{sec:experiments}
In this section, we present our experimental validation of our theoretical results and the performances of the proposed algorithms. We first introduce our experimental settings and provide detailed results subsequently.
\vspace{-1mm}
\subsection{Experimental Settings}
\vspace{-2mm}
\subsubsection{Experiment Datasets}
Recall that the two key assumptions made in the modeling are that the underlying social network is generated by the stochastic block model and that the published and the auxiliary networks are sampled from the underlying network. To validate our theoretical findings and meanwhile evaluate the proposed algorithms in real contexts, we conduct experiments on three different types of data sets, with each one closer to the practical situations than the last one by gradually relaxing the assumptions.

	(i) \textbf{Synthetic Dataset:} Following the stochastic block model, we generate three sets of networks with Poisson, power law and exponential expected degree distributions respectively by properly assigning the community affinity values $\{p\}$. The size of each community is determined by adding a slight variation to the average community size, which equals to the number of nodes divided by the number of communities. For each set of networks, we take the sampling probabilities of the published and the auxiliary networks as $s_1=s_2$ ranging from 0.3 to 0.9. As this dataset strictly observes the assumptions of our models, it provides direct validations to our theoretical results.
	
	(ii) \textbf{Sampled Social Networks:} The underlying social networks are extracted from LiveJournal online social network \cite{cite:livejournal}, with the communities following from the ground-truth communities in  LiveJournal and the affinity values assigned to be proportional to the ratio of the edges between the communities over the number nodes in the communities. The published and the auxiliary networks are sampled from the underlying networks, again, with the sampling probabilities $s_1=s_2$ ranging from 0.3 to 0.9.  This ``semi-artificial" dataset lies in the middle of synthetic datasets and true cross-domain networks, which enables us to measure the robustness of our theoretical results against the restrictions imposed on the underlying social network.
	
	(iii) \textbf{Co-authorship Networks:} We extract four co-authorship networks in different areas from Microsoft Academic Graph (MAG) \cite{cite:MAG}. From those, we construct a group of networks with equivalent sets of nodes (2053 nodes in each set) and set up the correspondence of nodes as ground-truth based on the unique identifiers of authors in MAG. The communities are assigned based on the institution information of the authors (the affinity values in this case are assigned as in Sampled Social Networks). The four networks are then combined into six pairs, in which one is set as the published network and the other as the auxiliary network. Without relying on any artificial assumptions of generating the published and auxiliary networks, these procedures enable us to construct most genuine scenarios of de-anonymization from cross-domain social networks, which renders the dataset a touchstone for the applicability of our proposed algorithms.

Note that our empirical results in the first two datasets are respectively obtained by taking the average from 50 repetitive experiments. The statistics of the datasets are summarized in Table \ref{table:summary}.
 \vspace{-1mm}
\subsubsection{Algorithms Involved in Comparisons}
For both bilateral and unilateral de-anonymization, we run genetic algorithm (\textbf{GA-BI,GA-UNI}) in hope of finding exact minimizer of our cost functions, i.e., the optimal solution of BI-MAP-ESTIMATE and UNI-MAP-ESTIMATE problems. In both de-anonymization cases, we also evaluate the performance of our two proposed algorithms: the additive approximation algorithm (\textbf{AA-BI,AA-UNI}) and the convex optimization-based algorithm (\textbf{CO-BI,CO-UNI}).
\vspace{-1mm}
\subsubsection{Performance Metrics}
\vspace{-0.5mm}
The two performance metrics we calculate in the experiments are the \textbf{accuracy} of the mappings yielded by the algorithms and the values of the cost function $\Delta_{\pi}$ of the mappings. The accuracy of a mapping $\pi$ is defined as the portion of the nodes that $\pi$ maps correctly (as the ground-truth correct mapping) over the total number of nodes. Since we are not interested in the absolute values of the cost function of the mapping, we calculate the \textbf{relative value} with respect to the cost function of the mappings produced by \textbf{GA}, i.e., for a mapping $\pi$ and the mapping $\pi_{GA}$ produced by \textbf{GA}, $\pi$'s relative value is computed as $(\Delta_{\pi}-\Delta_{\pi_{GA}})/\Delta_{\pi_{GA}}$. Due to space limitations, we defer all the graphical representations of results on the mappings' cost function to \textbf{Appendix \ref{sec:sim_figure}}.


\vspace{-1.5mm}
\subsection{Experiment Results}
\vspace{-2mm}
\subsubsection{Synthetic Networks}
\vspace{-1mm}
We plot the performance of the aforementioned algorithms on synthetic networks with $\{500,1000,1500,2000\}$ number of nodes in Figures \ref{fig:syn_accuracy} and \ref{fig:syn_cost}, based on which we have the following observations: (i) Both \textbf{GA-BI} and \textbf{GA-UNI} exhibit good performance, achieving a de-anonymization accuracy close to 1 when the sampling probability is large in networks with Poisson and power law degree distribution; (ii) The relative value of the correct mapping (\textbf{TRUE-BI,TRUE-UNI}) is fairly small. Hence, we conclude that,  when the sampling probability is large, the cost function based on MAP estimation is an effective metric in both bilateral and unilateral de-anonymization, and is applicable to a wide range of degree distribution, which justify our theoretical results on the validity of the MAP estimate. However, when the sampling probability is small (e.g. $s=0.3,0.4$) or the expected degree distribution has large variation (exponential distribution), the accuracy of \textbf{GA} degrades substantially, only achieving a value of less than 0.4. This can be attributed to the fact that when the sampling probability becomes small, the published and the auxiliary networks have lower degree of structural similarity and the parameters deviate from the conditions in our theoretical results. 

In terms of the two algorithms we propose, we can see that they obtain good performance with respect to both approximately minimizing the cost function and unraveling the correct mapping, with \textbf{AA} superior than \textbf{CO} especially in low-sampling-probability area. Note that although the relative value of the two algorithms is large in high-sampling-probability area, this does not imply the poor performance of the algorithms but is mainly due to the optimal $\Delta_{\pi}$ becoming considerably small as the similarity of $G_1$ and $G_2$ grows high.


\vspace{-1.5mm}
\subsubsection{Sampled Social Networks}
\vspace{-0.5mm}
Figures \ref{fig:sam_accuracy} and \ref{fig:sam_cost} plot our empirical results on the second datasets where the published and auxiliary networks are sampled from real social networks with the number of nodes set as $\{500, 1000, 1500, 2000\}$.

As demonstrated by Figures \ref{fig:sam_accuracy} and \ref{fig:sam_cost}, although in this case the underlying social networks do not follow the stochastic block model, through minimizing the cost function we can still reveal a large proportion (up to 80\%) of the correct mapping, which demonstrates the robustness of the cost function we proposed. Furthermore, the two algorithms \textbf{AA} and \textbf{CO} still achieve reasonable accuracy of up to 0.7, which is not surprising due to that the cost function they seek to minimize is still effective in this case. However, a little defect is that the accuracy of \textbf{AA} can be higher than \textbf{GA} at some points. This reflects that the deviation of the real life social networks from the stochastic block model more or less influences the quality of the MAP estimate.

	\begin{figure}[htbp]
		\centering
		\subfigure[]{
			\begin{minipage}[]{0.48\linewidth}
				\centering
				\vspace{-4mm}
				\includegraphics[width=1.02\linewidth]{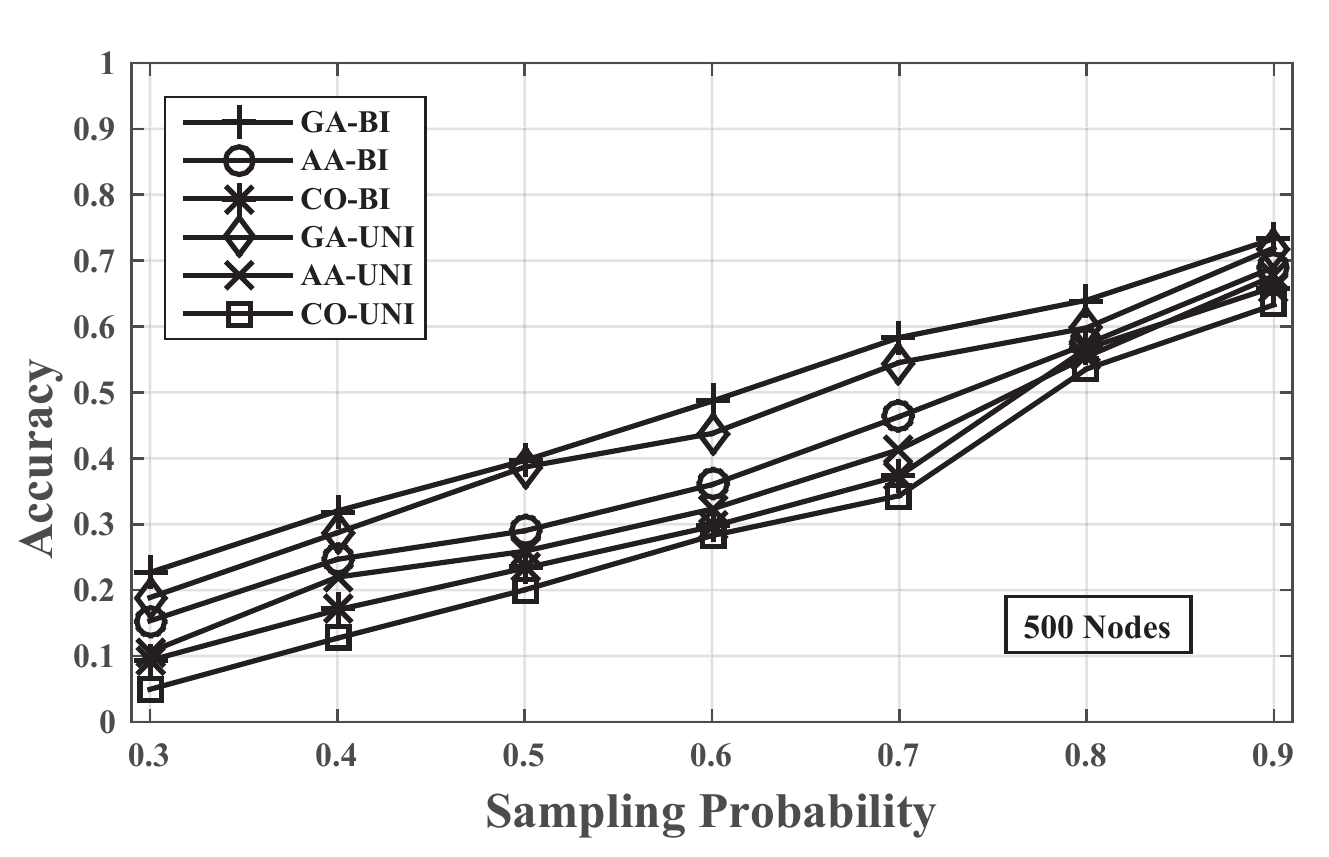}
				\vspace{-4mm}
				\label{fig:500sam}
			\end{minipage}%
			
		}
		\subfigure[]{
			\begin{minipage}[]{0.48\linewidth}
				\centering
				\vspace{-4mm}
				\includegraphics[width=1.02\linewidth]{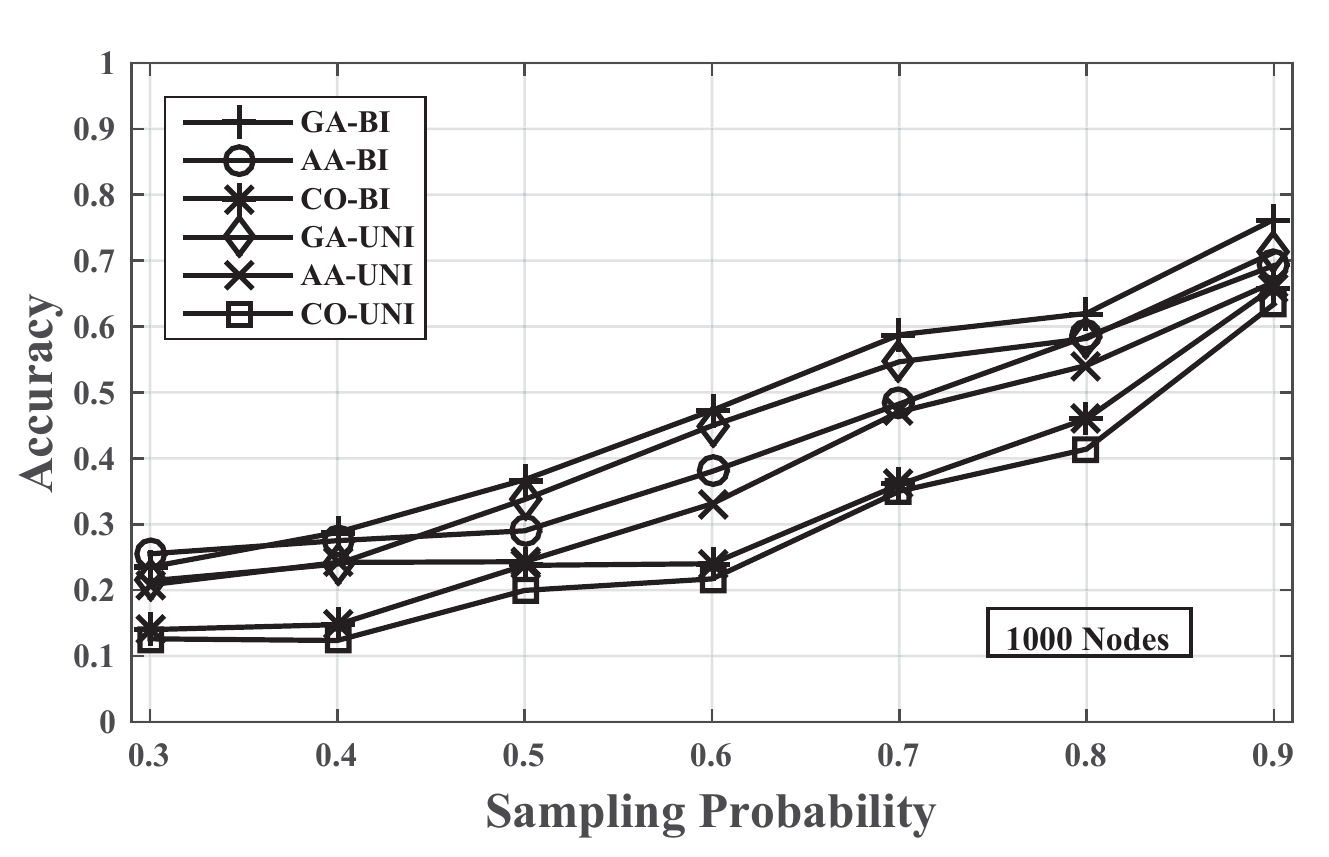}
				\vspace{-4mm}
				\label{fig:1000sam}
			\end{minipage}%
			
		}
		\subfigure[]{
			\begin{minipage}[]{0.48\linewidth}
				\centering
				\vspace{-4mm}
				\includegraphics[width=1.02\linewidth]{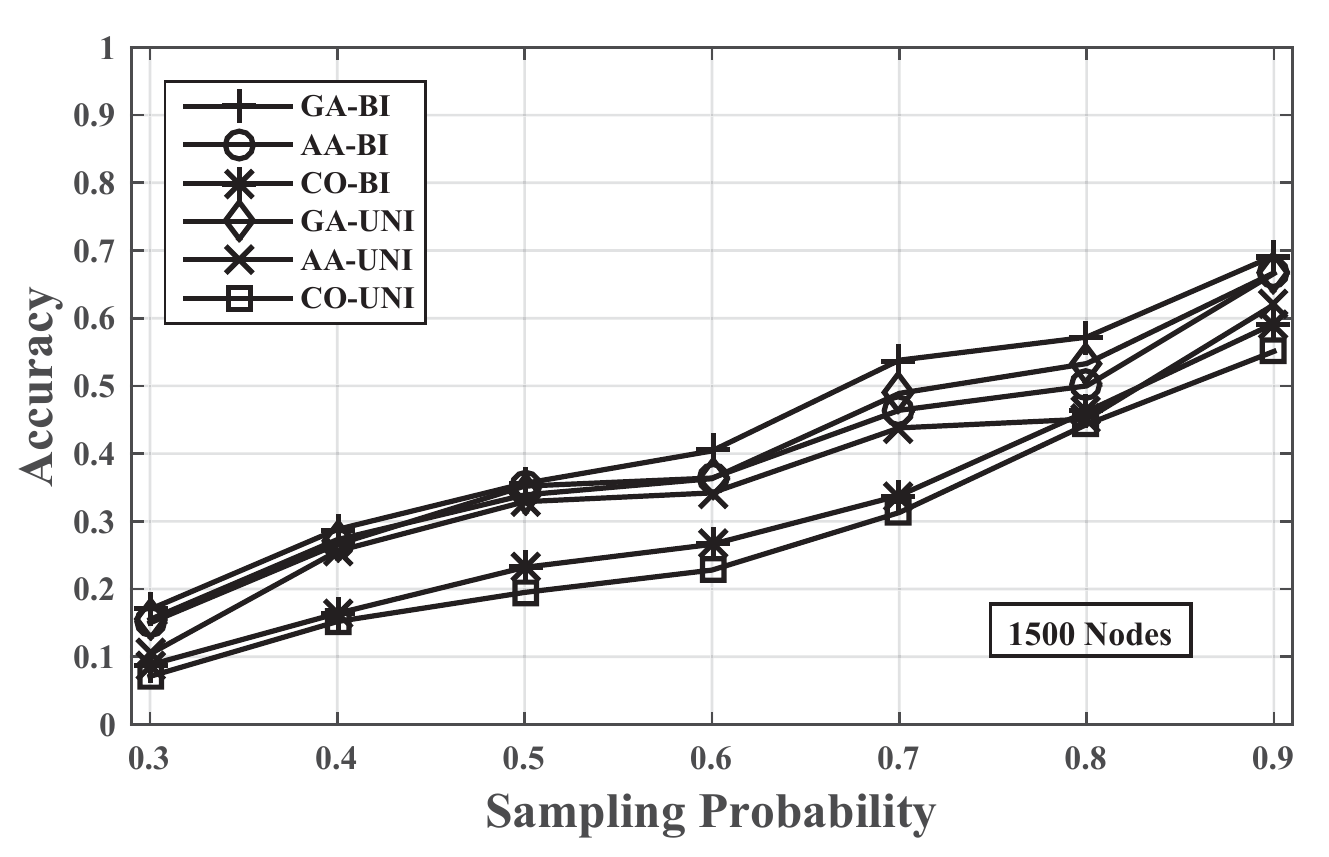}
				\vspace{-4mm}
				\label{fig:1500sam}
			\end{minipage}%
			
		}
		\subfigure[]{
			\begin{minipage}[]{0.48\linewidth}
				\centering
				\vspace{-4mm}
				\includegraphics[width=1.02\linewidth]{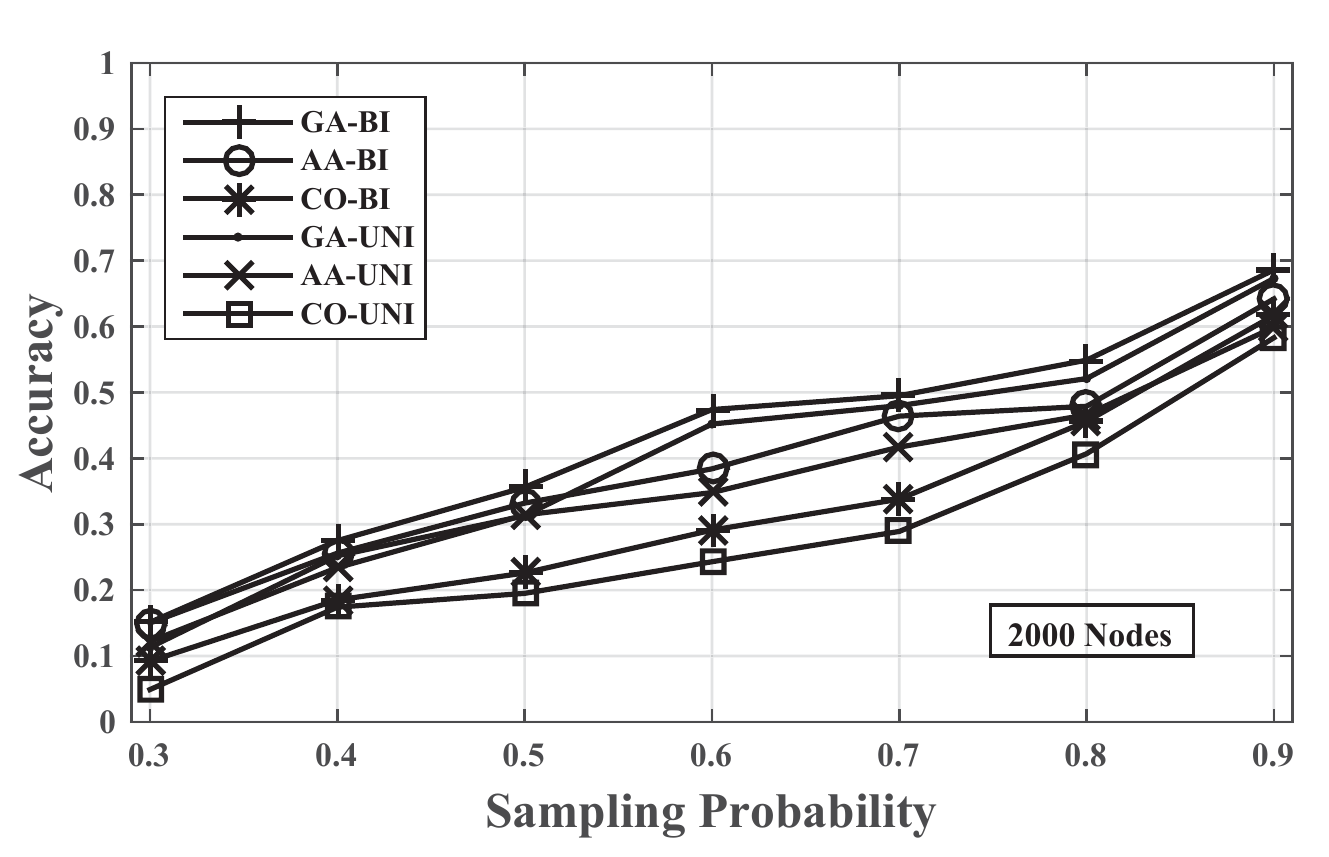}
				\vspace{-4mm}
				\label{fig:2000sam}
			\end{minipage}%
			
		}

		\vspace{-4mm}
		\caption{\bf The accuracy of the algorithms on Sampled Social Networks}
		\label{fig:sam_accuracy}
		\vspace{-5mm}		
	\end{figure}

\vspace{-1mm}
\subsubsection{Cross-domain Co-authorship Networks}
\vspace{-0.5mm}
As stated in experimental setup, we extract four groups of cross-domain co-authorship networks named as Networks A, B, C, D and thus construct six scenarios for social network de-anonymization\footnote{We do not distinguish the interchange of the published and auxiliary networks as different scenarios.}. We evaluate the performance of the algorithms on the six scenarios and show the results in Figures \ref{fig:coauthor_accuracy} and \ref{fig:coauthor_cost}. The figures present several observations and implications: (i) the proposed cost functions still serve as meaningful media for recovering the correct mapping even in realistic scenarios as the  relative value of the correct mapping is close to zero and \textbf{GA} achieves an average accuracy of 67.3\% in bilateral case and 59.0\% in unilateral case; (ii) The two proposed algorithms still enjoy reasonable accuracy, with \textbf{AA} successfully de-anonymizing 60.8\% of nodes in bilateral case and 51.5\% of nodes in unilateral case, and \textbf{CO} successfully de-anonymizing 44.4\% of nodes in bilateral case and 35.9\% of nodes in unilateral case. Therefore, the two algorithms can be qualified as effective methods for seedless social network de-anonymization, which implies that the privacy of current anonymized networks still suffers from attacks of adversaries even when pre-mapped seeds are unavailable; (iii) The performance of \textbf{CO} is most susceptible to the structure of networks among all three algorithms as the standard deviation of its accuracy on the six scenarios are above 3.5\% (3.51\% for \textbf{CO-BI}, 3.81\% for \textbf{CO-UNI}) while the counterparts of the other two algorithms are below 3.0\%.
\begin{figure}[htbp]
	\centering
	\subfigure[]{
		\begin{minipage}[]{0.48\linewidth}
			\centering
			\vspace{-4mm}
			\includegraphics[width=1.0\linewidth]{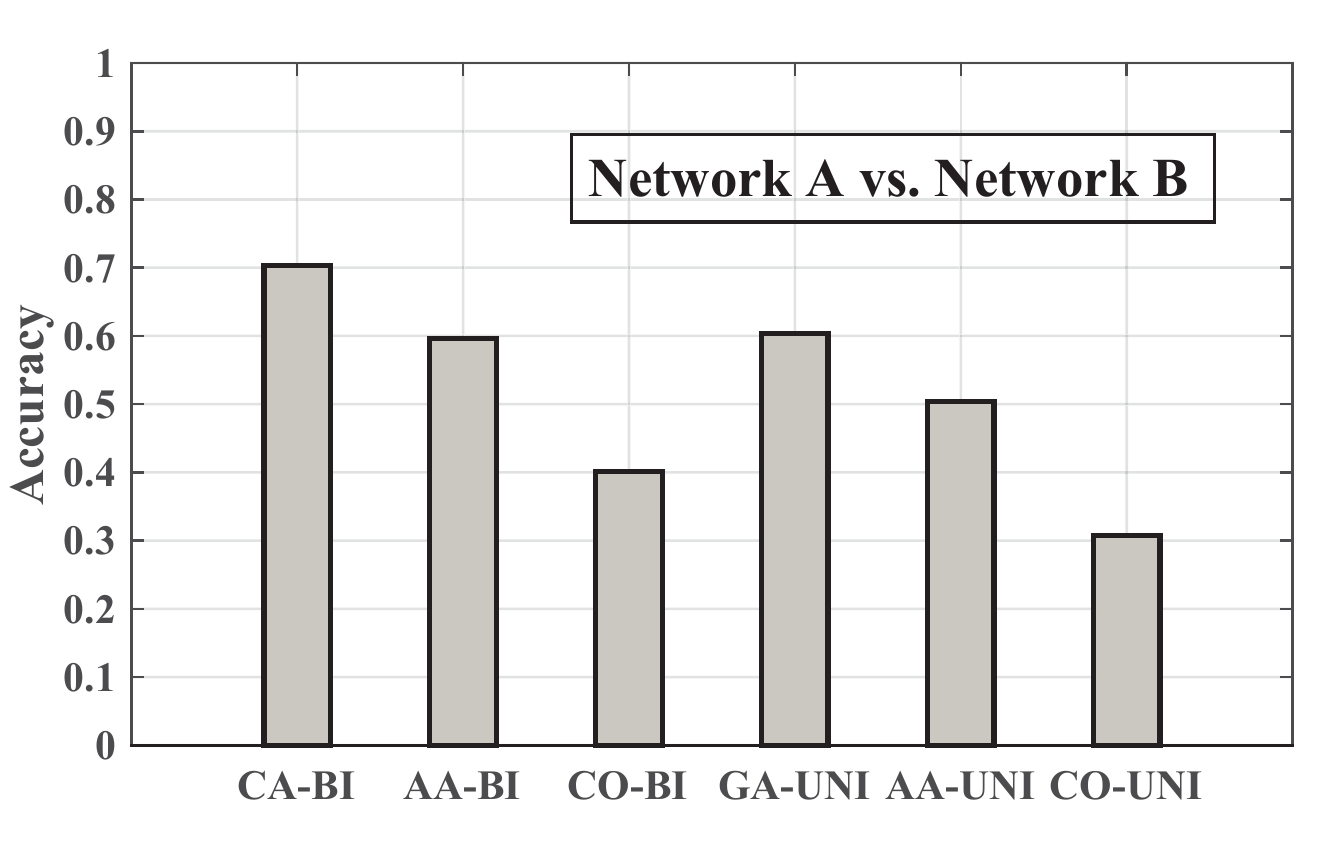}
			\vspace{-5mm}
			\label{fig:COAB}
		\end{minipage}%
		
	}
	\subfigure[]{
		\begin{minipage}[]{0.47\linewidth}
			\centering
			\vspace{-4mm}
			\includegraphics[width=1.0\linewidth]{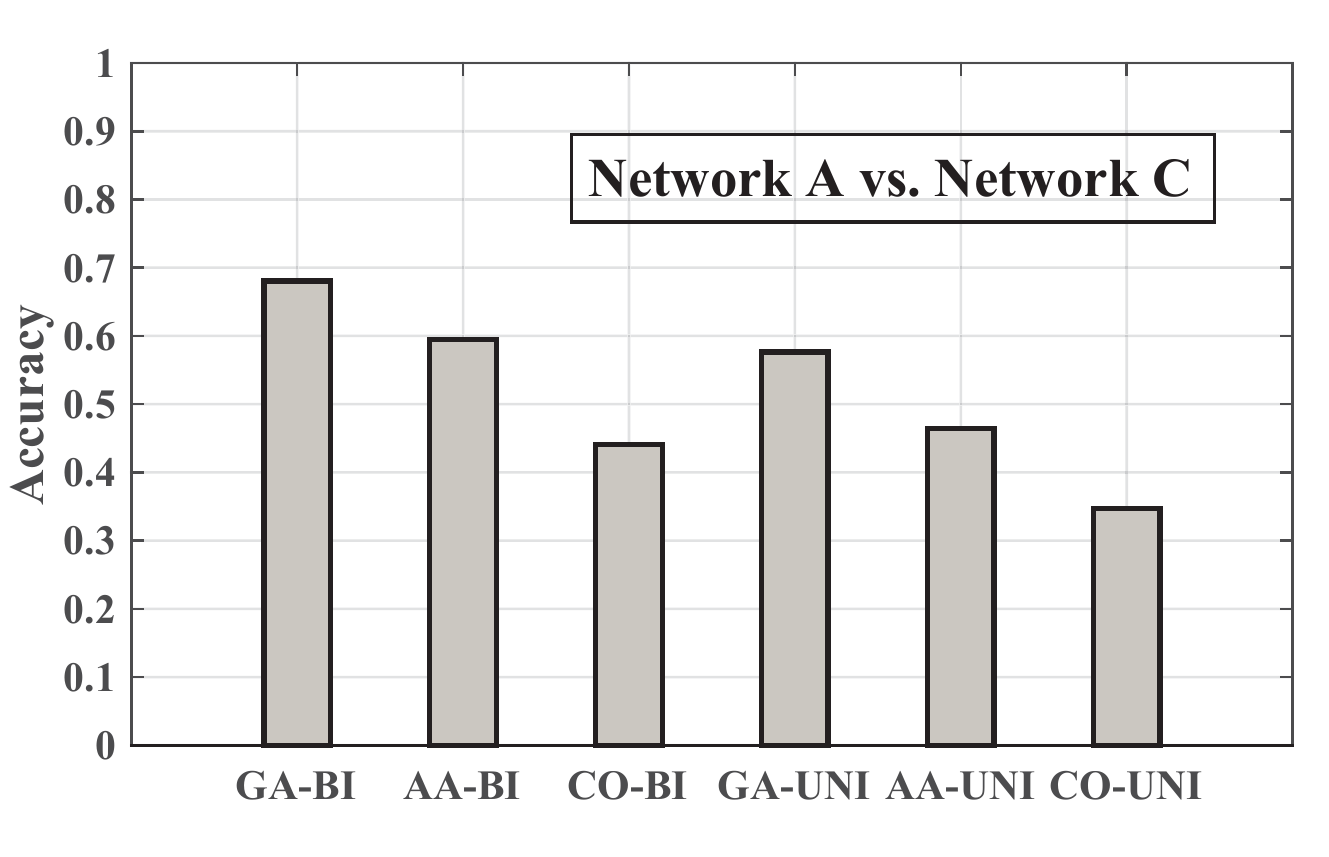}
			\vspace{-5mm}
			\label{fig:COAC}
		\end{minipage}%
		
	}

	\subfigure[]{
		\begin{minipage}[]{0.48\linewidth}
			\centering
			\vspace{-4mm}
			\includegraphics[width=1\linewidth]{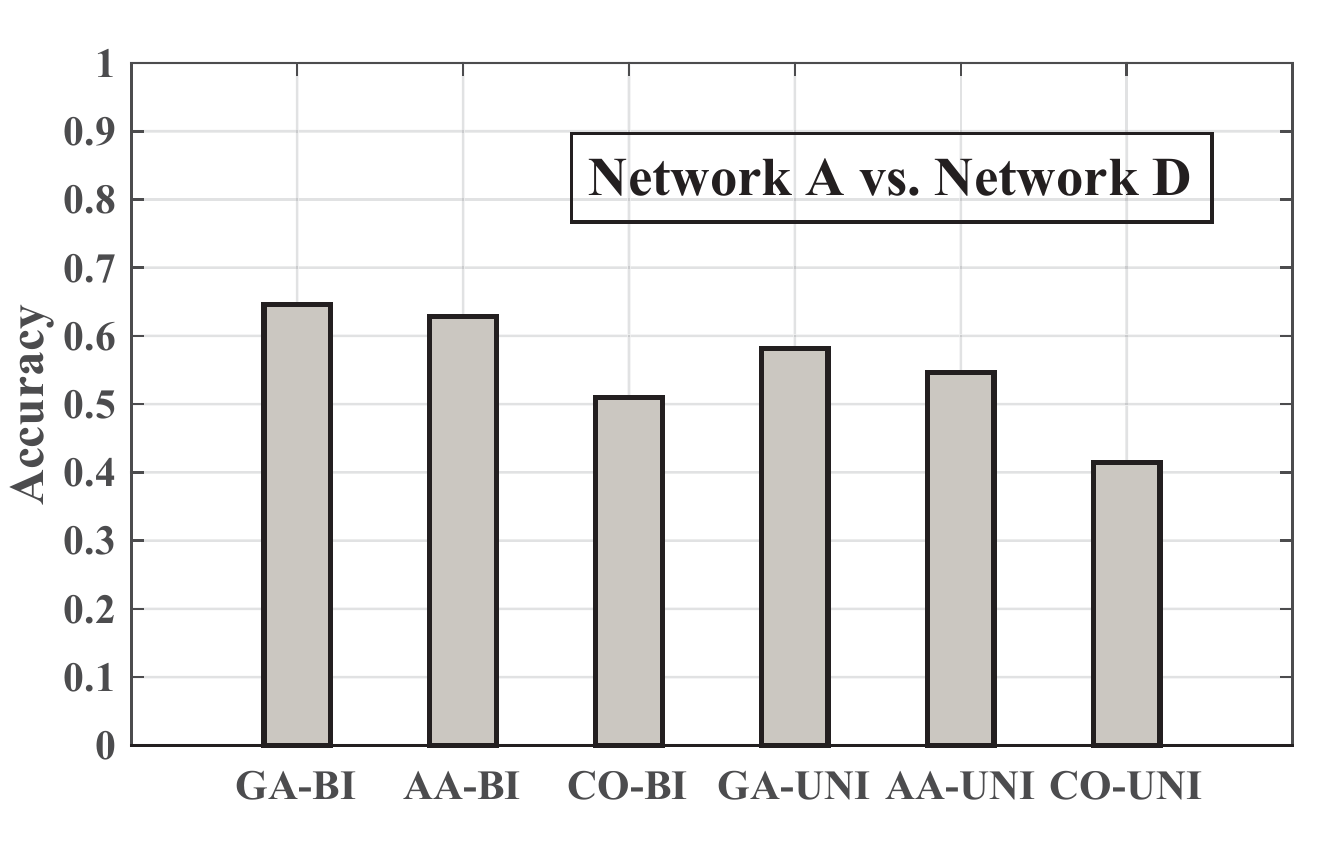}
			\vspace{-5mm}
			\label{fig:COAD}
		\end{minipage}%
		
	}
	\subfigure[]{
		\begin{minipage}[]{0.48\linewidth}
			\centering
			\vspace{-4mm}
			\includegraphics[width=1.0\linewidth]{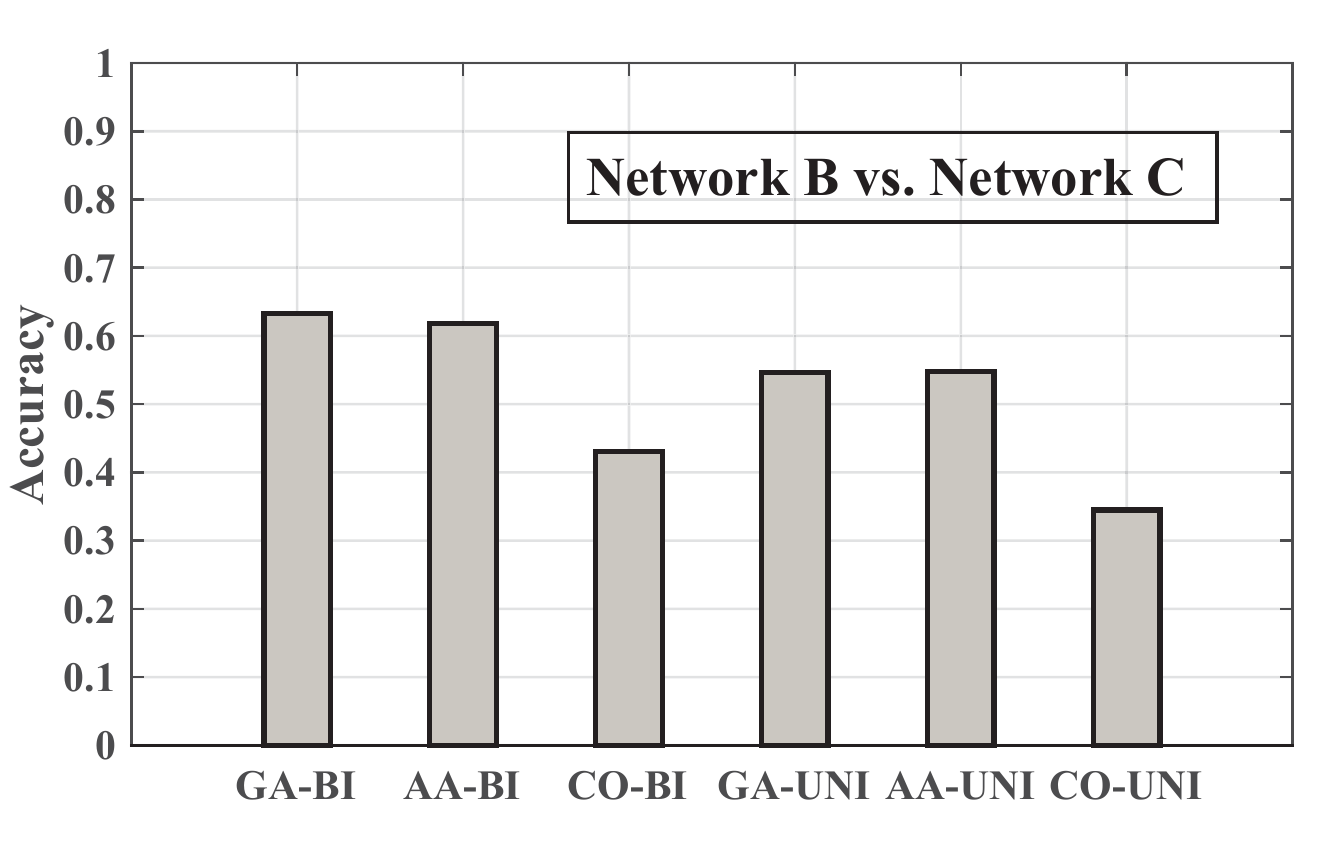}
			\vspace{-5mm}
			\label{fig:COBC}
		\end{minipage}%
		
	}

	\subfigure[]{
		\begin{minipage}[]{0.48\linewidth}
			\centering
			\vspace{-4mm}
			\includegraphics[width=1\linewidth]{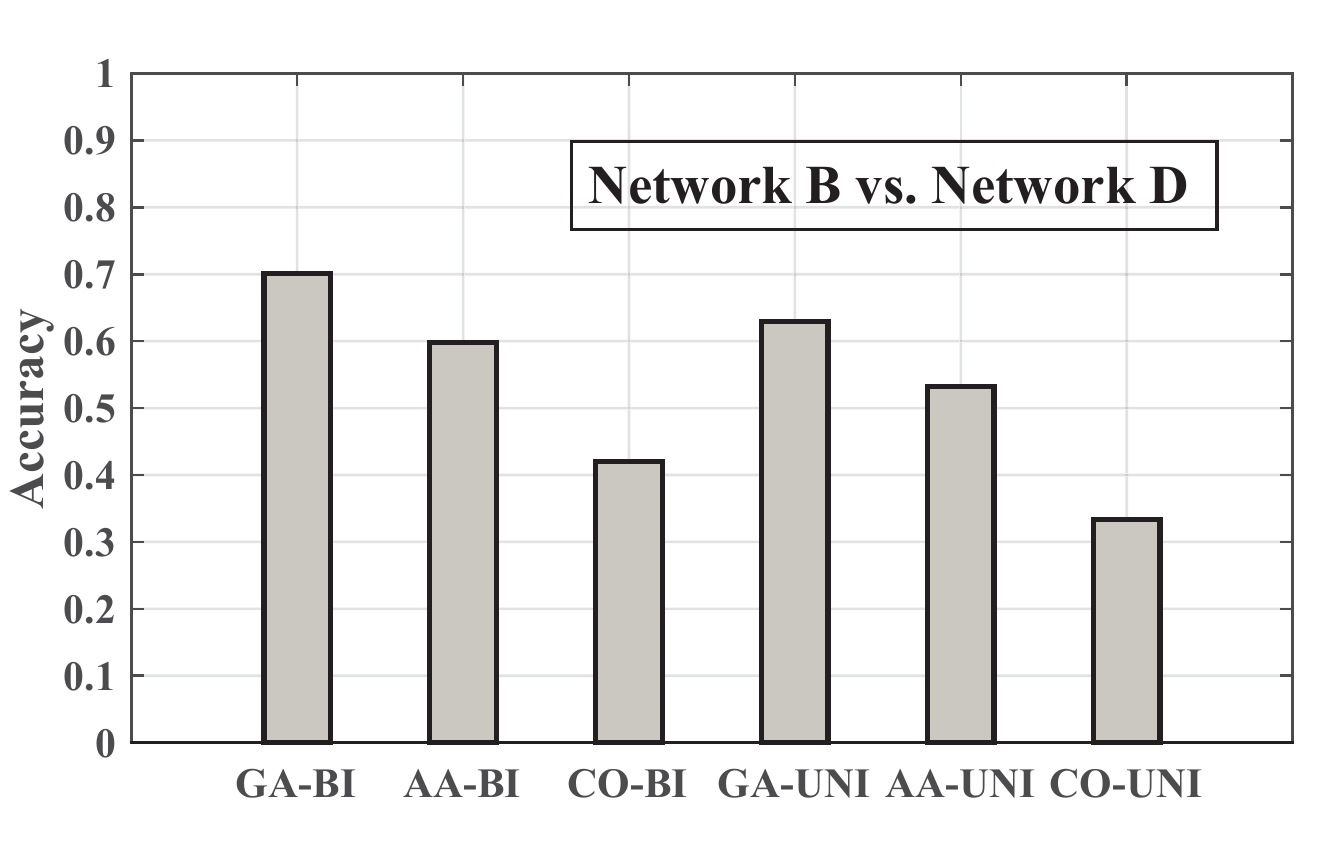}
			\vspace{-5mm}
			\label{fig:COBD}
		\end{minipage}%
		
	}
	\subfigure[]{
		\begin{minipage}[]{0.48\linewidth}
			\centering
			\vspace{-4mm}
			\includegraphics[width=1.0\linewidth]{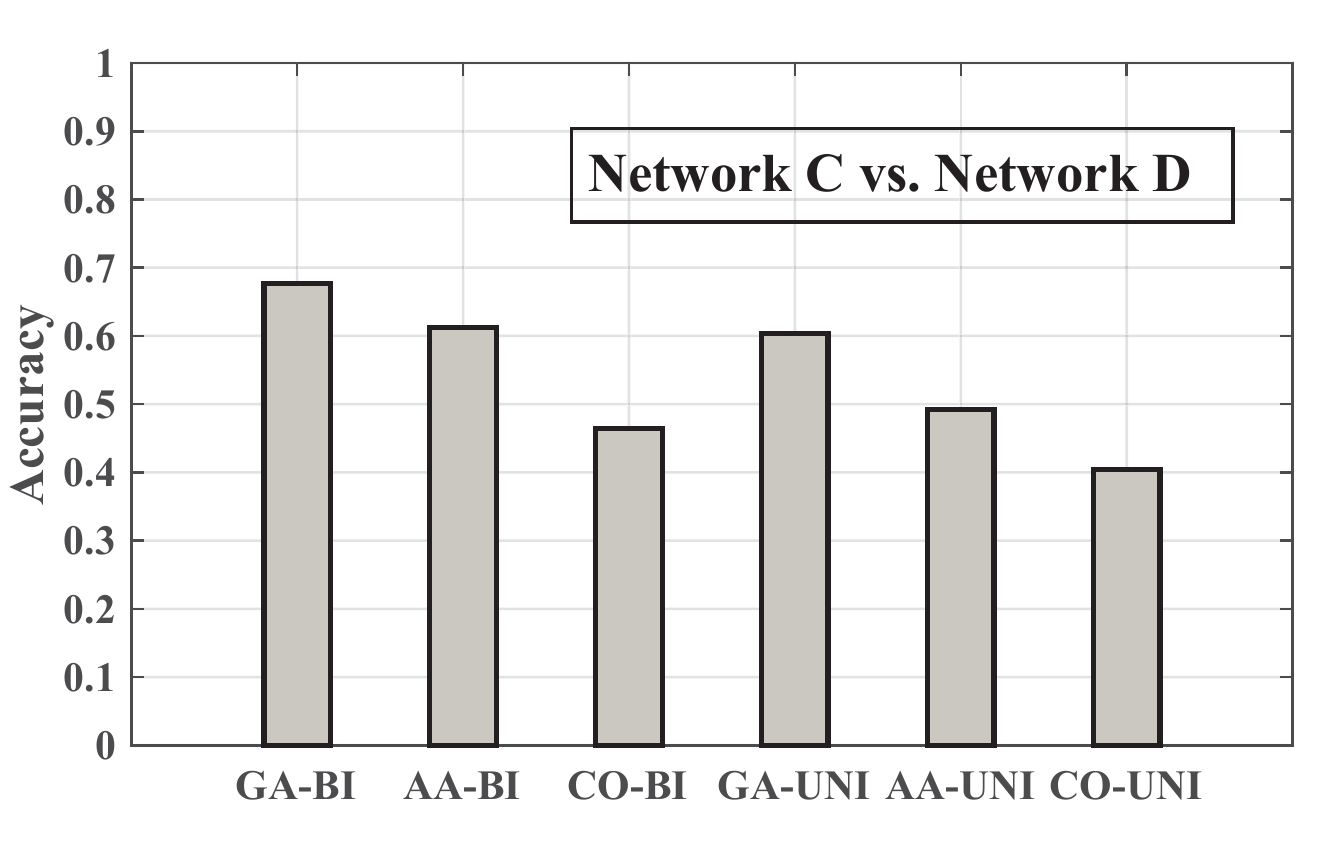}
			\vspace{-5mm}
			\label{fig:COCD}
		\end{minipage}%
		
	}
	\vspace{-5mm}
	\caption{\bf The accuracy of the algorithms on Cross-domain Co-authorship Networks}
	\label{fig:coauthor_accuracy}
	\vspace{-4mm}		
\end{figure}

\subsubsection{Significance of Community Information}
\vspace{-0.5mm}
A notable phenomenon from all the experiments is that the accuracy of the algorithms in bilateral de-anonymization is higher than that in unilateral de-anonymization, especially for \textbf{AA} and \textbf{CO}. According to the experimental results, the gap is at least 3.5\% in each setting and can reach up to 15\% in the worst case. This, from an empirical point of view, demonstrates the importance of the community information on social network de-anonymization.

\vspace{-2mm}	
\section{Conclusion}\label{sec:conclusion}
In this paper, we have presented a comprehensive study of the community-structured social network de-anonymization problem.
Integrating the clustering effect of underlying social network in our models, we have derived a well-justified cost function based on MAP estimation. To further consolidate the validity of such cost function, we have shown that under certain mild conditions, the minimizer of the cost function indeed coincides with the correct mapping. Subsequently, we have investigated the feasibility of the cost function algorithmically by first proving the approximation hardness of the optimization problem induced by the cost function and then proposing two algorithms with their respective performance guarantee by resolving the interweaving of cost function, network topology and candidate mappings through relaxation techniques. All our theoretical findings have been empirically validated through both synthetic and real datasets, with a notable dataset being a set of rare true cross-domain networks that reconstruct a genuine context of social network de-anonymization.
\vspace{-2mm}

\vspace{-1.5mm}
\appendix\vspace{-1mm}
\section{Proof of Theorem 4.1}\label{sec:MAP-Estimate}\vspace{-1mm}
	 The method we use here is similar to that in \cite{cite:seedless}. Recall that for a mapping $\pi$, we define $\Delta_{\pi}=\sum_{i\le j}^{n}w_{ij}$$|\mathbbm{1}\{(i,j)\in E_1\}-$$\mathbbm{1}\{\pi(i),\pi(j)\in E_2\}|$. Then the proof can be briefly divided into two major steps. The first one is to derive an upper bound for the expectation of the number of (incorrect) mappings $\pi$'s with $\Delta_\pi \le \Delta_{\pi_0}$. The second one is to show that the derived upper bound converges to 0 under the conditions stated in the theorem, as $n\rightarrow\infty$. Based on that, the proof can be concluded as the number of $\pi$'s with $\Delta_{\pi}\le\Delta_{\pi_0}$ goes to 0, i.e., the correct mapping $\pi_0$ is th unique minimizer for $\Delta_{\pi}$ as $n\rightarrow\infty$. Now we turn to the first step as follows:

	1. \textbf{Derivation of the Upper Bound: }We define $\Pi_k$ as the set of all the mappings in $\Pi$ that map $k$ nodes incorrectly. Obviously, $\Pi_0=\{\pi_0\}$. Now we have
	$
	|\Pi_k|\le \binom{n}{k}\left(\frac{k!}{2}\right)\le n^k.
	$
	We subsequently define $S_k$ as a random variable representing the number of incorrect mappings in $\Pi_k$ whose value of cost function is no larger than $\Delta_{\pi_0}$. Formally, $S_k$ is given by
	$
	S_k=\sum_{\pi\in\Pi_k}\mathbbm{1}\{\Delta_\pi\le\Delta_{\pi_0}\}.
	$
	Summing over all $k$, we denote $S=\sum_{k=2}^nS_k$ as the total number of incorrect mappings that induce no larger cost function than the correct mapping $\pi_0$. The mean of $S$ can be calculated as:
	\begin{align}
	\mathbb{E}[S]&=\sum_{k=2}^n\mathbb{E}[S_k]\nonumber=\sum_{k=2}^n\sum_{\pi\in\Pi_k}\mathbb{E}[\mathbbm{1}\{\Delta_\pi\le\Delta_{\pi_0}\}]\nonumber\\&=\sum_{k=2}^n\sum_{\pi\in\Pi_k}Pr\{\Delta_\pi-\Delta_{\pi_0}\le 0\}\nonumber\\
	&\le \sum_{k=2}^{n}n^k\max_{\pi\in\Pi_k}Pr\{\Delta_\pi-\Delta_{\pi_0}\le 0\}.\label{eq:expectation}
	\end{align}
	
	For a mapping $\pi$, let $V_\pi$ be the set of vertices that it maps incorrectly. Then, we define $E_\pi=V_\pi\times V$, i.e., the set of node pairs with one or two vertices mapped incorrectly under $\pi$. For a $\pi\in\Pi_k$, we have 
	$
	|E_\pi|=nk-\frac{k^2}{2}-\frac{k}{2}.
	$
	As every node pair in $V\times V-E_\pi$ is mapped identically in $\pi$ and $\pi_0$, they contribute equally to $\Delta_{\pi_0}$ and $\Delta_\pi$ respectively. Next, we define two random variables for $\pi$ as 
	\begin{align*}
	X_\pi&=\sum_{(i,j)\in E_{\pi}}w_{ij}|\mathbbm{1}\{(i,j)\in E_1\}-\mathbbm{1}\{(\pi(i),\pi(j)\in E_2\}|,\\
	Y_\pi&=\sum_{(i,j)\in E_{\pi}}w_{ij}|\mathbbm{1}\{(i,j)\in E_1\}-\mathbbm{1}\{(i,j)\in E_2\}|.
	\end{align*}
	It is easy to verify that $\Delta_\pi-\Delta_{\pi_0}=X_\pi-Y_\pi$ for all $\pi$, where $Y_\pi$ is the value of cost function contributed by node pairs in $E_{\pi}$ under the correct permutation. For a node pair $(i,j)$, the probability that it contributes to $Y_{\pi}$ equals to $p_{c(i)c(j)}(s_1+s_2-2s_1s_2)$. Therefore, $Y_\pi$ is the weighted sum of independent Bernoulli random variables.
	
	For $X_\pi$, assume that $\pi$ has $\phi\ge 0$ transpositions\footnote{If a mapping $\pi$ has a transposition on $i,j$, it means that $\pi(i)=j$ and $\pi(j)=i$.}, then each transposition induces one invariant node pair in $E_{\pi}$. The remaining node pairs are not invariant under $\pi$, i.e., they are mapped incorrectly under $\pi$. Each node pair $(i,j)$ contributes $w_{ij}$ to $X_\pi$ if $(i,j)\in E_1$ and $(\pi(i),\pi(j))\notin E_2$ or vice versa. This happens with probability $p_{c(i)c(j)}(s_1+s_2-2p_{c(i)c(j)}s_1s_2)$. Note that the random variable for each node pair is not independent. As in \cite{cite:seedless}, we conservatively ignore the positive correlation and get a lower bound of $X_\pi$, which is the weighted sum of independent random Bernoulli variables. Also, since transpositions in $\pi$ can only occur in nodes in $V_\pi$, we have that $\phi\le k/2$. Now, denote $X_{ij}$ as a Bernoulli random variable with mean $p_{c(i)c(j)}(s_1+s_2-2p_{c(i)c(j)}s_1s_2)$ and $Y_{ij}$ as a Bernoulli random variable with mean $p_{c(i)c(j)}(s_1+s_2-2s_1s_2)$ as $Y_{ij}$. Based on the above manipulations, we can get a lower bound of $X_\pi$ and an upper bound of $Y_\pi$ as follows:
	\begin{equation*}
	\begin{aligned}
	X_\pi \overset{(\text{stoch.})\footnotemark}{\ge}&\sum_{(i,j)\in E_{\pi}\backslash \phi}w_{ij}X_{ij}\triangleq X_{\pi}'\\
	Y_\pi \overset{(\text{stoch.})}{\le}&\sum_{(i,j)\in E_{\pi}}w_{ij}Y_{ij}\triangleq Y_{\pi}'.
	\end{aligned}
	\end{equation*}
	\footnotetext{$\overset{(stoch.)}{\ge}$ denotes stochastic domination}

	Therefore, we can use the probability of event $\{X_{\pi'}-Y_{\pi'}\}\le 0$ to upper bound the probability of event $\{X_{\pi}-Y_{\pi}\}\le 0$. Denoting $\lambda_X$ as the expectation of $X_{\pi}'$ and $\lambda_Y$ as the expectation of $Y_{\pi}'$, the bound we use for $Pr\{X_\pi-Y_\pi\le0\}$ is summarized in the following lemma.
	
	\begin{lemma}\label{lemma:upperbound}
		For all mapping $\pi$, random variables $X_\pi$ and $Y_\pi$ satisfy that
		\begin{align}
		Pr\{X_\pi- Y_\pi\le 0 \}\le 2\exp\left(\frac{-(\lambda_X-\lambda_Y)^2}{12(\lambda_X+\lambda_Y)}\right)
		\label{eq:lemma}
		\end{align}
	\end{lemma}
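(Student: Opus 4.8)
The plan is to pass, via the stochastic domination already established above, from $X_\pi,Y_\pi$ to the weighted sums of independent Bernoulli variables $X_\pi'$ and $Y_\pi'$, and then to bound $Pr\{X_\pi'\le Y_\pi'\}$ by a union bound taken at the midpoint of the two means; the midpoint trick is what lets us avoid working with the difference $X_\pi'-Y_\pi'$ directly, since $X_\pi'$ and $Y_\pi'$ are not independent (both depend on $E_1$). As noted above, $Pr\{X_\pi-Y_\pi\le 0\}\le Pr\{X_\pi'-Y_\pi'\le 0\}=Pr\{X_\pi'\le Y_\pi'\}$, so it is enough to bound the right-hand side. The first thing to check is that the means $\lambda_X=\mathbb{E}[X_\pi']$ and $\lambda_Y=\mathbb{E}[Y_\pi']$ are genuinely separated: term by term $\mathbb{E}[X_{ij}]=p_{c(i)c(j)}(s_1+s_2-2p_{c(i)c(j)}s_1s_2)>p_{c(i)c(j)}(s_1+s_2-2s_1s_2)=\mathbb{E}[Y_{ij}]$ because $p_{ab}\le\beta\to 0<1$, and the $X_\pi'$-sum omits only the $\phi\le k/2$ transposition pairs, an $O(1/n)$ fraction of the $|E_\pi|=nk-k^2/2-k/2$ pairs; using in addition that $s_1,s_2$ stay bounded away from $1$, one gets $\lambda_X\ge(1+c-o(1))\lambda_Y$ for an absolute constant $c>0$, so $\Delta:=\lambda_X-\lambda_Y>0$ for all large $n$.

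Next, put $m:=(\lambda_X+\lambda_Y)/2$, so that $\lambda_X-m=m-\lambda_Y=\Delta/2$. If $X_\pi'\le Y_\pi'$ then at least one of the events $\{X_\pi'\le m\}$, $\{Y_\pi'\ge m\}$ must hold, whence $Pr\{X_\pi'\le Y_\pi'\}\le Pr\{X_\pi'\le m\}+Pr\{Y_\pi'\ge m\}$. Each summand is a one-sided, constant-relative deviation of a single sum of independent (weighted) Bernoulli variables from its mean, so the multiplicative Chernoff inequality \cite{cite:chernoffbound} applies (for weighted summands after the standard rescaling by the largest weight). With $\delta_X=\Delta/(2\lambda_X)<1/2$ the lower-tail bound gives $Pr\{X_\pi'\le m\}=Pr\{X_\pi'\le(1-\delta_X)\lambda_X\}\le\exp(-\delta_X^2\lambda_X/2)=\exp(-\Delta^2/(8\lambda_X))$; with $\delta_Y=\Delta/(2\lambda_Y)$ the upper-tail bound gives $Pr\{Y_\pi'\ge m\}=Pr\{Y_\pi'\ge(1+\delta_Y)\lambda_Y\}\le\exp(-\delta_Y^2\lambda_Y/3)=\exp(-\Delta^2/(12\lambda_Y))$, where, should $\delta_Y$ fail to be below $1$, one instead invokes the form $\exp(-\delta_Y^2\lambda_Y/(2+\delta_Y))$ that is valid for every $\delta_Y\ge 0$ and only improves the constant.

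Finally, bounding $\lambda_X\le\lambda_X+\lambda_Y$ and $\lambda_Y\le\lambda_X+\lambda_Y$ in the denominators makes each of the two probabilities at most $\exp(-\Delta^2/(12(\lambda_X+\lambda_Y)))$, and adding them yields the factor $2$, which is Inequality~(\ref{eq:lemma}). The one point that needs genuine care is the mean separation from the first paragraph: the midpoint $m$ has to sit strictly between $\lambda_X$ and $\lambda_Y$ for the union-bound step to be non-vacuous, and $\delta_X,\delta_Y$ have to be of constant order for the resulting exponent to remain useful when this lemma is fed into Theorem~\ref{theorem:MAP}; this is exactly where $\alpha,\beta\to 0$, the hypothesis that $s_1,s_2$ avoid $1$, and the asymptotic negligibility of the excluded transposition pairs all enter.
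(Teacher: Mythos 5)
Your proof is correct and follows essentially the same route as the paper's: stochastic domination to pass to the independent weighted-Bernoulli sums $X_\pi'$ and $Y_\pi'$, a union bound at the midpoint $(\lambda_X+\lambda_Y)/2$ of the two means, the weighted Chernoff bounds from \cite{cite:chernoffbound} with $\delta_X=\Delta/(2\lambda_X)$ and $\delta_Y=\Delta/(2\lambda_Y)$, and finally enlarging each denominator to $\lambda_X+\lambda_Y$ to collect the factor $2$. Your additional check of the mean separation $\lambda_X>\lambda_Y$ and of the $\delta_Y>1$ edge case is careful bookkeeping the paper leaves implicit, but it does not change the argument.
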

	\begin{proof}
		First, we have that for all $\pi$
		\begin{align*}
		Pr\{X_\pi-& Y_\pi\le 0 \}\le Pr\{X_\pi'-Y_\pi'\le 0 \}\\&\le Pr\Big\{ Y_\pi'\ge \frac{\lambda_X+\lambda_Y}{2}\Big\}+Pr\Big\{ X_\pi'\le \frac{\lambda_X+\lambda_Y}{2}\Big\}
		\end{align*}
		
		Then we invoke Lemma \ref{lemma:difference} (Theorems 1 and 2 in \cite{cite:chernoffbound}), which presents Chernoff-type bounds for weighted sum of independent Bernoulli variables.
		
		\begin{lemma}\label{lemma:difference}(Theorems 1 and 2 in \cite{cite:chernoffbound})
			\ Let $a_1,a_2,\ldots,a_r$ be positive real numbers and let $X_1,\ldots,X_n$ be independent Bernoulli trials with $\mathbb{E}[X_j]=p_j$. Defining random variable $\Psi=\sum_{j=1}^ra_jX_j$ with $\mathbb{E}[\Psi]=\sum_{j=1}^ra_jp_j=m$, we have
			\begin{align*}
			Pr\{\Psi\ge(1+\delta)m\}&\le \exp\left( -m\delta^2/3 \right),\\ 
			Pr\{\Psi\le(1-\delta)m\}&\le \exp\left( -m\delta^2/2 \right). 
			\end{align*}
		\end{lemma}
		
		Using Lemma \ref{lemma:difference} by treating $X_{\pi}'$ and $Y_{\pi}'$ as the weighted ($w_{ij}$) sum of random variables $X_{ij}$ ($Y_{ij}$), we obtain that
		\begin{align*}
		Pr\Big\{Y_\pi'\ge \frac{\lambda_X+\lambda_Y}{2}\Big\}&\le \exp\left( -(\lambda_X-\lambda_Y)^2/12(\lambda_X+\lambda_Y)\right), \\
		Pr\Big\{X_\pi'\le \frac{\lambda_X+\lambda_Y}{2}\Big\}&\le \exp\left( -(\lambda_X-\lambda_Y)^2/8(\lambda_X+\lambda_Y)\right).
		\end{align*}
		
		Hence, we have
		\[
		Pr\{X_\pi-Y_\pi \le 0\}\le 2\exp\left(\frac{-(\lambda_X-\lambda_Y)^2}{12(\lambda_X+\lambda_Y)}\right).
		\qed\]
	\end{proof}

	%

	We now proceed to derive lower bound for the numerator and upper bound for the denominator in the exponent of the RHS of Inequality (\ref{eq:lemma}) to obtain the upper bound of the RHS. By standard calculation, we have 
	\begin{align*}
	&(\lambda_X-\lambda_Y)^2\\\ge& \left(2\displaystyle{\sum_{(i,j)\in E_\pi\backslash \phi}}w_{ij}p_{c(i)c(j)}(1-p_{c(i)c(j)})s_1s_2-\frac{k\overline{w}\beta(s_1+s_2-2s_1s_2)}{2}\right)^2\\
	\ge &
	\frac{k^2}{4}\left[4\left(n-\frac{k}{2}-1\right)\underline{w}\alpha(1-\beta)s_1s_2-\overline{w}\beta(s_1+s_2-2s_1s_2) \right]^2, 
	\end{align*}
	\normalsize
	and
	\begin{align*}
	&\lambda_X+\lambda_Y\\\le& \displaystyle{\sum_{(i,j)\in E_\pi}}[w_{ij}p_{c(i)c(j)}(s_1+s_2-2s_1s_2)\\&\quad+w_{ij}p_{c(i)c(j)}(s_1+s_2-2p_{c(i)c(j)}s_1s_2)]\\
	\le& 2\displaystyle{\sum_{(i,j)\in E_\pi}}w_{ij}p_{c(i)c(j)}(s_1+s_2)\\
	\le& 2\left(nk-\frac{k^2}{2}-k\right)\overline{w}\alpha(s_1+s_2).\\
	\end{align*}
	
	Therefore, by Lemma \ref{lemma:upperbound}, $Pr\{X_{\pi}-Y_{\pi}\le 0\}$ can be upper bounded by 
	\begin{small}
		\begin{align}
		& Pr\{X_{\pi}-Y_{\pi}\le 0\}\le 2\exp\left[-(\lambda_X-\lambda_Y)^2/12(\lambda_X+\lambda_Y) \right]\nonumber\\[-2pt]
		\le& 2\exp\Bigg\{-\frac{k^2\left[4(\frac{2n-k+2}{2})\underline{w}\alpha(1-\beta)s_1s_2-\overline{w}\beta(s_1+s_2-2s_1s_2) \right]^2}{96(nk-\frac{k^2}{2}-k)\overline{w}\alpha(s_1+s_2)} \Bigg\} \label{eq:upperbound}\\[-2pt]
		\le&\exp\Bigg\{-\frac{k^2\left[(n-\frac{k}{2}-1)\underline{w}\alpha(1-\beta)s_1s_2\right]^2}{6(nk-\frac{k^2}{2}-k)\overline{w}\alpha(s_1+s_2)}\Bigg\}, \label{eq:upperbound1}
		\end{align}
	\end{small}
	where Inequality (\ref{eq:upperbound1}) follows from the conditions stated in the theorem.
	%
	%

	2. \textbf{Convergence of the Upper Bound: }Now, we further show that the derived upper bound converges to 0 as $n\rightarrow \infty$. Due to the monotonicity of $w_{ij}$ with respect to $p_{c(i)c(j)}$,  we easily obtain that $\overline{w}=\log\left( \frac{1-\alpha(s_1+s_2-2s_1s_2)}{\alpha(1-s_1)(1-s_2)}\right) $ and $\underline{w}=\log\left( \frac{1-\beta(s_1+s_2-2s_1s_2)}{\beta(1-s_1)(1-s_2)}\right) $. Hence, $\overline{w}$ and $\underline{w}$ can be determined by $\alpha,\beta,s_1,s_2$.   
	
	Plugging Inequality (\ref{eq:upperbound1}) into  Inequality (\ref{eq:expectation}), we have 
	\begin{small}
		\begin{align*}
		\mathbb{E}[S]&\le 2\sum_{k=2}^n n^k\cdot \exp\left(-\frac{k^2\left[(n-\frac{k}{2}-1)\underline{w}\alpha(1-\beta)s_1s_2\right]^2}{6(nk-\frac{k^2}{2}-k)\overline{w}\alpha(s_1+s_2)}\right) \\
		&\le \sum_{k=2}^{\infty}\exp \Bigg\{ k\left(-\frac{\left[(n-\frac{k}{2}-1)\underline{w}\alpha(1-\beta)s_1s_2\right]^2}{6(n-\frac{k}{2}-1)\overline{w}\alpha(s_1+s_2)}+\log n\right)  \Bigg\}\\
		&\le \sum_{k=2}^{\infty}\exp \Bigg\{ k\left(-\frac{\left[(n-\frac{k}{2}-1)\underline{w}^2\alpha^2(1-\beta)^2s_1^2s_2^2\right]}{6\overline{w}\alpha(s_1+s_2)}+\log n\right)  \Bigg\}
		\end{align*}
	\end{small}
	Since $\alpha,\beta\rightarrow 0,\frac{\log \alpha}{\log \beta}\le \gamma$, we also have $\frac{ \overline{w}}{ \underline{w}}\le \gamma'=\Theta(\gamma)$ and $\overline{w}=\Theta(\log \frac{1}{\alpha})$ where $\gamma'$ may be a function of $\gamma$.
	Hence, we have for some constant $C$,
	
	\begin{align*}
	\mathbb{E}[S]&\le \sum_{k=2}^{\infty}\exp\Bigg\{ k\left(-\frac{\left[Cn\alpha^2(1-\beta)^2s_1^2s_2^2\log\frac{1}{\alpha}\right]}{\gamma'^2\alpha(s_1+s_2)}+\log n\right)  \Bigg\}.
	\end{align*}
	
	Therefore, if $\frac{\alpha(1-\beta)^2s_1^2s_2^2\log(1/\alpha)}{s_1+s_2}=\Omega({\frac{\gamma\log^2 n}{n}})+\omega(\frac{1}{n})$, the sum of the above geometric series goes to zero as $n$ goes to infinity.
	Therefore, $\mathbb{E}[S]\rightarrow 0$. Hence, with the above conditions in Theorem \ref{theorem:MAP} satisfied, the MAP estimate $\hat{\pi}$ coincides with the correct mapping $\pi_0$ with probability goes to 1 as $n$ goes to infinity.
	\vspace{-2mm}
\section{\text{Superiority of Our Cost Function}}\label{app:Superiority}
In this section, we compare our cost functions over previous ones proposed in the literature. Specifically, we demonstrate the superiority of our cost function in bilateral case over the most similar previous cost function proposed by Pedarsani et al. \cite{cite:seedless}. Recall that the cost function derived in \cite{cite:seedless}, which we denoted as $\Delta'_{\pi}$, is
 \[
 \Delta'_{\pi}=\sum_{i\le j}^{n}\left|\mathbbm{1}\{(i,j)\in E_1\}-\mathbbm{1}\{(\pi(i),\pi(j))\in E_2\}\right|.
 \]
 
 The advantages of our cost function is two-fold. First, $\Delta'_{\pi}$, as an unweighted version of our proposed $\Delta_{\pi}$, corresponds to the MAP estimator in bilateral de-anonymization when the underlying social network is an Erd\H{o}s-R\'{e}nyi graph. Therefore, our cost function in a sense, subsumes the cost function in \cite{cite:seedless} as a special case in bilateral de-anonymization, and has more generality when the underlying network is non-uniform or the adversary only possesses unilateral community information. Second, we show that in certain cases, the correct mapping $\pi_0$ is the unique minimizer of $\Delta_{\pi}$, while it is not the unique minimizer of $\Delta'_{\pi}$. Indeed, when the underlying social network is as shown in Figure \ref{fig:append2_1}, and the sampling probabilities $s_1=s_2\le \frac{\gamma'}{2}$, with $\gamma'$ defined as in the proof of Theorem \ref{theorem:MAP}, we have that the unique minimizer of $\Delta_{\pi}$ asymptotically almost surely coincides with $\pi_0$ by Theorem \ref{theorem:MAP}. However, as $\Delta'_{\pi}$ does not count the weight of node pairs, in each realization of $G_1$ and $G_2$, there exists a mapping $\pi'$ that permutes $\pi_0=\arg\min_{\pi\in\Pi}\Delta_{\pi}$ on some nodes in $C_3$ with $\Delta'_{\pi'}\le\Delta_{\pi_0}$. Therefore, in this case, the minimizer of $\Delta'_{\pi}$ does not equals to $\pi_0$, which demonstrates that $\Delta_{\pi}$ has wider application.

 \begin{figure}
\centering
\includegraphics[width=0.95\linewidth]{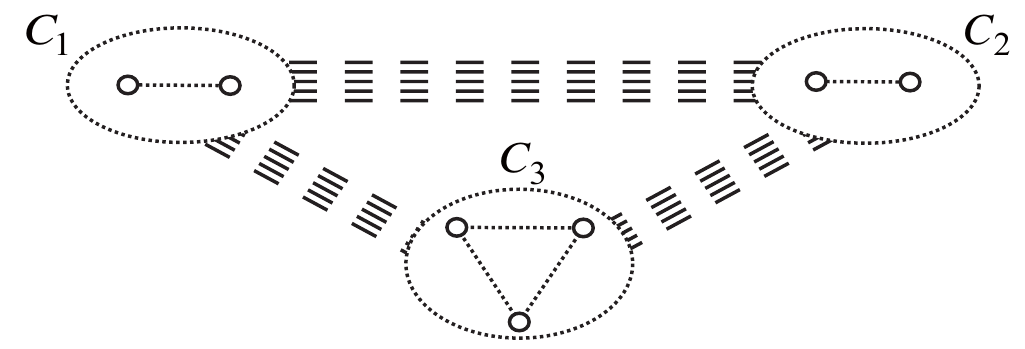}
\vspace{-4mm}
\caption{\small\bf An example demonstrating the superiority of our cost function: the sizes of the communities $|C_1|,|C_2|$ equal to some constant $C$ and $|C_3|=n-2C$, the affinity values $p_{11}=p_{22}=p_{33}=p_{12}=p_{23}={5\log n}/{n}$, $p_{13}={\log n}/{\sqrt{n}}$, the sampling probabilities $s_1=s_2={2}/{3}$}
\label{fig:append2_1}
\vspace{-8mm}
\end{figure}

\vspace{-4mm}
\section{Upper Bound of Inequality (19)}\label{app:upperbound}
	To present the upper bound of Inequality (\ref{eq:matrixbound}), we begin with bounding $\|\mathbf{D}^{-1}\|$ and $\|\mathbf{N}\|$.
	First, by the special block-diagonal structure of $\mathbf{D}$, we readily have that $\mathbf{D}^{-1}$ is also block diagonal with each $n\times n$ diagonal block as $\mathbf{D}_i^{-1}$, which is the identity matrix with the $i$th row replaced by $\frac{1}{\mathbf{v}_i}(-\mathbf{v}_1,\ldots,-\mathbf{v}_{i-1},1,-\mathbf{v}_{i+1},\ldots,-\mathbf{v}_n)$. We have
	\begin{align*}
	\|\mathbf{D}_i^{-1}\|\le 1+\frac{\sqrt{n}\epsilon_1}{\epsilon_2}, \quad\text{for all }i. 
	\end{align*}
	Hence we have, 
	\begin{align}
	\|\mathbf{D}^{-1}\|\le\max_{i=1\ldots n}\|\mathbf{D}_i^{-1}\|\le 1+\frac{\sqrt{n}\epsilon_1}{\epsilon_2}\label{eq:upperboundD}.
	\end{align}
	Similarly, we obtain 
	\begin{align*}
	\|\mathbf{N}\|^2&\le \max_{i,j=1\ldots n}\|\mathbf{N}_i\|_{\mathrm{F}}^2=\max_{i=1\ldots n}\sum_{jk}(s_{jk}^i+t_{jk}^i+w_{jk}^i+\frac{r_{ij}}{n})^2\\
	&\le 4\left(\max_{i=1\ldots n}\sum_k \left(s_{jk}^i\right)^2+\max_{i,j=1\ldots n}\sum_{jk} \left(t_{jk}^i\right)^2 \right.\\&\left.\quad+\max_{i,j=1\ldots n}\sum_k \left(w_{jk}^i\right)^2 + \max_{i,j=1\ldots n}\sum_k \left(\frac{r_{ij}}{n}\right)^2\right)^2.
	\end{align*}
	Next, we bound these the terms $s_{jk}^i$, $t_{jk}^i$, $w_{jk}^i$ and $r_{ij}$ one by one in the following inequalities.
	\begin{align*}
	&\max_{i=1\ldots n}\sum_{jk} \left(s_{jk}^i\right)^2\\=&\max_{i=1\ldots n}\sum_k\frac{1}{(\lambda_i-\lambda_j)^4}\\&\qquad\qquad\cdot\left(\mathbf{E}_{kj}(\lambda_j+\lambda_k-2\lambda_i)-\frac{v_j}{v_i}\mathbf{E}_{ki}(\lambda_k-\lambda_i)\right)^2
	\\\le &\max_{i=1\ldots n}\frac{1}{\delta^4}\left(4\sigma\sum_{jk}|\mathbf{E}_{kj}|+2\sigma\frac{\epsilon_1}{\epsilon_2}\sum_{kj}|\mathbf{E}_{kj}|\right)^2
	\\\le&\frac{4\sigma^2}{\delta^4}\left(1+2\frac{\epsilon_1}{\epsilon_2}\right)^2\xi^2
	\end{align*}
	\begin{align*}
	&\max_{i=1\ldots n}\sum_{jk} \left(t_{jk}^i\right)^2
	\\=&\max_{i=1\ldots n}\sum_{jk}\frac{1}{(\lambda_i-\lambda_j)^4}\left(\mathbf{G}_{kj}-\frac{v_j}{v_i}\mathbf{G}_{ki} \right)^2
	\\\le&\sum_{jk}\frac{1}{\delta^4}\left(\mathbf{G}_{kj}+2\frac{\epsilon_1}{\epsilon_2}\mathbf{G}_{ki}\right)^2 
	\\\le&\frac{1}{\delta^4}\left(1+2\frac{\epsilon_1}{\epsilon_2}\right)^2\|\mathbf{G}\|_\mathrm{F}^2
	\\\le&\frac{1}{\delta^4}\left(1+2\frac{\epsilon_1}{\epsilon_2}\right)^2\xi^4.
	\end{align*}
	\begin{align*}
	&\max_{i=1\ldots n}\sum_{jk} \left(w_{jk}^i\right)^2
	\\=&\max_{i=1\ldots n}\sum_{jk}\frac{\mu^2}{(\lambda_i-\lambda_j)^4}\left(\mathbf{M}'_{kj}-\frac{\mathbf{v}_j}{\mathbf{v}_i}\mathbf{M}'_{ki} \right)^2
	\\\le&\frac{\mu^2}{\delta^4}\max_{i=1\ldots n}\sum_{jk}\left(\sum_k\mathbf{M}'_{kj}+2\frac{\epsilon_1}{\epsilon_2}\sum_k\mathbf{M}'_{ki}\right)^2
	\\\le&\frac{\mu^2}{\delta^4}\left(1+2\frac{\epsilon_1}{\epsilon_2}\right)^2\|\mathbf{M'}\|_{\mathrm{F}}^2
	\\\le&\frac{\mu^2}{\delta^4}\left(1+2\frac{\epsilon_1}{\epsilon_2}\right)M^2.\quad\mbox{(by the orthonomality of $\mathbf{U}$)}
	\end{align*}
	\begin{align*}
	&\max_{i,j=1\ldots n}\sum_{k} \left(\frac{r_{ij}}{n}\right)^2
	\\=&\max_{i,j=1\ldots n}\frac{\mu^2}{n(\lambda_i-\lambda_j)^4}\left(\mathbf{v}_j\mathbf{M}'_{ii}-\mathbf{v}_i\mathbf{M}'_{ij} \right)^2
	\\\le&\frac{4\epsilon_1^2\mu^2}{n\delta^4}\|\mathbf{M'}\|_\mathrm{F}^2
	\\\le&\frac{4\epsilon_1^2\mu^2}{n\delta^4}{M}^2. 
	\end{align*}
	
	From the above manipulations, we have 
	\begin{small}
		\begin{align}
		\|\mathbf{N}\|^2&\le  4\left[\left(1+2\frac{\epsilon_1}{\epsilon_2}\right)^2\left(\frac{\sigma^2}{\delta^4}\xi^2+\frac{1}{\delta^4}\xi^4+\frac{\mu^2}{\delta^4}M^2\right) +\frac{4\epsilon_1^2\mu^2M^2}{n\delta^4} \right]\nonumber\\
		&\le 5\left[\left(1+2\frac{\epsilon_1}{\epsilon_2}\right)^2\left(\frac{\sigma^2}{\delta^4}\xi^2+\frac{1}{\delta^4}\xi^4+\frac{\mu^2}{\delta^4}M^2\right) \right]\label{eq:upperboundN},
		\end{align}
	\end{small}
	for sufficiently large $n$.
	Substituting Inequalities (\ref{eq:upperboundD}) and \ref{eq:upperboundN} into (\ref{eq:matrixbound}), it follows that
	\begin{align*}
	&\|\mathbf{F}-\mathbf{I}\|_{\mathrm{F}}=\|\mathbf{f-f}_0\|\le\\ &\sqrt{n}\frac{1-\left(1+\frac{\sqrt{n}\epsilon_1}{\epsilon_2}\right)\sqrt{\frac{5}{\delta^4}\left[\left(1+2\frac{\epsilon_1}{\epsilon_2}\right)^2\left({\sigma^2}\xi^2+\xi^4+{\mu^2}M^2\right) \right]}}{\left(1+\frac{\sqrt{n}\epsilon_1}{\epsilon_2}\right)\sqrt{\frac{5}{\delta^4}\left[\left(1+2\frac{\epsilon_1}{\epsilon_2}\right)^2\left({\sigma^2}\xi^2+\xi^4+{\mu^2}M^2\right) \right]}}.
	\end{align*}
\vspace{-5mm}
\section{MAP estimation of Unilateral De-anonymization}\label{app:MAP}
In this section, we derive the MAP estimator for unilateral de-anonymization. Recall that given $G_1$, $G_2$, $c$, $\bm{\theta}$, the MAP estimate $\hat{\pi}$ of the correct mapping $\mathbf{\pi}_0$ is defined as follows
\begin{align}
\hat{\pi}= \arg\max_{\pi\in \Pi}Pr(\pi_0=\pi\mid G_1,G_2,c,\bm{\theta}),\label{eq:mapestimator2}
\end{align}

The MAP estimator can be further written as:
\begin{align}
\hat{\pi}=\arg\max_{\pi\in\Pi}\sum_{G\in \mathcal{G}_\pi}p(G,\pi\mid G_1,G_2,c,\bm{\theta}),
\end{align}
where $\mathcal{G}_\pi$ is the set of all realizations of the underlying social network that are consistent with $G_1$, $G_2$ and $\pi$. By Bayesian rule, we have
\begin{align*}
&\arg\max_{\pi\in\Pi}\sum_{G\in\mathcal{G}_\pi}p(G,\pi\mid G_1,G_2,c,\bm{\theta})\\
=&\arg\max_{\pi\in\Pi}\sum_{G\in\mathcal{G}_\pi}\frac{p(G_1,G_2\mid G,\pi)p(G,\pi)}{p(G_1,G_2)}\\
=&\arg\max_{\pi\in\Pi}\sum_{G\in\mathcal{G}_\pi}p(G_1,G_2\mid G,\pi)p(G
)p(\pi)\\
=&\arg\max_{\pi\in\Pi}\sum_{G\in\mathcal{G}_\pi}p(G_1\mid G)p(G_2\mid G,\pi)p(G).
\end{align*}
Note that we drop parameters $c$ and $\bm{\theta}$ for brevity since their values are fixed. From the definitions of the models, we have:
\begin{small}
\begin{align*}
&\arg\max_{\pi\in\Pi}\sum_{G\in\mathcal{G}_\pi}p(G_1\mid G=g)p(G_2\mid G,\pi)p(G)\\
=&\arg\max_{\pi\in\Pi}\sum_{G\in\mathcal{G}_\pi}\prod_{i<j}^n(1-s_1)^{|E^{ij}|-|E_1^{ij}|}s_1^{|E_1^{ij}|}
\\&\cdot\prod_{i<j}^n(1-s_2)^{|E^{ij}|-|E_2^{\pi(i)\pi(j)}|}s_2^{|E_2^{\pi(i)\pi(j)}|}\cdot\prod_{i<j}^np_{c(i)c(j)}^{|E^{ij}|}(1-p_{c(i)c(j)})^{1-|E^{ij}|}
\\=&\arg\max_{\pi\in\Pi}\left(\prod_{i<j}^n\left(\frac{s_1}{1-s_1}\right)^{|E_1^{ij}|}\left(\frac{s_2}{1-s_2}\right)^{|E_2^{\pi(i)\pi(j)}|} \right)\\&\cdot\left(\sum_{g\in\mathcal{G}_\pi}\prod_{i<j}^k\left( \frac{p_{c(i)c(j)}(1-s_1)(1-s_2)}{1-p_{c(i)c(j)}}\right)^{|E^{ij}|}  \right) 
\\=&\arg\max_{\pi\in\Pi}\prod_{i<j}^n\left(\frac{s_2}{1-s_2}\right)^{|E_2^{\pi(i)\pi(j)}|}\\&\cdot \sum_{g\in\mathcal{G}_\pi}\prod_{i<j}^k\left( \frac{p_{c(i)c(j)}(1-s_1)(1-s_2)}{1-p_{c(i)c(j)}}\right)^{|E^{ij}|}\\
=& \sum_{g\in\mathcal{G}_\pi}\prod_{i<j}^k\left( \frac{p_{c(i)c(j)}(1-s_1)(1-s_2)}{1-p_{c(i)c(j)}}\right)^{|E^{ij}|},
\end{align*}
\end{small}
where $|E^{ij}|,|E_1|^{ij},|E_2^{ij}|$ take value 0 or 1 indicating whether there exists an edge between nodes $i$ and $j$ in $G,G_1,G_2$ respectively.
Note that in the above manipulations, we frequently eliminate the terms that do not depend on $\pi$. Particularly, in the last step, although the term \small{$\left(\frac{s_2}{1-s_2}\right)^{|E_2^{\pi(i)\pi(j)}|}$}\normalsize\\ depends on $\pi$, the value of the whole product \small$\prod_{i<j}^n\left(\frac{s_2}{1-s_2}\right)^{|E_2^{\pi(i)\pi(j)}|}$\normalsize\\ is independent of $\pi$ itself since it is a bijective mapping.

\begin{figure}
\centering
\includegraphics[width=0.95\linewidth]{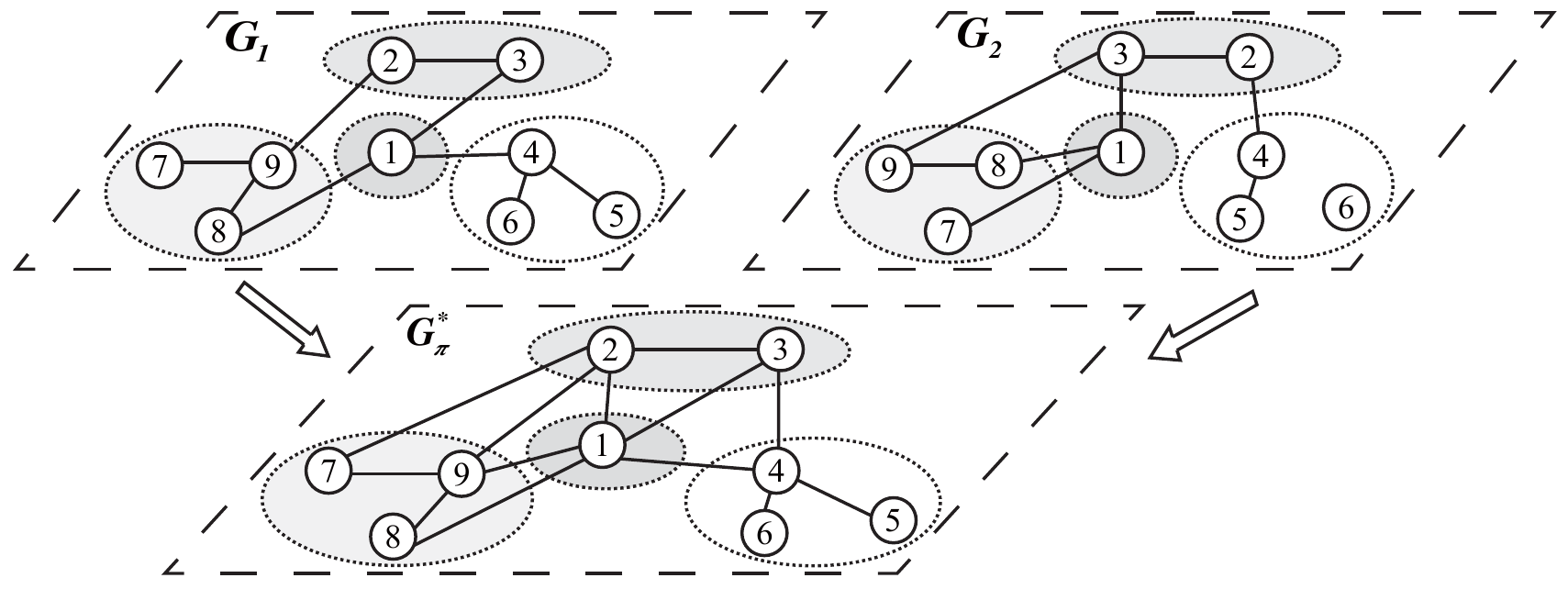}
\vspace{-4mm}
\caption{\small\bf An example of $G_{\pi}^*$ that has the minimum number of edges in $\mathcal{G}_{\pi}$, which is the set of all realizations of $G$ that are consistent with $G_1,G_2,\pi$. In this case $\pi=\pi_0$.}
\label{fig:append1}
\vspace{-7mm}
\end{figure}

\begin{figure*}[!tb]
	\centering
	\subfigure[]{
		\begin{minipage}[]{0.235\linewidth}
			\centering
			\vspace{-3mm}
			\includegraphics[width=1.0\linewidth]{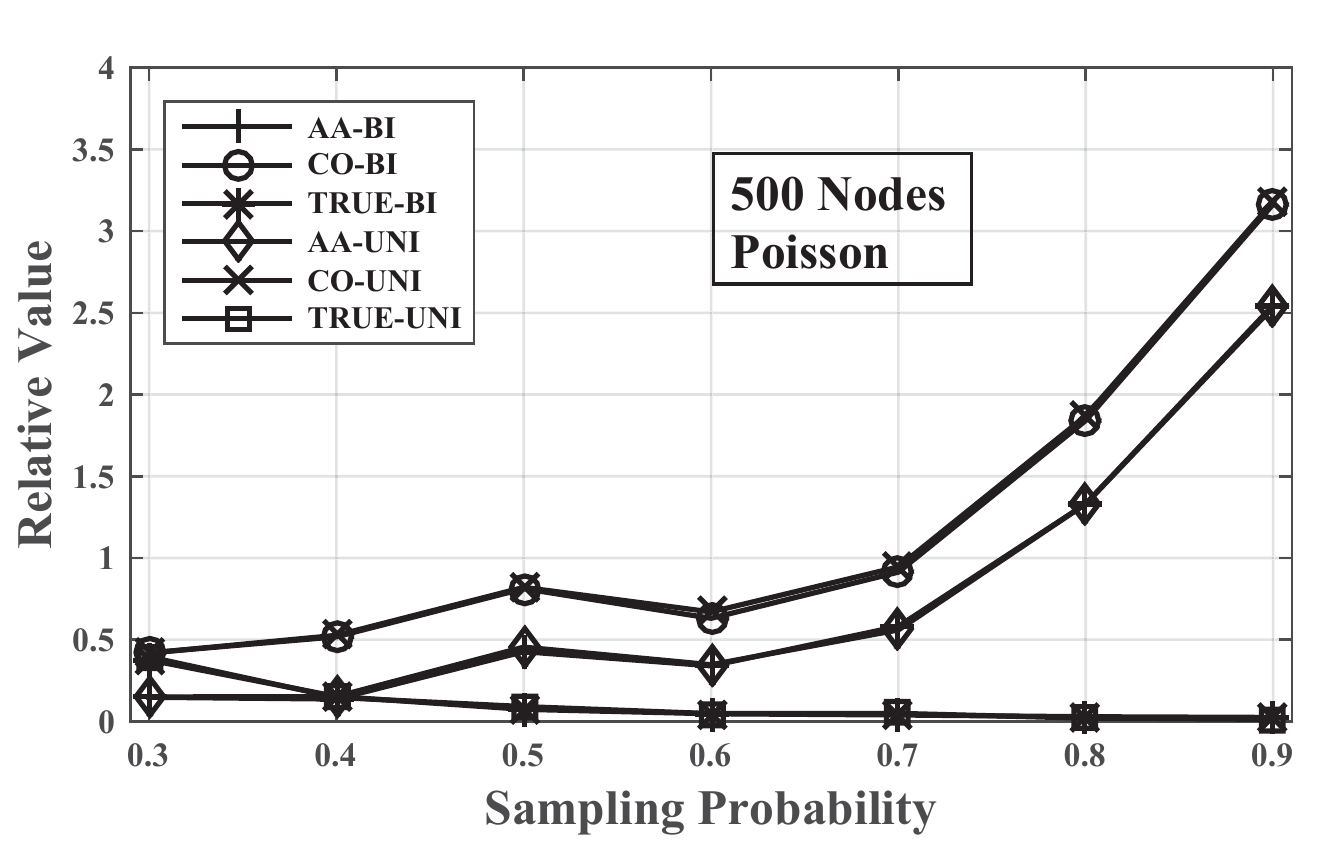}
			\vspace{-4mm}
			\label{fig:500poi1}
		\end{minipage}%
		
	}
	\subfigure[]{
		\begin{minipage}[]{0.235\linewidth}
			\centering
			\vspace{-3mm}
			\includegraphics[width=1.0\linewidth]{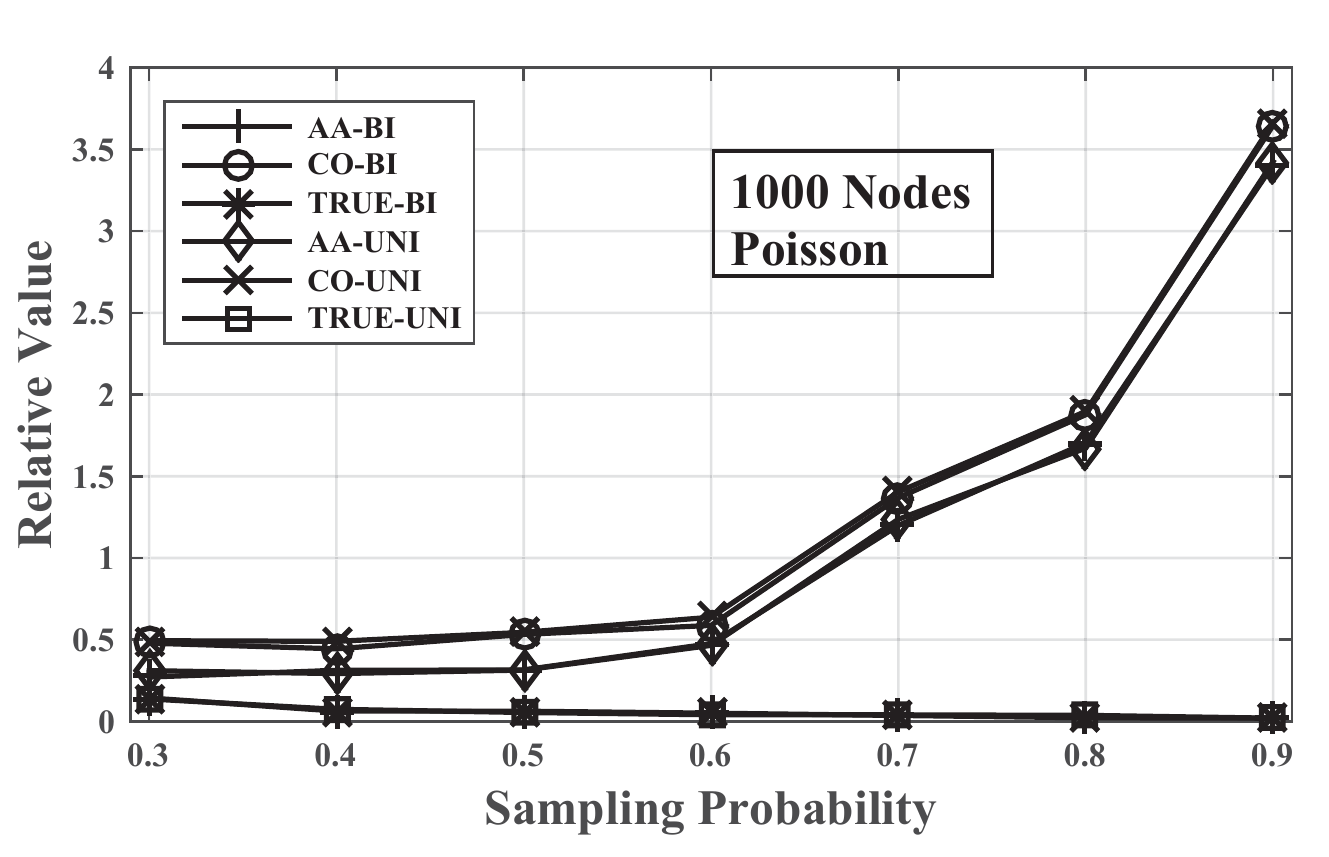}
			\vspace{-4mm}
			\label{fig:1000poi1}
		\end{minipage}%
		
	}
	\subfigure[]{
		\begin{minipage}[]{0.235\linewidth}
			\centering
			\vspace{-3mm}
			\includegraphics[width=1.0\linewidth]{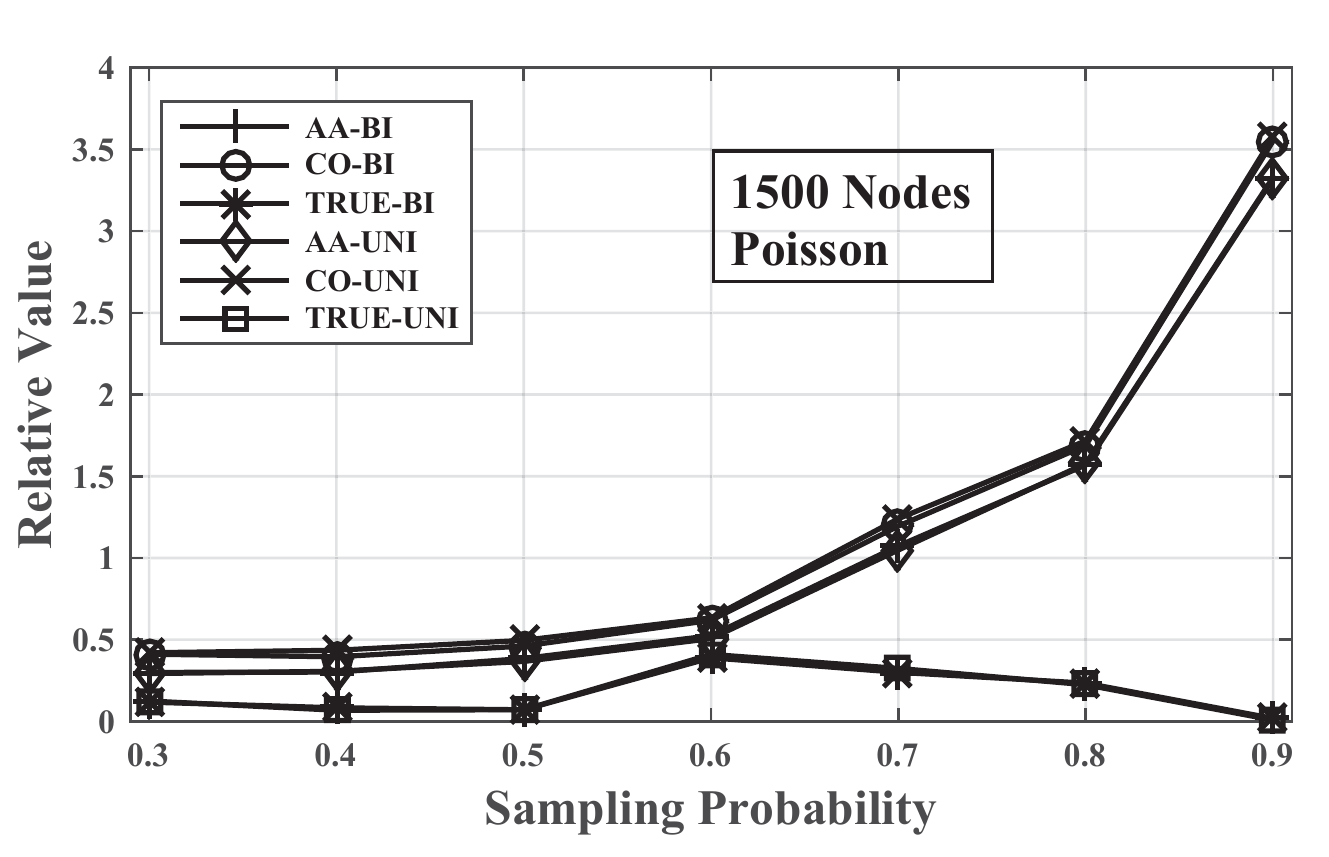}
			\vspace{-4mm}
			\label{fig:1500poi1}
		\end{minipage}%
		
	}
	\subfigure[]{
		\begin{minipage}[]{0.235\linewidth}
			\centering
			\vspace{-3mm}
			\includegraphics[width=1.0\linewidth]{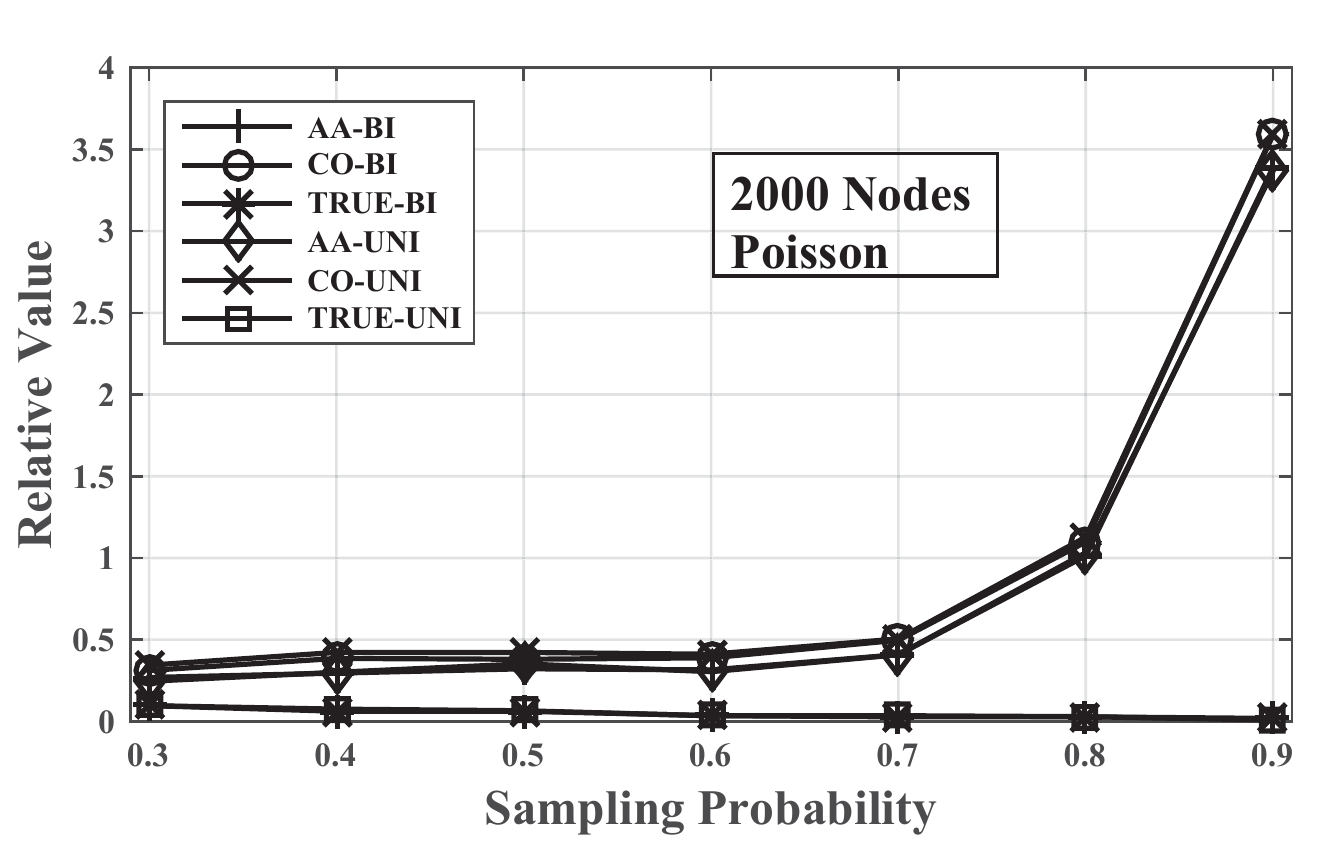}
			\vspace{-4mm}
			\label{fig:2000poi1}
		\end{minipage}%
		
	}

	\subfigure[]{
		\begin{minipage}[]{0.235\linewidth}
			\centering
			\vspace{-3mm}
			\includegraphics[width=1.0\linewidth]{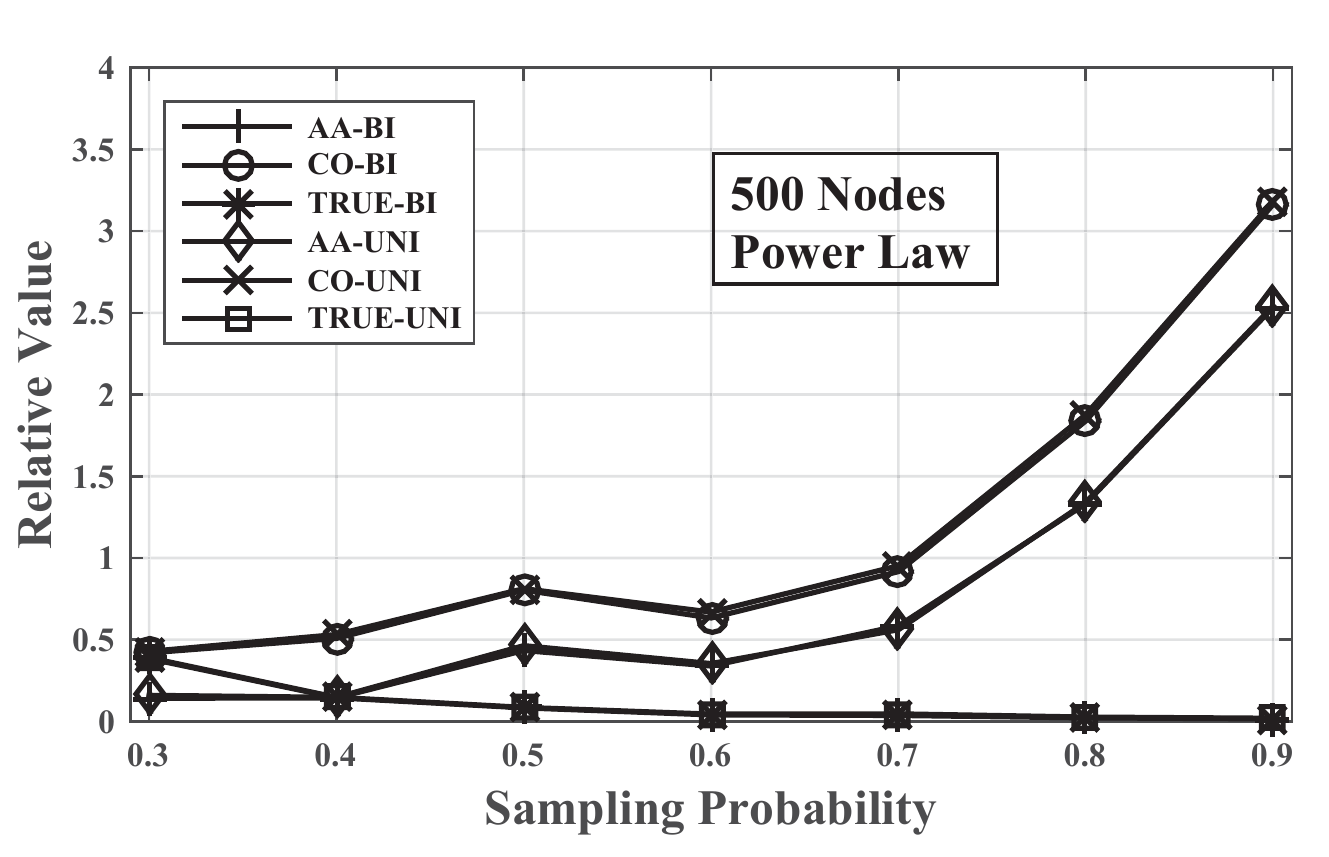}
			\vspace{-4mm}
			\label{fig:500pow1}
		\end{minipage}%
		
	}
	\subfigure[]{
		\begin{minipage}[]{0.235\linewidth}
			\centering
			\vspace{-3mm}
			\includegraphics[width=1.0\linewidth]{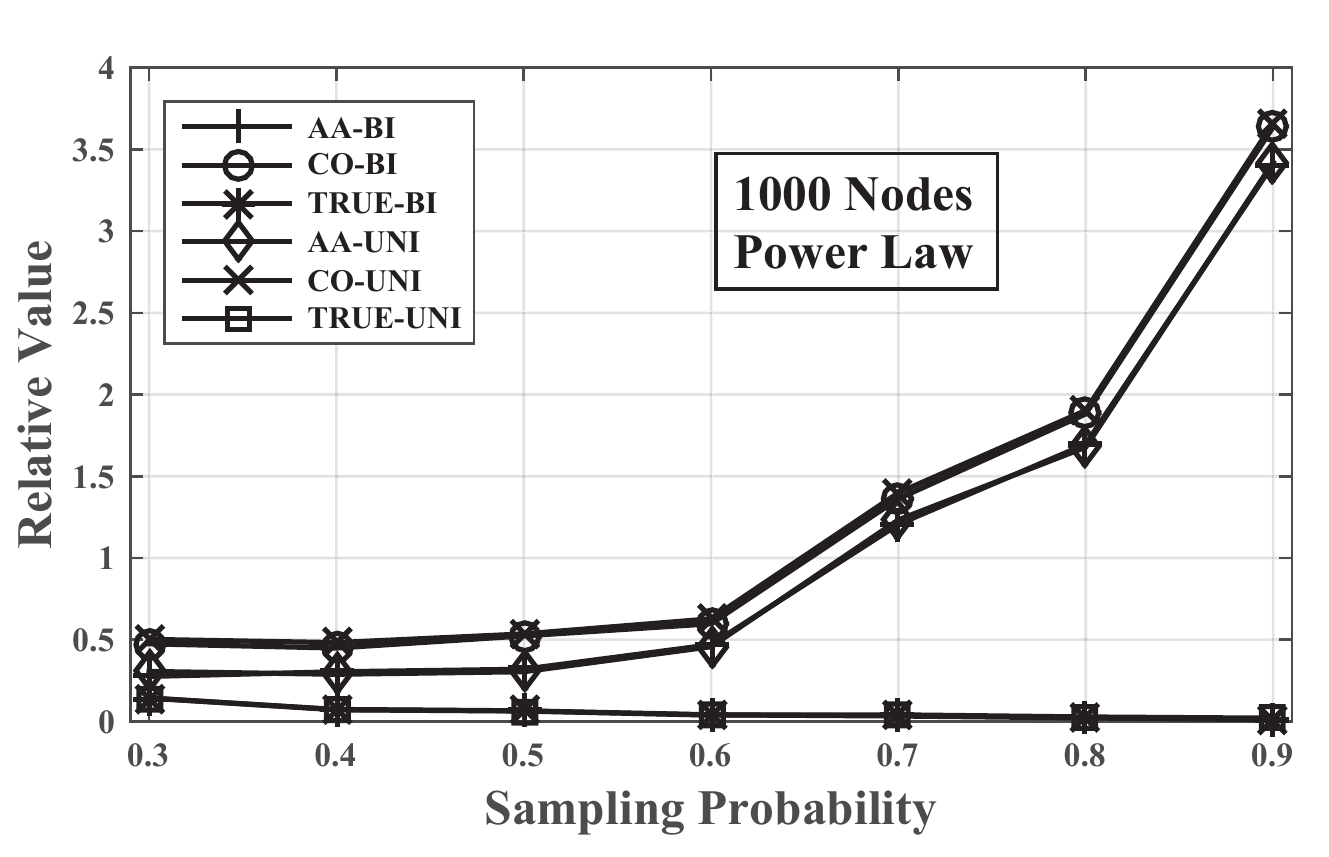}
			\vspace{-4mm}
			\label{fig:1000pow1}
		\end{minipage}%
		
	}
	\subfigure[]{
		\begin{minipage}[]{0.235\linewidth}
			\centering
			\vspace{-3mm}
			\includegraphics[width=1.0\linewidth]{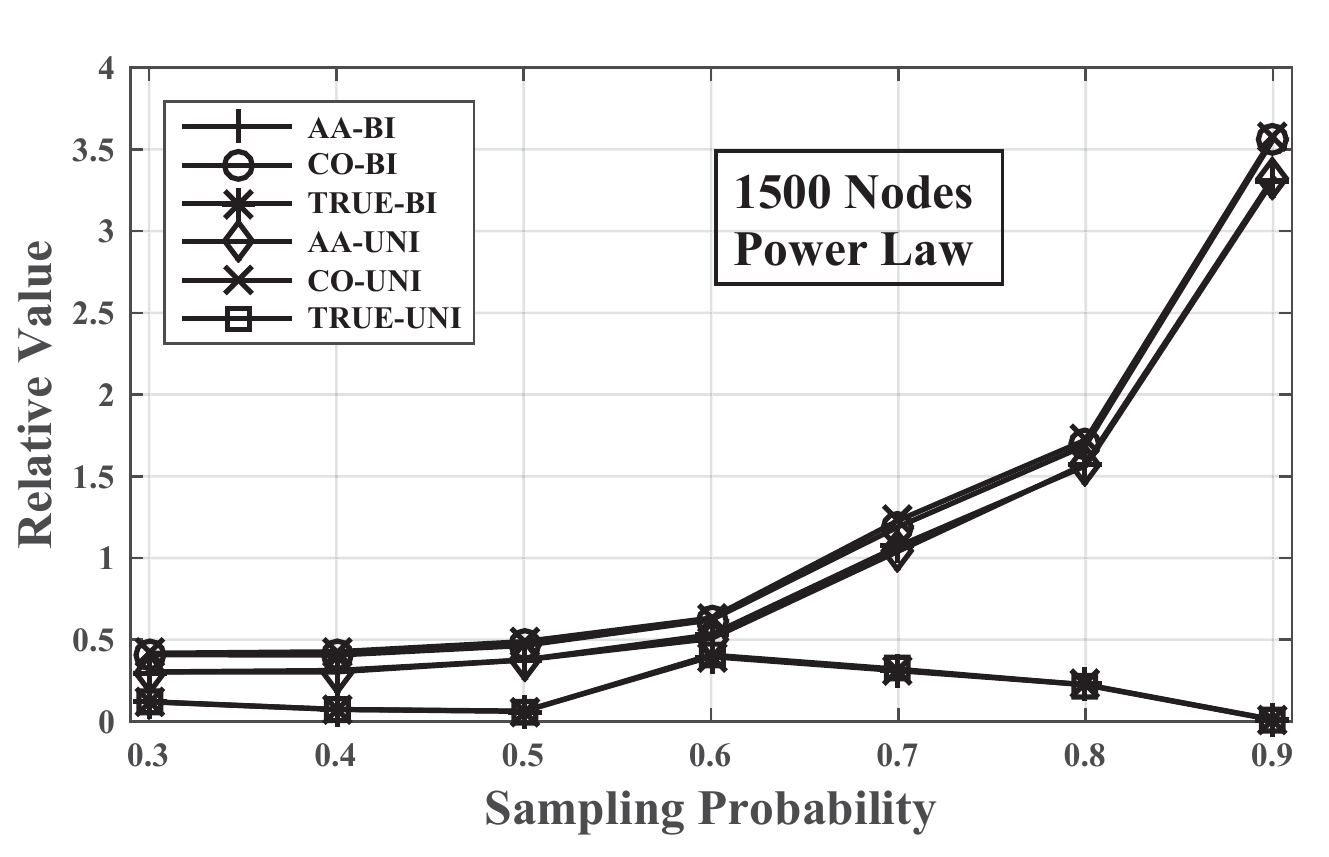}
			\vspace{-4mm}
			\label{fig:1500pow1}
		\end{minipage}%
		
	}
	\subfigure[]{
		\begin{minipage}[]{0.235\linewidth}
			\centering
			\vspace{-3mm}
			\includegraphics[width=1.0\linewidth]{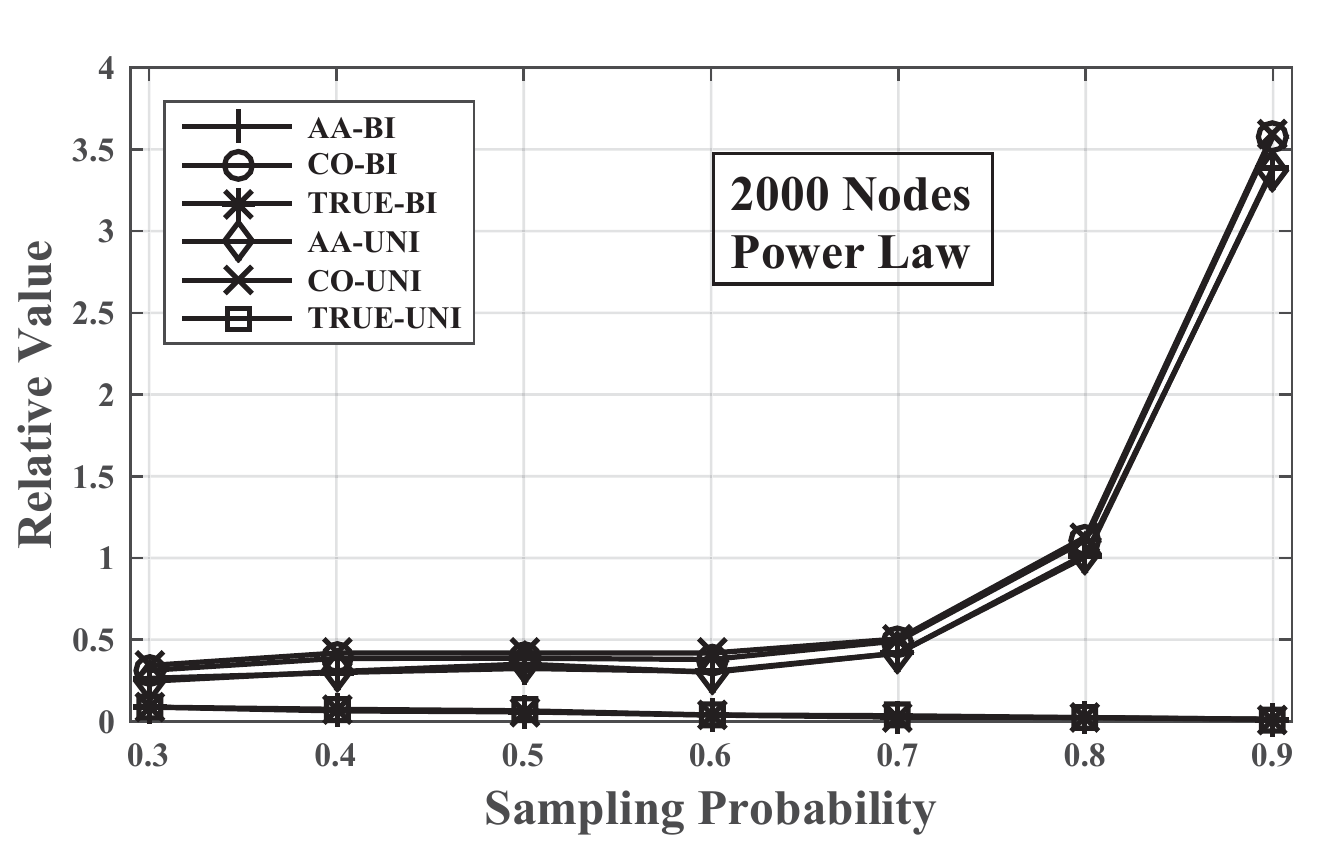}
			\vspace{-4mm}
			\label{fig:2000pow1}
		\end{minipage}%
		
	}

	\subfigure[]{
		\begin{minipage}[]{0.235\linewidth}
			\centering
			\vspace{-3mm}
			\includegraphics[width=1.0\linewidth]{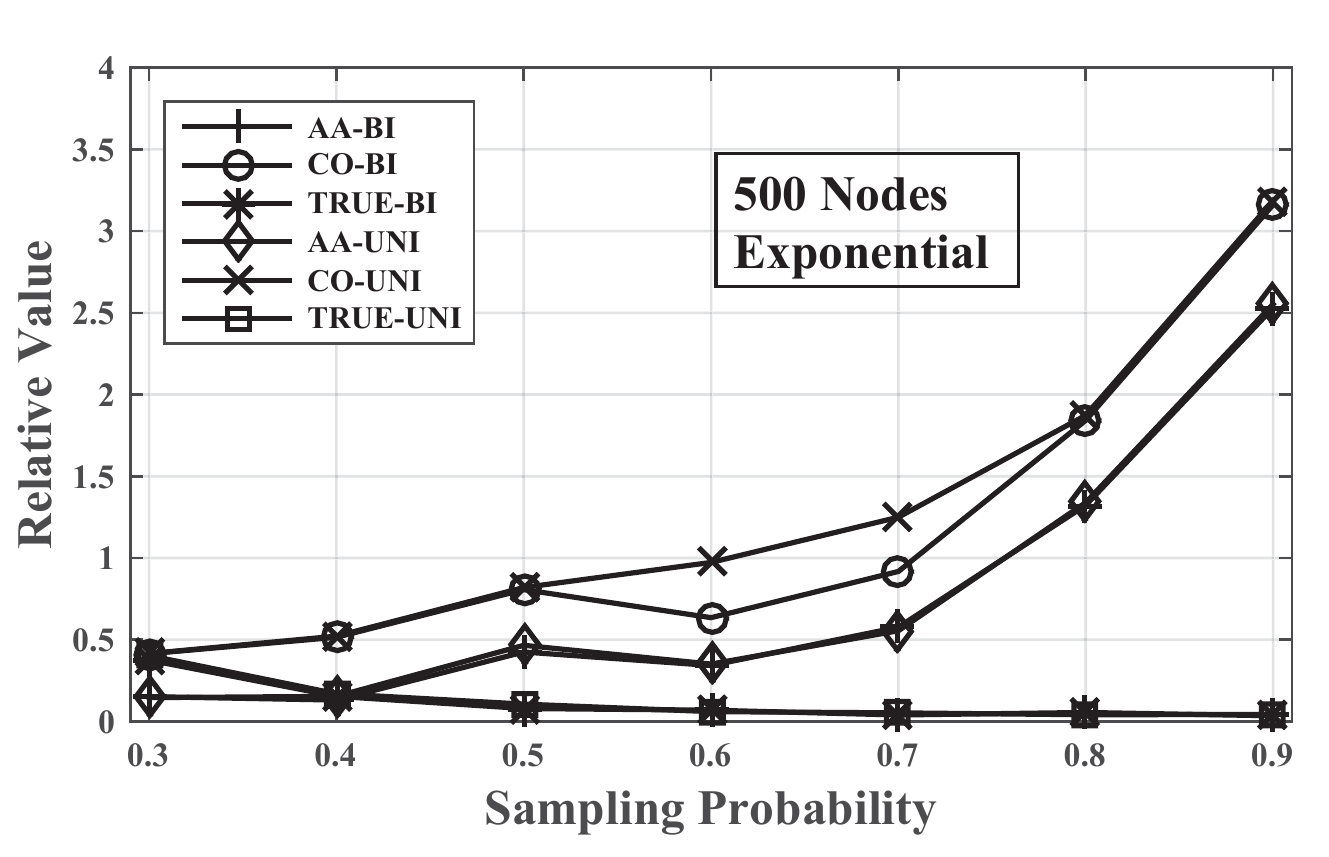}
			\vspace{-4mm}
			\label{fig:500exp1}
		\end{minipage}%
		
	}
	\subfigure[]{
		\begin{minipage}[]{0.235\linewidth}
			\centering
			\vspace{-3mm}
			\includegraphics[width=1.0\linewidth]{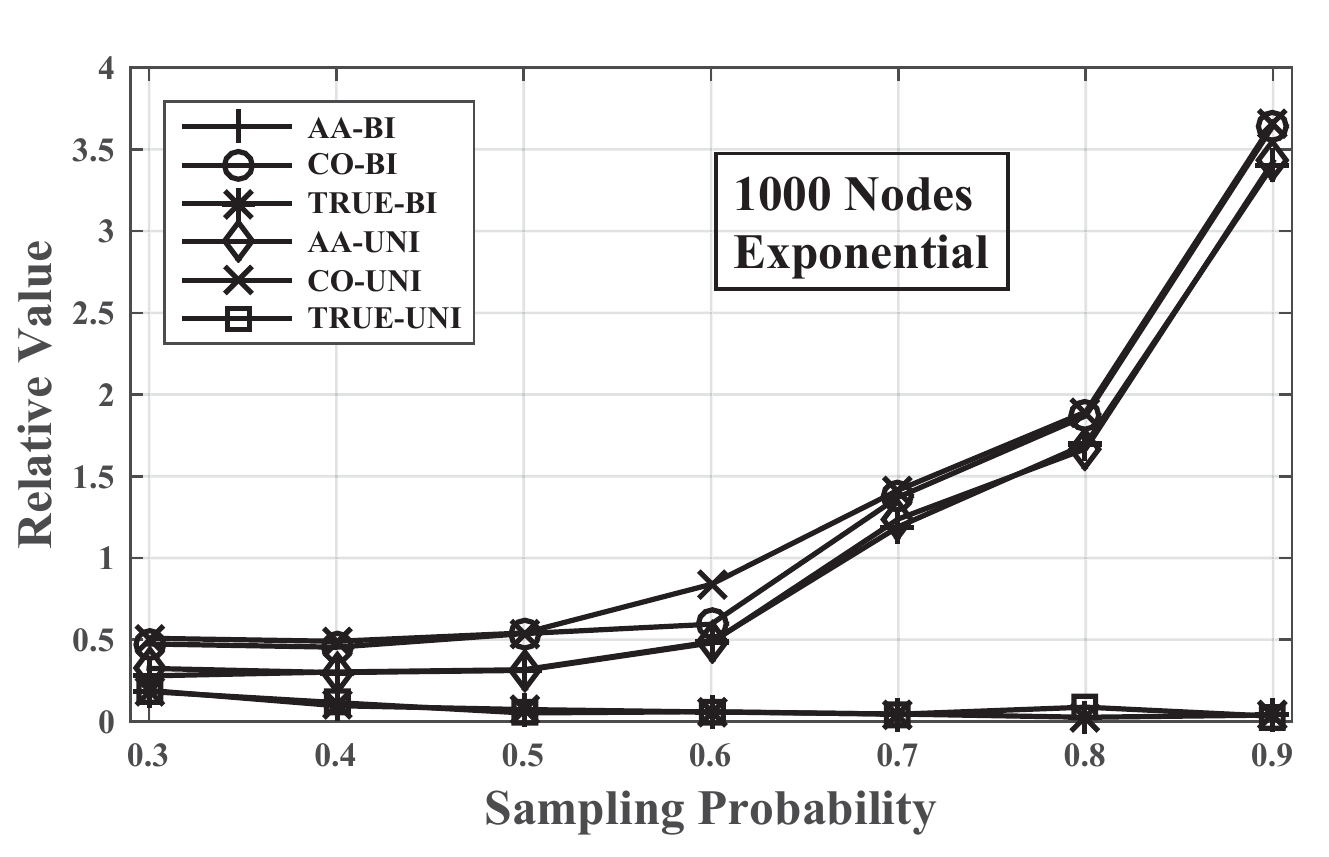}
			\vspace{-4mm}
			\label{fig:1000exp1}
		\end{minipage}%
		
	}
	\subfigure[]{
		\begin{minipage}[]{0.235\linewidth}
			\centering
			\vspace{-3mm}
			\includegraphics[width=1.0\linewidth]{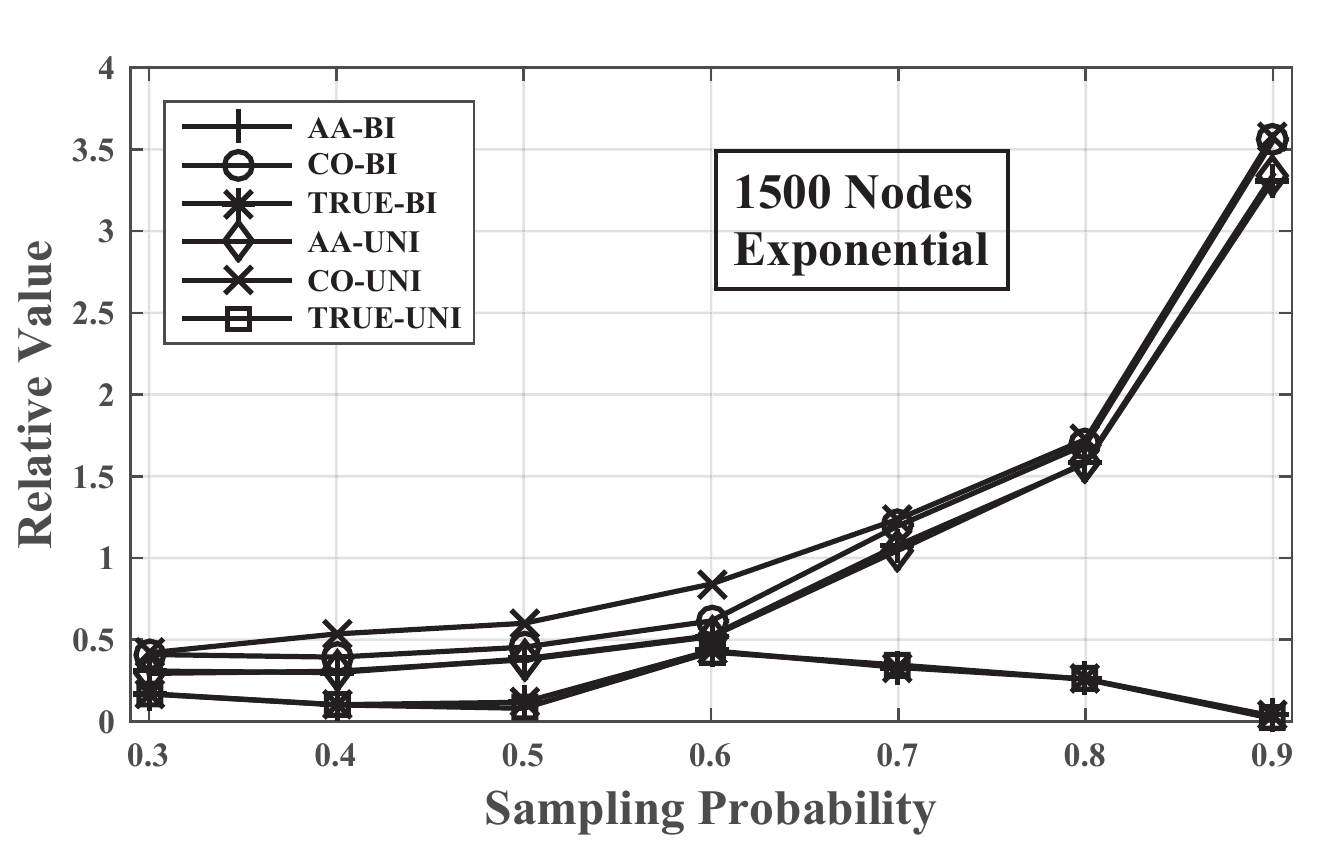}
			\vspace{-4mm}
			\label{fig:1500exp1}
		\end{minipage}%
		
	}
	\subfigure[]{
		\begin{minipage}[]{0.235\linewidth}
			\centering
			\vspace{-3mm}
			\includegraphics[width=1.0\linewidth]{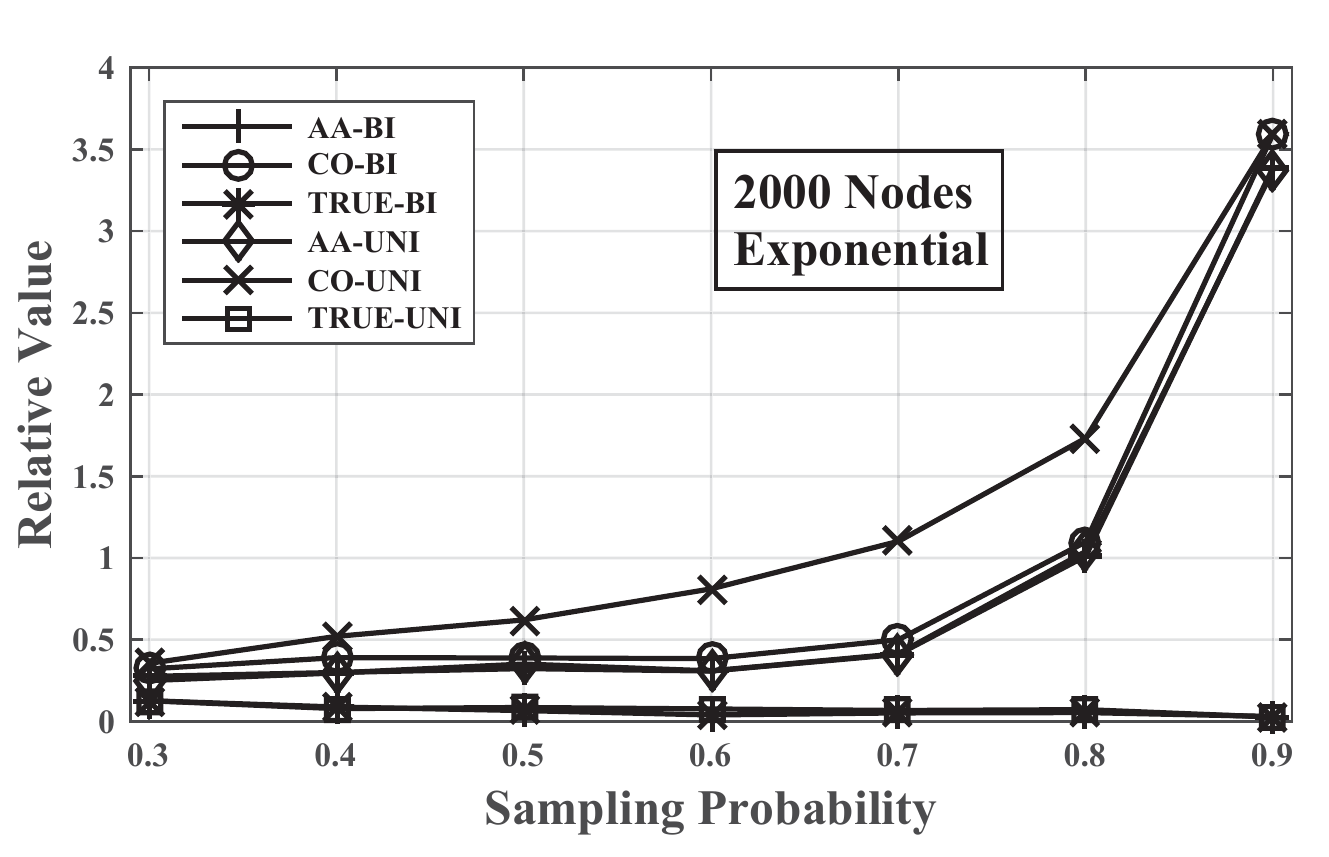}
			\vspace{-4mm}
			\label{fig:2000exp1}
		\end{minipage}%
		
	}
	
	\vspace{-5mm}
	\caption{The relative value of the cost function of the mappings produced by the algorithms on synthetic datasets with different degree distributions.}
	\label{fig:syn_cost}
	\vspace{-4mm}		
\end{figure*}

Now, let $G_{\pi}^*$ be the graph having the smallest number of edges in $\mathcal{G}_{\pi}$, which is equivalent to that $G_{\pi}^*=(V,E_1\cup\pi(E_1))$. An illustration of $G_{\pi}^*$ is provided in Figure \ref{fig:append1}. Denote the set of edges in $G_{\pi^*}$ as $E_{\pi^*}$, with $|E_{\pi^*}^{ij}|$ indicating the number of edges between $i$ and $j$. By the definition we have that in $\mathcal{G}_\pi$, all the graphs have edge sets that are supersets of $G_{\pi}^*$. By summing over all the graphs in $\mathcal{G}_\pi$, we have that
\begin{align*}
\hat{\pi}&=\arg\max_{\pi\in\Pi} \prod_{i<j}^n\left(\frac{p_{c(i)c(j)}(1-s_1)(1-s_2)}{1-p_{c(i)c(j)}}\right)^{|E^{ij}_{\pi^*}|}\\
&\cdot\prod_{i<j}^n\left(1+\left(\frac{p_{c(i)c(j)}(1-s_1)(1-s_2)}{1-p_{c(i)c(j)}}\right)\right)^{1-|E^{ij}_{\pi^*}|},
\end{align*}
where the above equality follows from that 
\begin{align*}
&\sum_{g\in\mathcal{G}_\pi}\prod_{i<j}^n\left( \frac{p_{c(i)c(j)}(1-s_1)(1-s_2)}{1-p_{c(i)c(j)}}\right)^{|E^{ij}|-|E^{ij}_{\pi^*}|}\\
=&\sum_{0\le k_{ij}\le 1-|E_{\pi^*}^{ij}|}\prod_{i<j}^n\left( \frac{p_{c(i)c(j)}(1-s_1)(1-s_2)}{1-p_{c(i)c(j)}}\right)^{k_{ij}}\\
=&\prod_{i<j}^n\left(1+\left(\frac{p_{c(i)c(j)}(1-s_1)(1-s_2)}{1-p_{c(i)c(j)}}\right)\right)^{1-|E^{ij}_{\pi^*}|}.
\end{align*}
Then, from the above equation we can further write the MAP estimator as:
\begin{align*}
&\arg\max_{\pi\in\Pi}\prod_{i<j}^n\left(\frac{p_{c(i)c(j)}(1-s_1)(1-s_2)}{1-p_{c(i)c(j)}(s_1+s_2-s_1s_2)}\right)^{|E_{\pi^*}^{ij}|}\\
=&\arg\min_{\pi\in\Pi}\prod_{i<j}^n\left(\frac{1-p_{c(i)c(j)}(s_1+s_2-s_1s_2)}{p_{c(i)c(j)}(1-s_1)(1-s_2)}\right)^{|E_{\pi^*}^{ij}|}\\
=&\arg\min_{\pi\in\Pi}\left[|E_{\pi^*}^{ij}|\log\left(\frac{1-p_{c(i)c(j)}(s_1+s_2-s_1s_2)}{p_{c(i)c(j)}(1-s_1)(1-s_2)}\right)\right].
\end{align*}

Next, by the definition of $g_{\pi^*}$, we notice that
\[
|E^{ij}_{\pi^*}|=\lceil\frac{(|E_1^{ij}|+|E_2^{\pi(i)\pi(j)}|)}{2}\rceil.
\]
Hence, by setting $w_{ij}=\log\left(\frac{1-p_{c(i)c(j)}(s_1+s_2-s_1s_2)}{p_{c(i)c(j)}(1-s_1)(1-s_2)}\right)$  we have

\begin{align*}
\hat{\pi}=&\arg\min_{\pi\in\Pi}\left( \sum_{i<j}^{n}w_{ij}(\mathbbm{1}\{(i,j)\notin E_1,(\pi(i),\pi(j))\in E_2\}\right)
\end{align*}

Note that the MAP estimator is not symmetric with regard to $G_1$ and $G_2$. This stems from the fact that the adversary in this case only has knowledge on the community assignment function of $G_1$.

\section{Convexity of the Relaxed UNI-MAP-ESTIMATE}\label{app:convex}
In this section, we prove that the relaxed matrix formulation of the optimization problem UNI-MAP-ESTIMATE is convex. The relaxed formulation is presented as follows:
\begin{align}
\text{mininize }  \|\mathbf{W}\circ(\mathbf{\Pi}\mathbf{A}-&\mathbf{B}\mathbf{\Pi})\|_\mathrm{\lfloor F\rfloor}^2\nonumber\\
\text{\textbf{s.t. }}  \forall i,\ \sum_{i}\mathbf{\Pi}_{ij}&=1\label{P3:constraint1app}\\
\forall j,\ \sum_{j}\mathbf{\Pi}_{ij}&=1 \label{P3:constraint2app}
\end{align}
Obviously, the set of feasible solutions defined by Constraints (\ref{P3:constraint1app}) and (\ref{P3:constraint2app}) is a convex set. Then, for the objective function $\|\mathbf{W}\circ(\mathbf{\Pi}\mathbf{A}-\mathbf{B}\mathbf{\Pi})\|_\mathrm{\lfloor F\rfloor}^2$, according to the definition of operator $\|\cdot\|_{\lfloor\mathrm{F}\rfloor}$ it can be interpreted as weighted summation of truncated quadratic functions of each element of $\Pi$ with the weights being positive real numbers. Each truncated function is equivalent to the square of a linear function of an element of $\Pi$ with the part where the elements take positive values truncated. Therefore, each truncated function is convex. It follows that the whole objective function, being a weighted combination of convex functions, is convex. Thus, we conclude that the relaxed UNI-MAP-ESTIMATE is a convex optimization problem, the global optima of which can be found in $O(n^6)$ time using the same algorithm as in the bilateral case.

\section{Differences Between Bilateral and Unilateral De-anonymization}\label{sec:difference}
In this subsection, we summarize, from a higher level, the differences existing in the essence of bilateral and unilateral de-anonymization and the results we obtain for the two problems.
\begin{itemize}
	\item The extra knowledge on the community assignment function in bilateral de-anonymization enables us to restrict the feasible mappings to the ones that observe the community assignment, thus decreases the number of possible candidates and makes the problem intuitively easier than unilateral one.
	\item The community assignment as side information is the main reason behind the difference of the posterior distribution of the optimal mapping, which leads to different MAP estimates, and thus different cost functions in the two cases. Note that the cost function for bilateral de-anonymization cannot be calculated in unilateral case since we have no knowledge on the community assignment of $G_2$.
	\item Although under similar conditions, minimizing the cost function asymptotically almost surely recovers the correct mapping in both cases, the lack of community assignment in unilateral de-anonymization impose asymmetry in its cost function and render the cost function harder to (approximately) minimize, as justified by our stronger complexity-theoretic result.
	\item In terms of the proposed algorithms, the additive approximation algorithms for both bilateral and unilateral de-anonymization share the same guarantee. However, the convex optimization-based algorithm has been shown to yield conditionally yield optimal solutions only for bilateral de-anonymization.
	\item The empirical results demonstrate that in all the contexts, our algorithms successfully de-anonymize larger portion of users when provided with bilateral community information. 
\end{itemize}
\vspace{-2mm}
\section{Graphical Results on Relative Value of Cost Function}\label{sec:sim_figure}
In this section we present graphical results on the relative value of the cost function of the mappings produced by the algorithms. Recall that for a mapping $\pi$ and the mapping $\pi_{GA}$ produced by \textbf{GA} algorithm, the relative value of the cost function of $\pi$ equals to $(\Delta_{\pi}-\Delta_{\pi_{GA}})/\Delta_{\pi_{GA}}$.

\begin{figure}[htbp]
	\centering
	\subfigure[]{
		\begin{minipage}[]{0.48\linewidth}
			\centering
			\vspace{-4mm}
			\includegraphics[width=1.0\linewidth]{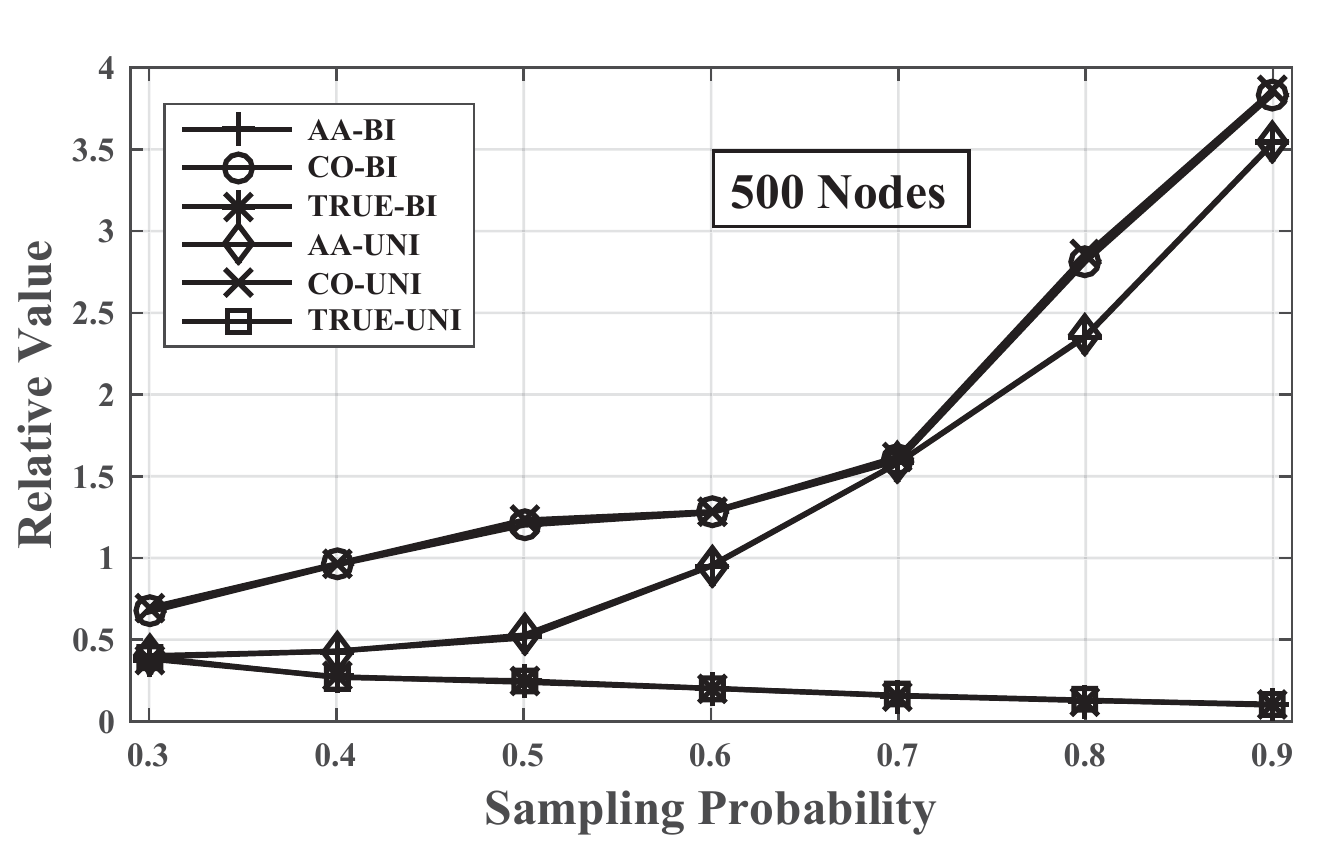}
			\vspace{-4mm}
			\label{fig:500sam1}
		\end{minipage}%
		
	}
	\subfigure[]{
		\begin{minipage}[]{0.48\linewidth}
			\centering
			\vspace{-4mm}
			\includegraphics[width=1.0\linewidth]{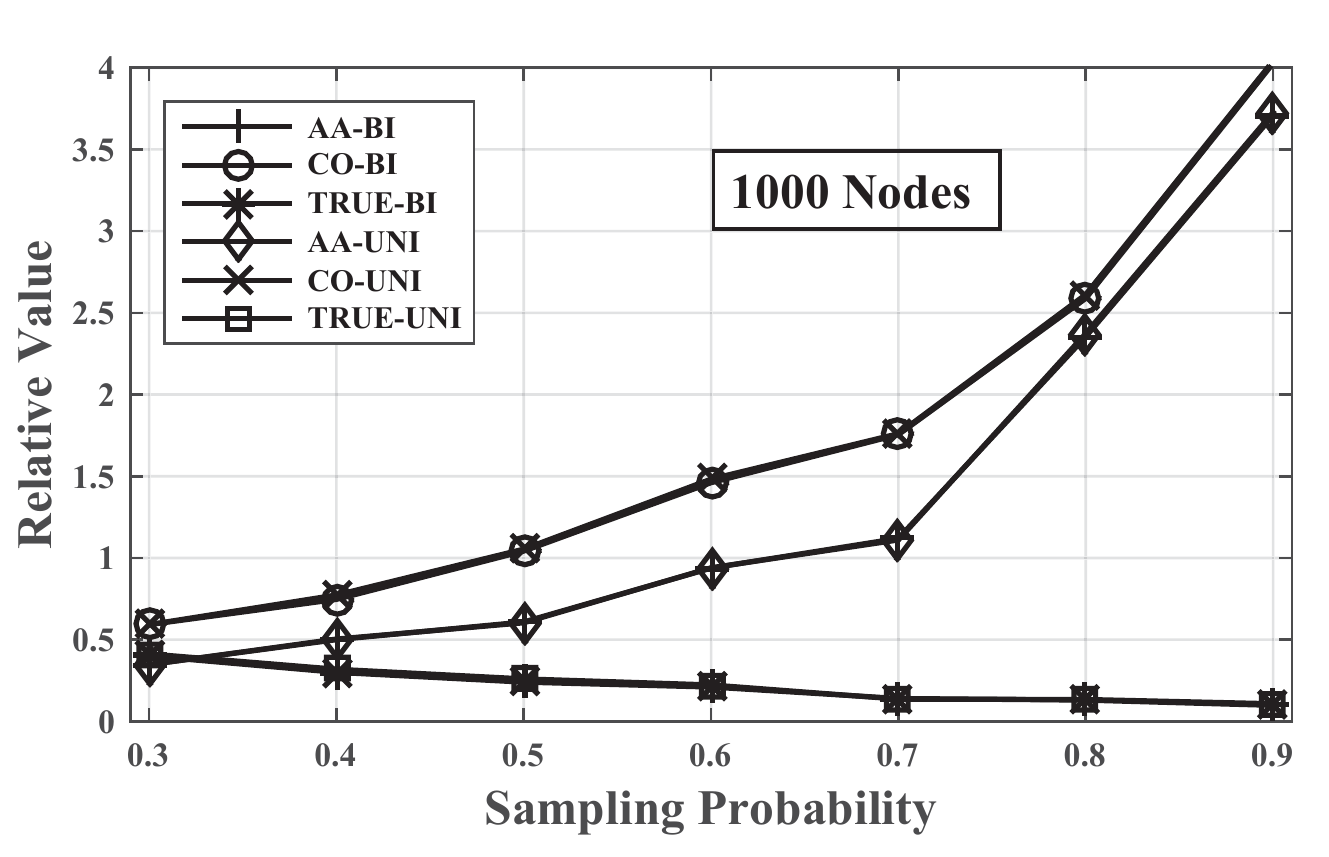}
			\vspace{-4mm}
			\label{fig:1000sam1}
		\end{minipage}%
		
	}
	\subfigure[]{
		\begin{minipage}[]{0.48\linewidth}
			\centering
			\vspace{-4mm}
			\includegraphics[width=0.98\linewidth]{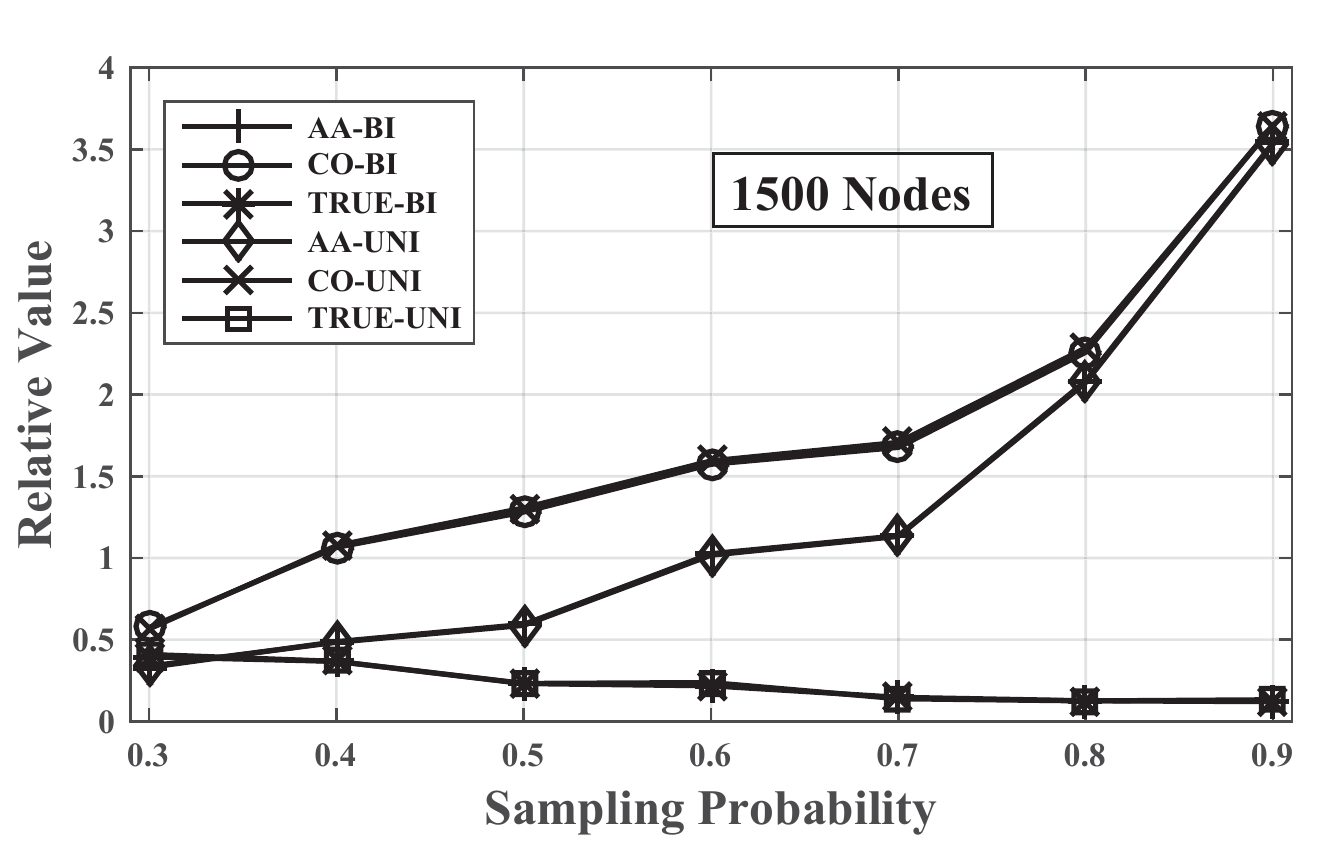}
			\vspace{-4mm}
			\label{fig:1500sam1}
		\end{minipage}%
		
	}
	\subfigure[]{
		\begin{minipage}[]{0.48\linewidth}
			\centering
			\vspace{-4mm}
			\includegraphics[width=1.0\linewidth]{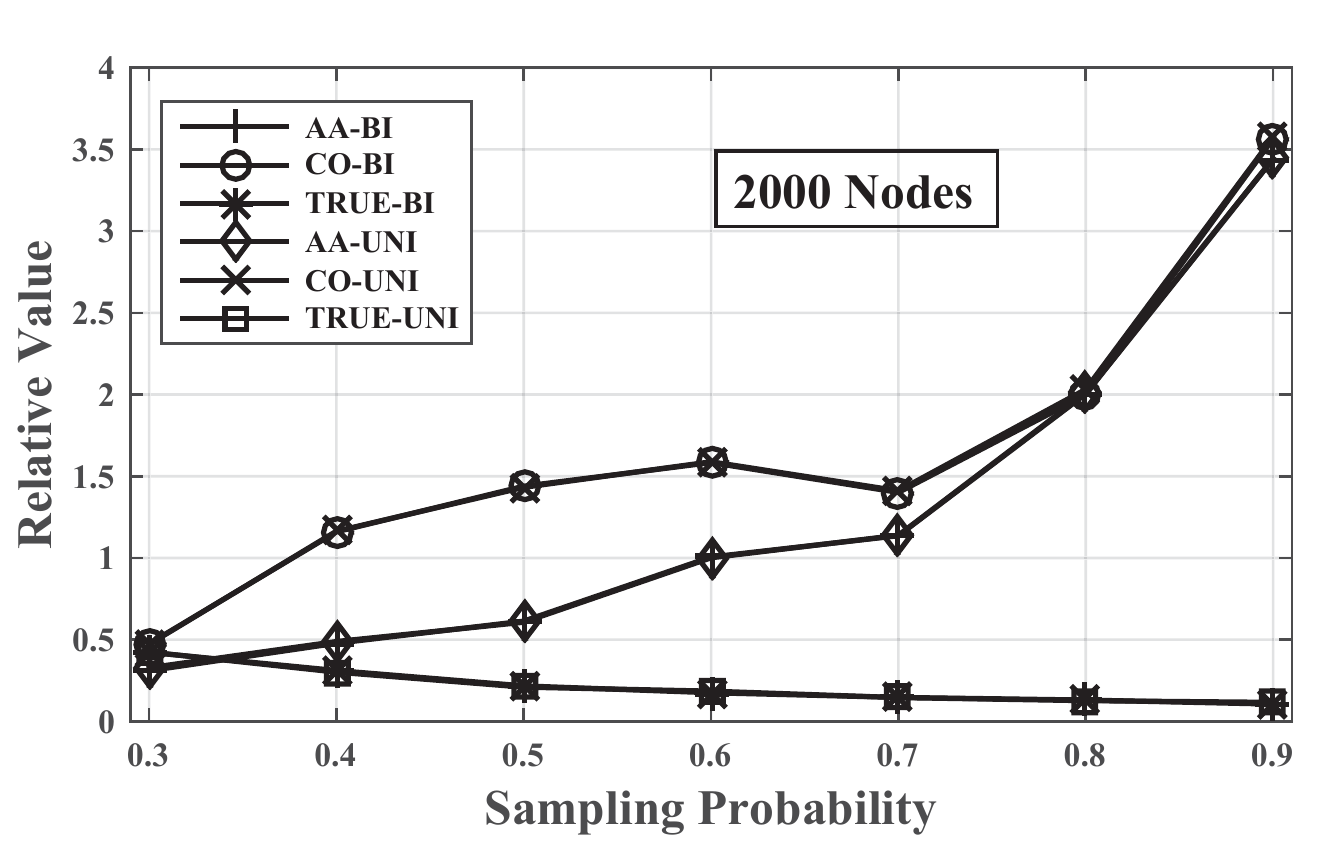}
			\vspace{-4mm}
			\label{fig:2000sam1}
		\end{minipage}%
		
	}

	\vspace{-5mm}
	\caption{\bf\small The relative value of the cost function of the mappings produced by the algorithms on Sampled Social Networks}
	\label{fig:sam_cost}		
\end{figure}

\begin{figure}[htbp]
	\centering
	\subfigure[]{
		\begin{minipage}[]{0.48\linewidth}
			\centering
			\vspace{-3mm}
			\includegraphics[width=1.0\linewidth]{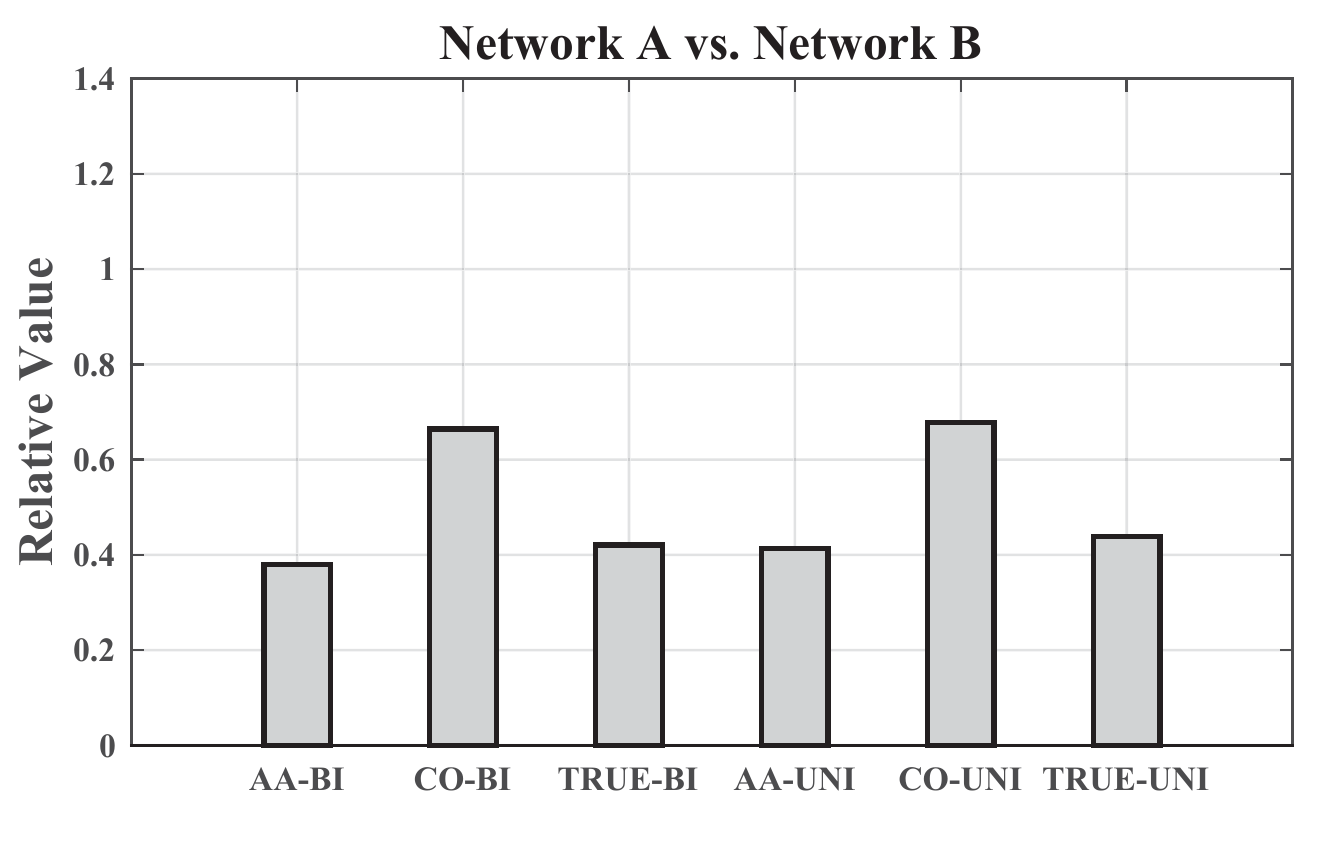}
			\vspace{-3mm}
			\label{fig:COAB1}
		\end{minipage}%
		
	}
	\subfigure[]{
		\begin{minipage}[]{0.48\linewidth}
			\centering
			\vspace{-3mm}
			\includegraphics[width=1.0\linewidth]{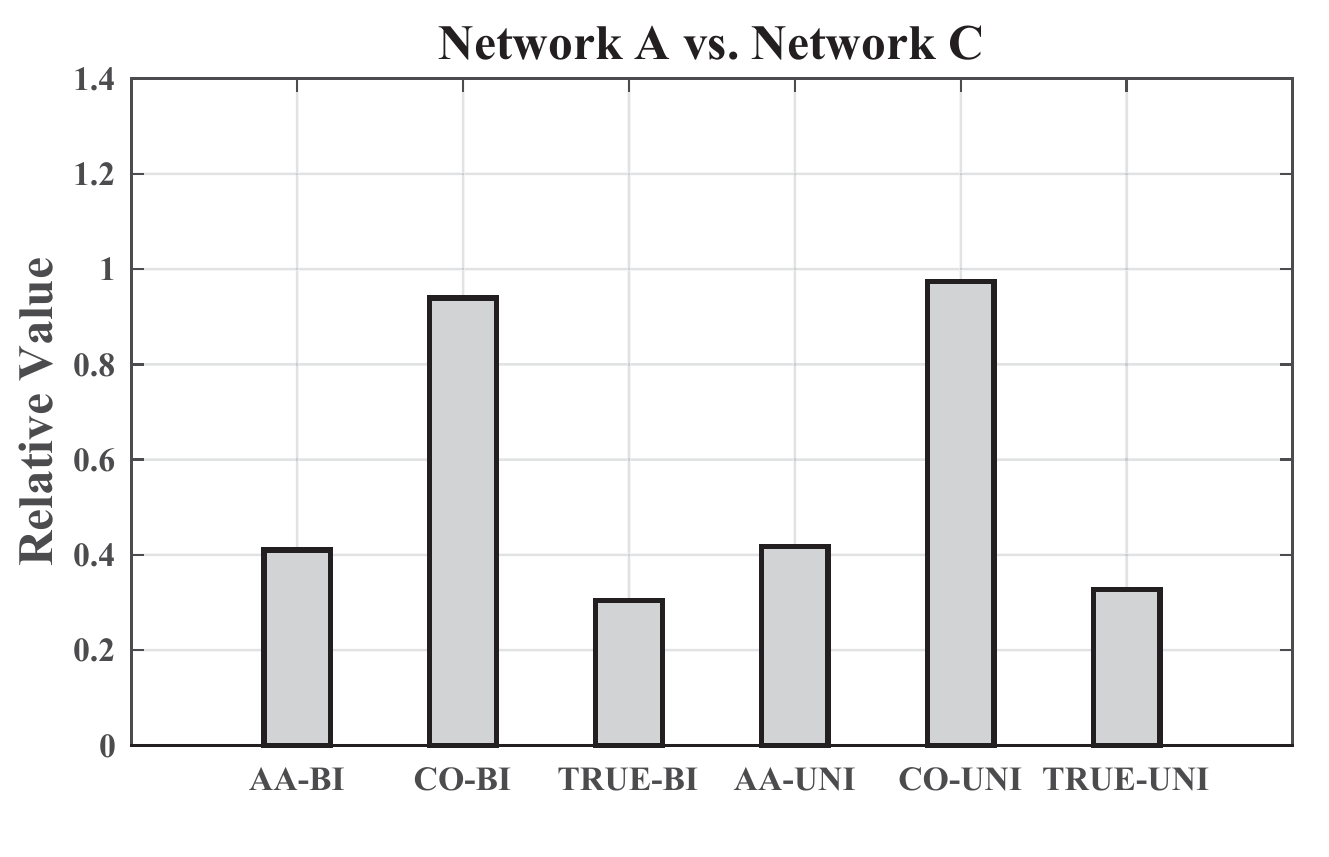}
			\vspace{-3mm}
			\label{fig:COAC1}
		\end{minipage}%
		
	}

	\subfigure[]{
		\begin{minipage}[]{0.48\linewidth}
			\centering
			\vspace{-2.7mm}
			\includegraphics[width=1\linewidth]{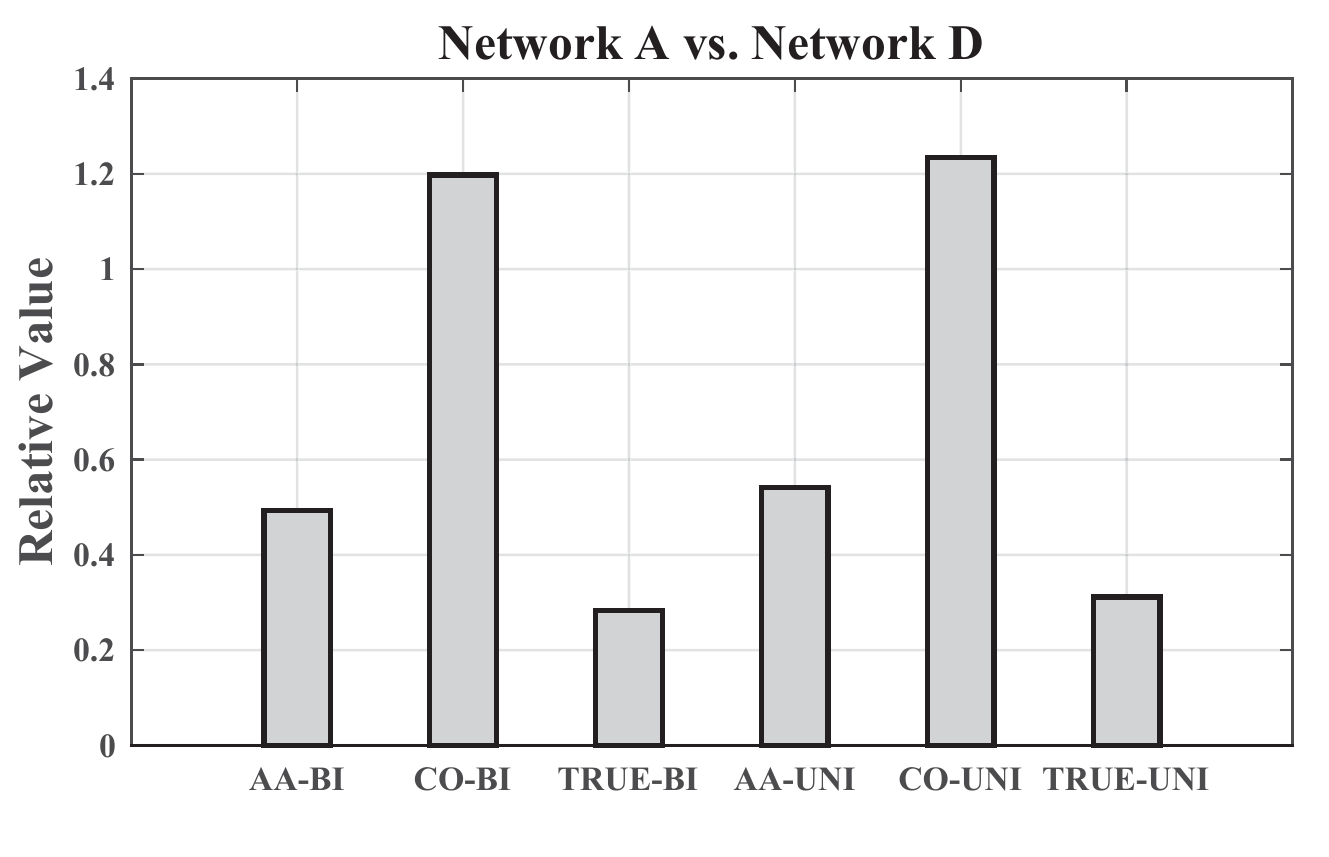}
			\vspace{-3mm}
			\label{fig:COAD1}
		\end{minipage}%
		
	}
	\subfigure[]{
		\begin{minipage}[]{0.48\linewidth}
			\centering
			\vspace{-3mm}
			\includegraphics[width=1.0\linewidth]{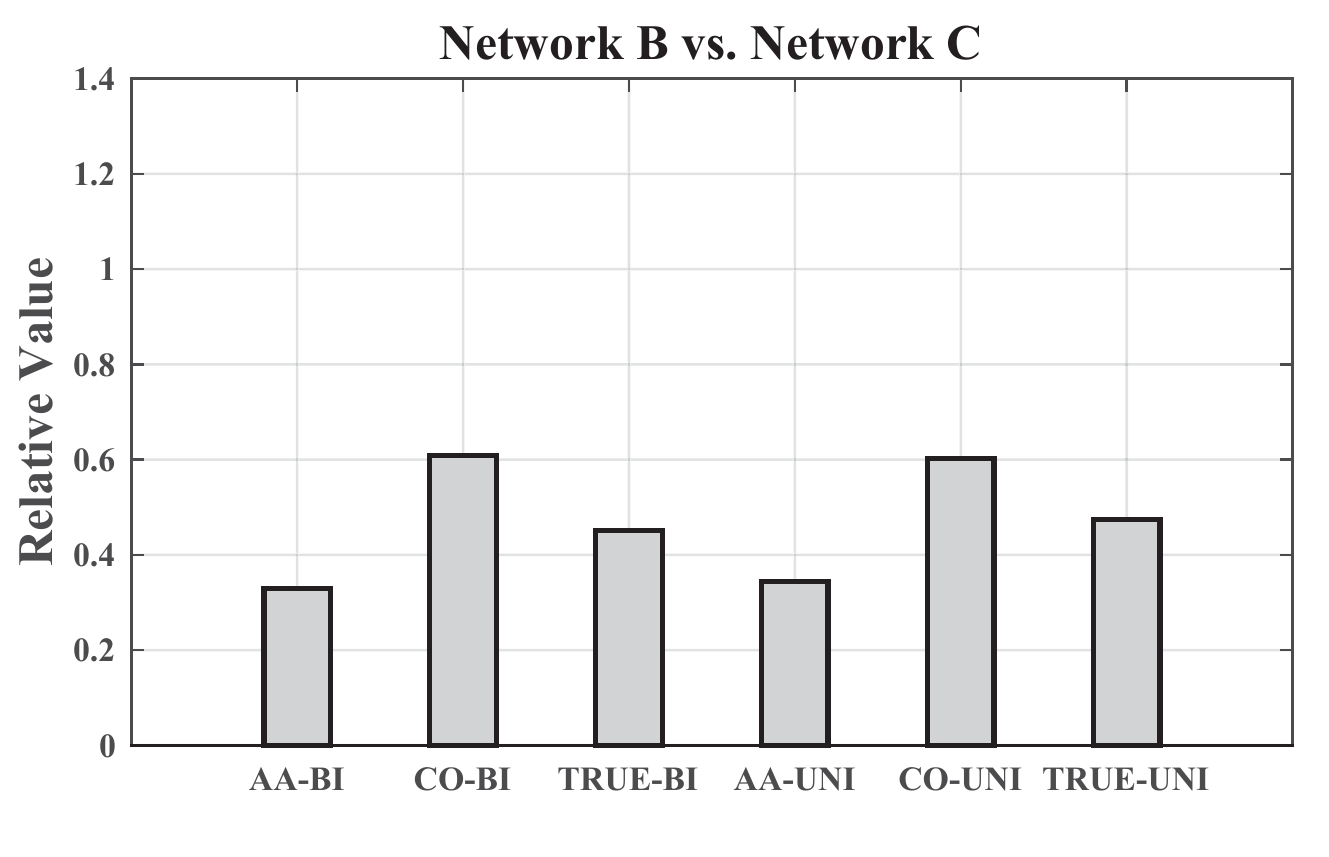}
			\vspace{-3mm}
			\label{fig:COBC1}
		\end{minipage}%
		
	}

	\subfigure[]{
		\begin{minipage}[]{0.48\linewidth}
			\centering
			\vspace{-2.7mm}
			\includegraphics[width=0.98\linewidth]{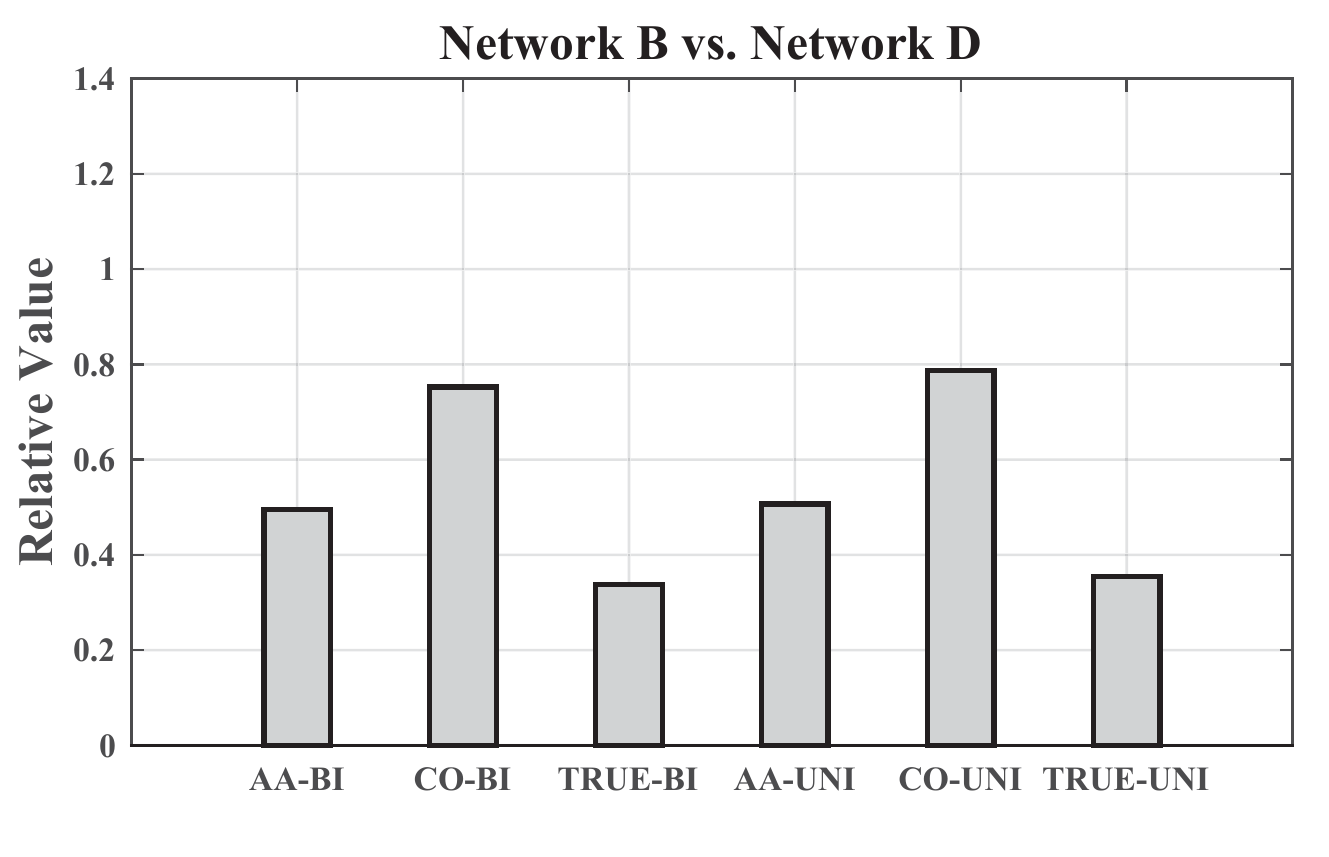}
			\vspace{-3mm}
			\label{fig:COBD1}
		\end{minipage}%
		
	}
	\subfigure[]{
		\begin{minipage}[]{0.48\linewidth}
			\centering
			\vspace{-3mm}
			\includegraphics[width=1.0\linewidth]{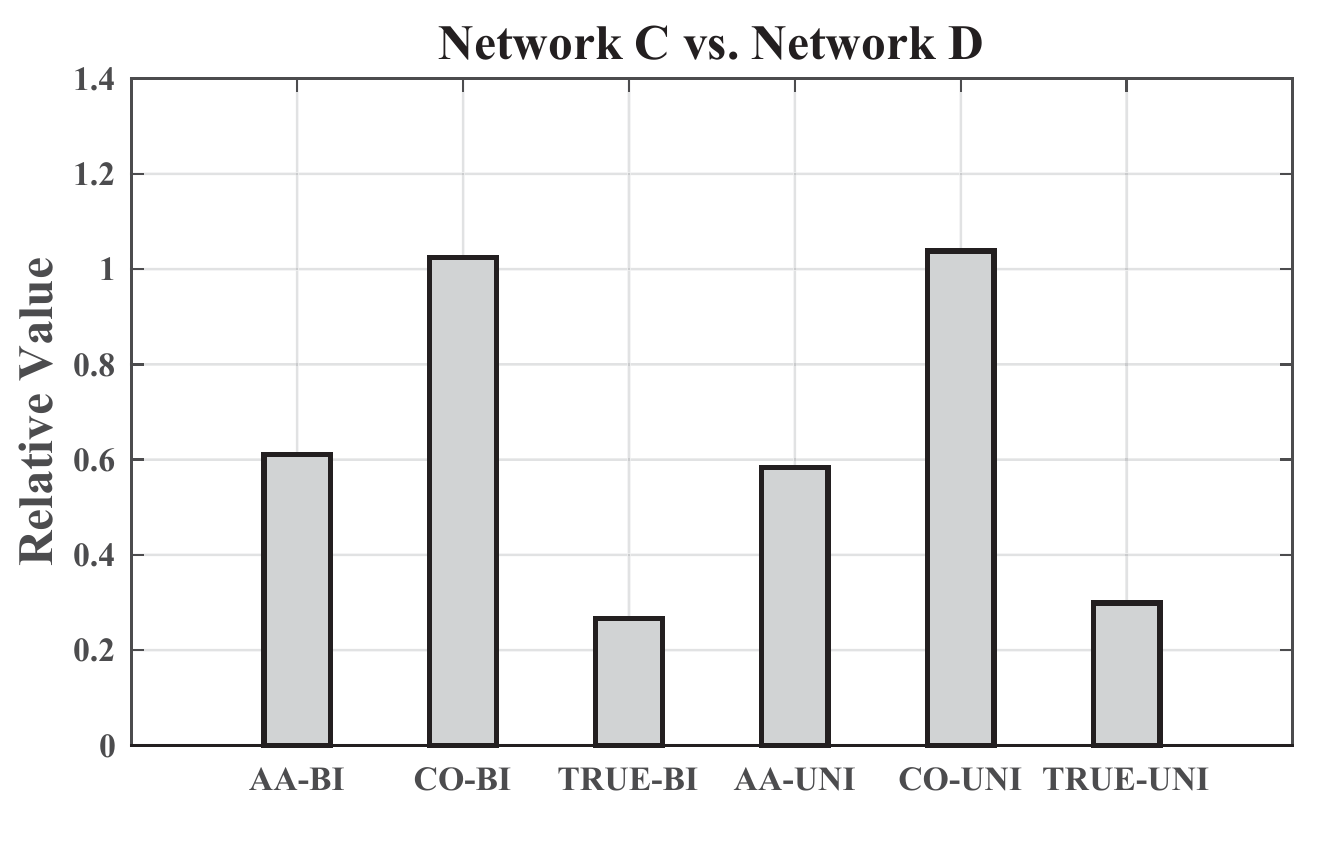}
			\vspace{-3mm}
			\label{fig:COCD1}
		\end{minipage}%
		
	}
	\vspace{-5mm}
	\caption{\small\bf The relative value of the cost function of the mappings produced by the algorithms on Cross-domain Co-authorship Networks}
	\label{fig:coauthor_cost}		
\end{figure}


%
%
%

\end{document}